\pgfplotsset{compat=1.16}
\colorlet{RefColor}{green!50!black}
\colorlet{LinkColor}{red!50!black}
\crefname{section}{section}{sections}
\crefname{subsection}{subsection}{subsections}
\Crefname{section}{Section}{Sections}
\Crefname{subsection}{Subsection}{Subsections}
\Crefname{figure}{Figure}{Figures}
\newtheorem{remark}{Remark}
\newtheorem{definition}{Definition}
\newtheorem{lemma}{Lemma}
\newtheorem{proposition}{Proposition}
\newtheorem{theorem}{Theorem}
\title[Parametric SOBMOR]{Structured Optimization-Based Model Order Reduction for Parametric Systems}
\author[P.\ Schwerdtner and M.\ Schaller]{Paul Schwerdtner$^{1}$ and Manuel Schaller$^{2}$}
\thanks{}
\thanks{$^{1}$FG Numerische Mathematik, Institute of Mathematics, Technische Universit\"at Berlin, Germany 
	{\tt\small schwerdt@math.tu-berlin.de}}%
\thanks{$^{2}$FG Optimization-based Control, Institute of Mathematics, Technische Universit\"at Ilmemau, Germany {\tt\small  manuel.schaller@tu-ilmenau.de}.\\
The research presented in this paper has been supported by the German Research Foundation (DFG) within the projects VO2243/2--1 ``Interpolation-Based Numerical Methods in Robust Control''
}
\newcommand{\pushright}[1]{\ifmeasuring@#1\else\omit\hfill$\displaystyle#1$\fi\ignorespaces}
\newcommand{\pushleft}[1]{\ifmeasuring@#1\else\omit$\displaystyle#1$\hfill\fi\ignorespaces}
\newtheorem*{myproof}{Proof}
\newtheorem{romdef}{ROM-Setup}
\begin{document}

\maketitle

% REQUIRED
\begin{abstract}
  We develop an optimization-based algorithm for parametric model order reduction~(PMOR) of linear time-invariant dynamical systems. Our method aims at minimizing the $\hinflinf$ approximation error in the frequency and parameter domain by an optimization of the reduced order model (ROM) matrices. State-of-the-art PMOR methods often compute several nonparametric ROMs for different parameter samples, which are then combined to a single parametric ROM\@. However, these parametric ROMs can have a low accuracy between the utilized sample points. In contrast, our optimization-based PMOR method minimizes the approximation error across the entire parameter domain. Moreover, due to our flexible approach of optimizing the system matrices directly, we can enforce favorable features such as a port-Hamiltonian structure in our ROMs across the entire parameter domain.

  Our method is an extension of the recently developed SOBMOR-algorithm to parametric systems. We extend both the ROM parameterization and the adaptive sampling procedure to the parametric case. Several numerical examples demonstrate the effectiveness and high accuracy of our method in a comparison with other PMOR methods.
\end{abstract}

% REQUIRED
% \begin{keywords}
%   parametric dynamical systems, parametric model order reduction, optimization-based methods, structure-preservation, port-Hamiltonian systems
% \end{keywords}

% REQUIRED
% \begin{AMS}
%   93A15, 65K05, 65D15
% \end{AMS}

\section{Introduction}
In this article, we extend the previously developed \emph{model order reduction} (MOR) method SOBMOR (\textbf{S}tructured \textbf{O}ptimization-\textbf{B}ased \textbf{M}odel \textbf{O}rder \textbf{R}eduction, see~\cite{SchwerdtnerV2020}) to provide a novel method for \emph{parametric model order reduction} (PMOR). % of parametric linear time-invariant dynamical systems.

Typically, MOR is applied when a mathematical model of a complex dynamical system has a state-space dimension that makes simulation or model-based control of this \emph{full order model} (FOM) computationally prohibitively expensive. Then, MOR provides a \emph{reduced order model} (ROM) that approximates the dynamic behavior of the FOM\@. When a large-scale model depends on a set of parameters that is not fixed at the time of reduction, the parameter dependency must be retained during MOR to compute a parametric ROM that approximates the given parametric FOM for all parameter configurations of interest. This is the goal of PMOR\@.% Our optimization-based method often leads to parametric ROMs with a higher accuracy than many state-of-the-art methods accross wide ranges of the considered parameter domain.
% \footnote{The sentence: ``This is the goal of PMOR'' is also stated in~\cite{BaurBBG2011, BennerGW2015}. We do not mean to plagiarize them but the sentence fits really well at this point in our text.}

% Absatz zu: Warum ist PMOR relevant?

The review article~\cite{BennerGW2015} lists numerous applications of PMOR for accelerating design, control, and uncertainty quantification of dynamical systems. At the design stage, PMOR can be used to assess the behavior of a system for different parameter choices or at different operation points. In the control setting, PMOR enables the evaluation of the controller performance of a single controller for the entire considered parameter range or even the definition of an adaptive control strategy, in which the control laws are updated as the parameter changes. In uncertainty quantification, a large number of simulations is often required for different parameter samples. The runtimes of these repeated simulations can be decreased significantly by using a parametric ROM instead of the FOM\@. We refer to~\cite{BennerGW2015} for a collection of scientific and industrial applications of PMOR and to~\cite{BaurBHHMO2011} for a recent comprehensive comparison of different methods.

In this work, we consider the reduction of linear time-invariant parametric systems of the form
\begin{align}
  \label{eq:fullmodel}
  \Sigma(p):
  \begin{cases}
    \fE(p) \dot x(t, p) = \fA(p) x(t, p) + \fB(p)u(t), \\
    \phantom{\fE(p)} y(t, p) = \fC(p) x(t, p) + \fD(p)u(t),
  \end{cases}
\end{align}
where $p\in \pdom$ is a model parameter vector confined to a compact parameter domain $\pdom\subset \R^{\dimp}$ and $\fE$, $\fA : \pdom \to \R^{\dimx\times \dimx}$, $\fB: \pdom\to \R^{\dimx\times \dimu}$, $\fC: \pdom \to \R^{\dimy\times \dimx}$, and $\fD:\pdom \to \R^{\dimy\times\dimu}$ are matrix valued functions.
% For ease of notation, we omit the dependency of the state $x(t,p) \in \R^{\dimx}$ and the output $y(t, p) \in \R^{\dimy}$ on $p$ in the following.
We assume throughout this work that $\fE(p)$ is nonsingular for all $p \in \pdom$ (we will briefly discuss extensions to singular $\fE(p)$ in~\Cref{sec:conclusion}) and that the system is \emph{asymptotically stable}, i.\,e., for all $p\in\pdom$, the pencil $(\fE(p), \fA(p))$ only has eigenvalues with negative real part. % \textcolor{red}{, i.e., $\det(s\fE(p)-\fA(p))\neq 0$ for $s\in \mathbb{C}$ with $\Real{s}\geq 0$}.
For input-state-output systems~\eqref{eq:fullmodel}, PMOR usually aims at finding ROMs with smaller state dimension $r\ll n_x$ of the form
\begin{align}
\label{eq:redmodel}
\Sigma_r(p):
\begin{cases}
\rE(p) \rdx(t, p) = \rA(p) \rx(t, p) + \rB(p)u(t), \\
\phantom{\rE(p)}\ry(t, p) = \rC(p) \rx(t, p) + \rD(p)u(t),
\end{cases}
\end{align}
with $\rE$, $\rA : \pdom \to \R^{\dimr \times \dimr}$, $\rB: \pdom\rightarrow \R^{\dimr\times \dimu}$, $\rC: \pdom \to \R^{\dimy\times \dimr}$ and~${\rD:\pdom \to \R^{\dimy\times\dimu}}$ such that $y_r \approx y$ for all admissible $u$ and all parameter configurations $p\in \pdom$.

A large body of research has been conducted in the last 20 years in the field of PMOR, ranging from early approaches such as~\cite{Daniel2004, Weile1999}, which are mostly based on multivariate moment matching, to unified frameworks for PMOR that allow for the combination of different MOR strategies to compute parametric ROMs such as~\cite{BaurBBG2011, GeussPL2013}. Another line of research is considered with (greedy) reduced basis methods for computing reduced models for parametric systems \cite{CohenDDN2020,Lassila2012}. In particular in the 2010s, PMOR has evolved into an advanced research field with the development of several sophisticated methods that were used accross different scientific disciplines and industrial applications. % \textcolor{red}{such as mechanics \cite{Kerschen2005}, mathematical biology \cite{Snowden2017} or fluid dynamics \cite{Rowley2017}}\@.
% Paul Kommentar: Die Quellen passen nicht so ganz gut. Ich verweise jetzt einfach auf die Surveys, damit machen wir nichts falsch.
We refer the reader to~\cite{AntoulasBG2020,BaurBBG2011, BennerGW2015, BennerOCW2017,HesthavenRS2016} for extensive literature reviews and comparisons of different methods.
Most general-purpose state-of-the-art methods perform PMOR by first computing several nonparametric ROMs of the given FOM evaluated at a set of parameter samples and then combining these nonparametric ROMs into one parametric ROM, either by merging the projection subspaces (leading to global-basis methods such as~\cite{BaurBBG2011}) or by interpolating the ROM system matrices~\cite{AmsallemCCF2009, AmsF2011, DegrooteVW2010, GeussPL2013, PanzerML2010} or transfer functions~\cite{BaurB2009}. We will explain these methods and also approximation error measures in more detail in~\Cref{sec:prelim}. For now we only highlight one potential systematic problem with this strategy.

In \cref{fig:problemstatement}, we show the accuracy of ROMs obtained when using the interpolation-based method proposed in~\cite{GeussPL2013} for a Timoshenko beam model depending on a scalar parameter describing the beam length (for details on the system and experimental setup, see~\Cref{sec:numerics}). We first use six interpolation points distributed uniformly in the parameter interval $[0.4, 2.4]$.  Following~\cite{GeussPL2013}, we compute local ROMs at these interpolation points using balanced truncation (BT) consistent to our objective of achieving a small $\hinf$ error. The matrices of these local ROMs are then globalized over the parameter domain by means of piecewise linear interpolation.

 While the error is low at the sample points, it increases drastically (sometimes by more than two orders of magnitude) between the sample points. Even if we almost double the number of sample points (and thus increase the complexity and storage requirement for the ROM), the error still strongly increases between the sample points. This is because the inter-sample-behavior of the FOM is not considered when merging the individual ROMs to one parametric ROM\@ and the sampling does not cover the parameter domain well enough. Conversely, when the parametric ROM is obtained with the parametric extension to SOBMOR presented in this article, the error stays small across the entire parameter range because the FOM behavior can be considered at a large number of parameter samples without increasing the ROM complexity. In fact, the ROM complexity of the SOBMOR-ROM is similar to the complexity of the interpolation-based ROM that uses six interpolation points.

The model in our example depicted in \cref{fig:problemstatement} is particularly challenging for PMOR because variations of the parameter drastically change the model behavior.
We do not want to distort the reader's general impression of current PMOR methods and mention that state-of-the-art PMOR often leads to sufficiently accurate results even when highly complex FOMs are considered as demonstrated in~\cite{AmsF2011, Bui2008, DegrooteVW2010}. A greedy sampling strategy, as proposed in~\cite{Prud2002} may also lead to a more appropriate sample point distribution, which may further decrease the error in state-of-the-art methods. However, \cref{fig:problemstatement} still emphasizes a structural problem that can occur with the popular sample-and-merge approach to PMOR\@.

\begin{figure}[tb]
	\centering
	\input{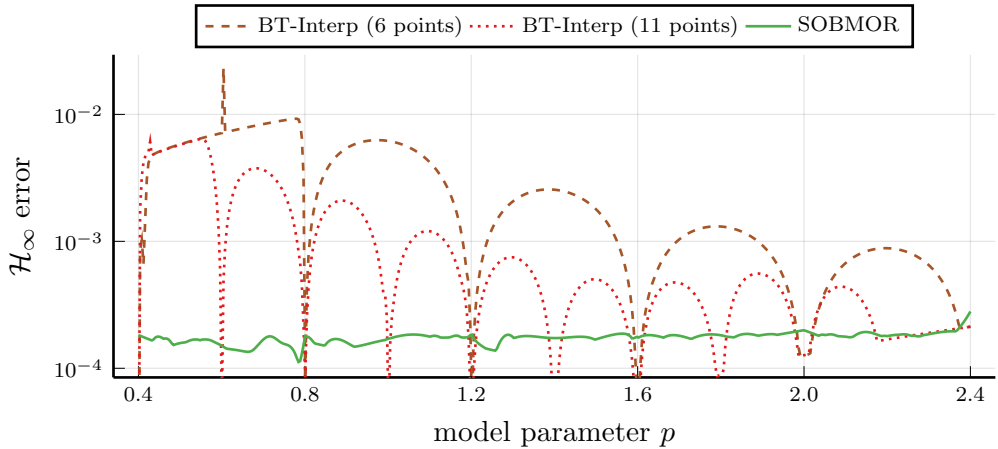}
	\caption{Comparison of a matrix interpolation-based method and SOBMOR in view of the approximation error on the parameter domain.}\label{fig:problemstatement}
\end{figure}

%\benni{use paramaeter-dependent projection, [23] and [Gosea,Gugercin,Unger 21]. This is why you should compare to [23]}
%\begin{remark}
  % New Version of Remark
  % The model in our example depicted in \cref{fig:problemstatement}, which is also used in~\cite{GeussPL2013}, is particularly challenging for PMOR because variations of the parameter drastically change the model behavior. \cref{fig:problemstatement} allows us to emphasize one potential problem of the popular sample-and-merge approach to PMOR in \cref{fig:problemstatement}, that we address in this article.
  % However, it is important to note that state-of-the-art PMOR often leads to sufficiently accurate results even when highly complex FOMs are considered as demonstrated in~\cite{AmsF2011, Bui2008, DegrooteVW2010}. A greedy sampling strategy, as proposed in~\cite{Prud2002} may also lead to a more appropriate sample point distribution, which may further decrease the error in state-of-the-art methods.
% and this can be mitigated using an optimization-based approach.
%\end{remark}

This article describes an optimization-based approach to PMOR, which is based on first making an ansatz for a parametric ROM, and then optimizing the free model parameters to reduce the $\mathcal{H}_\infty \otimes \mathcal{L}_\infty $ approximation error. For this, we extend the parameterization proposed in~\cite{SchwerdtnerV2020} to parametric ROMs and extend the adaptive sampling procedure developed in~\cite{Schwerdtner2021} to the multi-dimensional setting.

The main features of our approach are the high accuracy accross the entire considered parameter domain and the flexibility, as we do not pose any condition on the parametric dependency of the FOM but are still able to preserve structural features of interest such as stability (which is not always guaranteed e.\,g.\ in projection-based PMOR\@; see~\cite{BaurBBG2011}) or a port-Hamiltonian (pH) structure. Structure-preserving MOR for pH systems is currently intensely studied (see~\cite{BeaGM21, BreU21, MehU22, Moser2020, Sato2018} for a few of the most recent articles) due to the benefits of the pH structure such as its modularity and inherent passivity. We explain in \Cref{sec:PHMSD} how our proposed approach can ensure structure-preservation of the pH structure accross the entire parameter range.

% Structure-preserving MOR is currently an intensely researched field because hierarchical modeling approaches have increased in popularity especially for modeling complex network systems and these benefit from a component-based MOR rather than applying MOR to one (monolithic) system. For component-based reduction, the preservation of structural features (such as the pH structure) is often essential because the interaction between components or even the stability of the network relies on the component structure. With our approach, structural features can be ensured accross the entire parameter domain.

Another optimization-based approach to PMOR is introduced in~\cite{HundMMS2021}, in which a method for $\mathcal{H}_2 \otimes \mathcal{L}_2$ optimal PMOR is described. Besides considering a different approximation metric (and optimization strategy) we use structured matrices to ensure ROM stability, whereas method in~\cite{HundMMS2021} uses constrained optimization. We compare our presented approach to the method of~\cite{HundMMS2021} in \Cref{sec:PHMSD}.

Our paper is organized as follows: The next section briefly recalls accuracy measures for PMOR of dynamical systems and explains some existing PMOR methods, as well as structure-preserving MOR\@. In~\Cref{sec:ourmethod} we present our approach and explain our parameterization, optimization, and adaptive sampling strategy. In~\Cref{sec:numerics}, we compare our method to other PMOR methods, which emphasizes its high accuracy and wide applicability. We conclude the paper in~\Cref{sec:conclusion}, where we provide perspectives for future research.

\section{Preliminaries}\label{sec:prelim}
This section provides a background for MOR of parametric systems. We first recall the error measures for linear time-invariant parametric systems following the presentation in~\cite{BennerGW2015} and then review existing PMOR methods that are later used for a comparison in our numerical experiments. Finally, we recall structure-preserving MOR and discuss its extension to PMOR\@.

\subsection{Error Measures for PMOR}

For linear time-invariant dynamical systems such as~\eqref{eq:fullmodel} and their reduced surrogate~\eqref{eq:redmodel}, the output approximation errors
\begin{align}
  \label{eq:l2err}
  \|y( \cdot, p) - \red{y}( \cdot, p)\|_{\ltwo} &:= {\left( \int_{0}^\infty \|y(t, p)-\red{y}(t, p)\|_2^2 \,\, \dd t \right)}^{1/2} \text{ and} \\
  \label{eq:linferr}
  \|y( \cdot, p) - \red{y}( \cdot, p)\|_{\linf} &:=\sup\limits_{t \ge 0} \|y(t, p) - \red{y}(t, p) \|_\infty
\end{align}
can be estimated by comparing the \emph{transfer functions} of FOM and ROM for a fixed parameter value $p\in \Omega$. For homogeneous initial conditions (i.\,e., $x(0, p) = 0$ and $\rx(0, p) = 0$), the transfer functions of~\eqref{eq:fullmodel} and~\eqref{eq:redmodel} are given by
\begin{align}
  \label{eq:transferfunc}
  \tf(s,p) &= \fC(p){(s\fE(p)-\fA(p))}^{-1}\fB(p) + \fD(p) \text{ and} \\
  \tfr(s,p) &= \rC(p){(s\rE(p)-\rA(p))}^{-1}\rB(p) + \rD(p).
\end{align}
These matrix valued functions $\tf(s,p)$ and $\tfr(s,p)$ constitute a mapping of a Laplace-transformed input $U(s)$ to Laplace-transformed outputs $Y(s, p)$ and ${\red{Y}(s, p)}$ via
\begin{align*}
  Y(s, p) = \tf(s, p)U(s) \quad \text{and} \quad \red{Y}(s, p) = \tfr(s, p)U(s).
\end{align*}
Consequently, if the transfer function $\tf(s,p)$ is well-approximated by $\tfr(s,p)$ for all $s\in \mathbb{C}$ with $\Real{s}\geq 0$ and $p\in \Omega$, the Laplace-transformed error $Y(s,p)-\red{Y}(s, p)$ is small. For fixed $p\in \Omega$, the transfer function errors
\begin{align}
  \label{eq:h2err}
  \|\tf( \cdot, p)-\tfr( \cdot, p)\|_{\htwo} &:= {\left(\frac{1}{2\pi}\int_{-\infty}^{\infty} \| \tf(\ri \omega, p) - \tfr(\ri \omega, p)\|_F^2\,\,\dd\omega\right)}^{1/2}, \\
  \label{eq:hinferr}
  \|\tf( \cdot, p)-\tfr( \cdot, p)\|_{\hinf} &:= \sup\limits_{\omega \in \R} \|\tf(\ri \omega, p) -\tfr(\ri \omega, p)\|_2,
\end{align}
where $\| \cdot \|_F$ and $\|  \cdot \|_2$ denote the Frobenius norm and spectral norm, respectively, directly yield upper bounds for the approximation errors~\eqref{eq:l2err} and~\eqref{eq:linferr} given by
\begin{align*}
  \|y(\cdot,p) - \ry(\cdot,p)\|_{\linf} &\leq \|\tf(\cdot,p) - \tfr(\cdot,p)\|_{\htwo}\|u\|_{\ltwo} \text{ and} \\
  \|y(\cdot,p) - \ry(\cdot,p)\|_{\ltwo} &\leq \|\tf(\cdot,p) - \tfr(\cdot,p)\|_{\hinf}\|u\|_{\ltwo},
\end{align*}
respectively. We refer to~\cite{Antoulas2005} for a derivation and detailed analysis. In (nonparametric) MOR, these bounds are usually considered in the context of algorithms minimizing either the $\htwo$ error or the $\hinf$ error. For composite error measures in PMOR, both the frequency and parameter space must be considered \cite{BennerGW2015}.
One such composite error is the~$\htwoltwo$ error
\begin{align*}
  \|\tf - \tfr\|_{\htwoltwo} := \left(\frac{1}{2\pi} \int_{-\infty}^{\infty}\int_{\pdom}\|\tf(\ri \omega, p) - \tfr(\ri \omega, p)\|_F^2\, \dd p\, \dd \omega \right)^{1/2},
\end{align*}
which was introduced in~\cite{BaurBBG2011}. The complementary composite error measure proposed in~\cite{BennerGW2015} is the~$\hinflinf$ error, defined by
\begin{align*}
  \|\tf - \tfr\|_{\hinflinf} := \sup\limits_{\omega \in \R}\,\max\limits_{p \in \pdom}\|\tf(\ri \omega, p) - \tfr(\ri \omega, p)\|_2.
\end{align*}
Nonparametric MOR methods that lead to a good or even optimal performance in either the $\hinf$ or the $\htwo$ error are well studied. Methods for pMOR targeting the $\htwoltwo$ error have been suggested in~\cite{BaurBBG2011} for a specific parameter dependency and for general parametric systems, an optimization-based strategy was proposed in~\cite{HundMMS2021}. In this article, we propose a PMOR method, which yields small $\hinflinf$ errors\@.
 
\subsection{Review of PMOR methods}\label{subsec:existing}

In this part, we summarize PMOR methods for linear time-invariant systems that we also use in our comparison in \Cref{sec:numerics}. We discuss the strategy behind both global and local reduction approaches. 
In the former, the idea is to obtain a reduced model from projecting the full order parametric model by projection matrices, which are constructed to be suitable for the whole parameter range (and thus are global in the parameter). 

 In contrast to global basis methods, the basis used for computing the parametric ROM in a local method is usually obtained by interpolating the nonparametric local projection matrices, the ROM matrices, or the ROM transfer functions, all of which are computed by (nonparametric) MOR at a set of sample points $\{p^{(1)}, \dots, p^{(n_s)}\}\subset \Omega$. Interpolation of projection matrices requires access to the FOM for each parameter update, which may lead to increased storage and computational demands (with an exception for systems considered in~\cite{Son2013} or if a further treatment is applied as in~\cite{Wittmuess2016}), and the interpolation of the transfer function as presented in~\cite{BaurB2009} leads to an increased ROM dimension for each parameter sample considered. Therefore, in this article, we study only matrix interpolation methods as in~\cite{AmsallemCCF2009, AmsF2011, DegrooteVW2010, LohmannE2007, PanzerML2010} as a representative of local reduction methods.

In order to keep the presentation concise, we do not go into detail about parameter sampling strategies and refer to~\cite[Section~3.4]{BennerGW2015} for this issue.
 %Furthermore, we do not consider but still mention other methods that can be used for PMOR such as the Loewner framework or the proper orthogonal decomposition are discussed in~\cite{AntoulasBG2020} and~\cite[Section~9]{BennerOCW2017}, respectively.

\subsubsection{Global basis methods}
\label{subsub:global}
In global basis methods, the ROM takes the form
\begin{align}
  \label{eq:globbasrom}
  \begin{split}
  W^\T \fE(p) V \rdx(t, p) &= W^\T \fA(p) V \rx(t, p) + W^\T \fB(p) u(t), \\
  \ry(t, p) &= \fC(p) V \rx(t, p) + \fD(p)u(t),
\end{split}
\end{align}
where $V, W \in \R^{\dimx \times \dimr}$ is a constant (two-sided) projection basis that is used for all parameter values and is global in the parameter in this sense. If the system matrices have a parameter-separable form
%\begin{align*}
  $A(p) = \sum_{i=1}^\ell f_i(p)A_i$
%\end{align*}
with scalar parameter dependencies $f_i:\Omega \to \mathbb{R}$, $i = 1, \dots, \ell$, and constant matrix coefficients $A_i \in \R^{\dimx \times  \dimx}$, $i = 1, \dots, \ell$, and if $\ell \in \mathbb{N} $ is of moderate size, the ROM can be evaluated efficiently, since the parameter-independent reduced order coefficients $W^\T A_i V$, $i = 1, \dots, \ell$, can be precomputed. %, as they do not depend on the parameter. %We assume that the evaluation of the scalar-valued functions $f_i$ has a computationally negligible cost. 
For a more complex parameter dependency, where $\ell \gg 1$, %, where $\ell \gg 1$, %where, e.g., $f_i(p)$ are high dimensional vector valued mappings, 
an efficient evaluation of the ROM could be achieved by means of (discrete) empirical interpolation methods~\cite{BarraulsMNP2004,ChaturantabutS2010}.
%, representing these high-dimensional nonlinearities in a low-dimensional subspace by interpolating certain coordinates $f_i(p)$. %\benni{Wie ist dann das Produkt definiert?}

The projection matrices $V$ and $W$ are typically computed as follows. First, a set of sample points $\{p^{(1)}, \dots, p^{(n_s)}\}\subset \Omega$ is chosen, for which a set of projection matrices $V_i$ and $W_i$ for $i = 1, \dots, n_s$ are computed by performing nonparametric projection-based MOR (e.\,g.\ balanced truncation~\cite{Moore1981, MullisR1976} or the iterative rational Krylov algorithm (IRKA)~\cite{GugercinA2008}). Then, global projection matrices are constructed via $V := [V_1, \dots, V_{n_s}]$, and $W := [W_1, \dots W_{n_s}]$. Usually, the individual projection matrices $V_i$ and $W_i$, $i=1,\ldots,n_s$ are orthogonalized prior to concatenation and the final matrices $V$ and $W$ are again treated by a singular value decomposition to remove any (nearly) rank-deficient parts. For greedy allocation of the sample parameter points $\{p^{(1)}, \dots, p^{(n_s)}\}$, we refer to, e.g.,~\cite{Binder2021} and~\cite{Prud2002}.

It is important to note that in the global basis case, the structural properties of the nonparametric MOR methods that are used to construct the individual projection matrices do not carry over to the parametric ROM\@. In particular, the parametric ROM is neither balanced nor guaranteed to be stable, when using BT to compute the local projections $V_i$ and $W_i$, $i=1,\ldots,n_s$.
In case of IRKA, which is based on interpolation of the transfer function, the interpolation conditions at the sample points $\{p^{(1)}, \dots, p^{(n_s)}\}$ are retained. However, the $\htwo$-optimality of the local model, which is achieved by IRKA, can only be extended to an $\htwoltwo$-optimality of the parametric model in particular cases, such as parameter dependencies only present in input and output matrices~\cite{BaurBBG2011, KleymanG2020}. Furthermore, the stability of the ROM, which is often an essential property, cannot be ensured in general using global basis methods. Note that a recent development~\cite{Gosea2021} allows for retaining interpolation conditions along all parameters $p \in \pdom$. However, the method involves parameter dependent projection matrices, which requires particular effort in the precomputation of the ROM matrices.

The preservation of stability in global basis methods can be ensured in particular cases by using structured approximations, e.g.\ by means of one-sided projections (choosing $V=W$) applied to systems with negative definite system pencils ($\fE(p), \fA(p))$, that is, for all $p \in \pdom$, the matrices $\fE(p)$ and $\fA(p)$ are positive definite and negative definite, respectively, cf.~\cite{BennerGW2015}. An extension to BT, which ensures stability for general stable FOMs and even provides balanced systems for all $p \in \pdom$ requires parameter dependent projection matrices~\cite{Wittmuess2016}. For a discussion of general optimal parametric approximations using Kolmogorov n-widths, we refer the reader to~\cite{UngerG2019}.

\subsubsection{Matrix interpolation methods}\label{subsub:local}
%However, we note that further treatment %\volker{das sehes andere Leute anders würde ich mehr ausführen.}

In~\cite{GeussPL2013}, a template that explains the different choices for constructing the interpolated ROM at each stage of matrix interpolation based PMOR is provided. In matrix interpolation methods, first, initial ROMs are computed using standard nonparametric MOR using samples of the FOM at a sample set $\{p^{(1)}, \dots, p^{(n_s)}\}\subset \Omega$. After that, the different ROMs are aligned, which ensures that the ROMs are described in the same set of generalized coordinates (this does not change transfer functions of the individual ROMs at $p^{(i)}$).
Then the ROM matrices are interpolated to form a parametric ROM, typically on a matrix manifold because the system matrix may be restricted to lie on the manifold of regular matrices; see~\cite{DegrooteVW2010, GeussPL2013} for details.

\subsection{Structure-Preserving MOR for pH systems}\label{sec:phreduction}

% A recent trend in MOR is the preservation of structural features of systems during the reduction as this allows for 
One branch of MOR considers the preservation of structural features of the original system during the reduction, which may include a second order structure with symmetric positive semi-definite coefficient matrices, passivity, or the pH structure (which ensures passivity). We restrict our presentation to the pH structure as this has recently become an important building block in the modeling and control of multi-physical and network systems; see~\cite{MehU22} for a recent survey demonstrating the wide applicability and the benefits of pH systems.
In view of the invertibility of the descriptor matrix in \eqref{eq:fullmodel}, we consider the case $E(p) = I$ in what follows. For a definition of pH-systems with singular descriptor matrix, we refer to \cite{MehU22}.
\begin{definition}[Parametric port-Hamiltonian Systems]
A linear time-invariant parametric system
\begin{align*}
  \dot x(t, p) &= (\fJ(p)-\fR(p))\fQ(p) x(t, p) + (\fG(p) - \fP(p))u(t), \\
       y(t, p) &= {(\fG(p) + \fP(p))}^\T\fQ(p) x(t, p) + (\fS(p) + \fN(p))u(t),
\end{align*}
where $\fJ, \fR, \fQ : \pdom  \rightarrow \R^{\dimx \times \dimx}$, $\fG, \fP: \pdom  \rightarrow \R^{\dimx \times \dimu}$, and $\fS, \fN: \pdom  \rightarrow \R^{\dimu \times \dimu}$, is called a \emph{parametric port-Hamiltonian system}, if the following conditions are satisfied for all $p \in \pdom$:
\begin{enumerate}[(i)]
  \item the matrices $\fJ(p)$ and $\fN(p)$ are skew-symmetric,
  \item the \emph{passivity matrix} $ W_{\rm pas}(p) := \begin{bsmallmatrix} \fQ{(p)}^\T \fR(p) \fQ(p) & \fQ(p) \fP(p) \\ \ {(\fQ{(p)} \fP{(p)})}^\T & \fS(p) \end{bsmallmatrix} $
    and $Q(p)$ are symmetric positive semi-definite.
\end{enumerate}\label{def:phsys}
\end{definition}
We note that $Q$ is often set to the identity by introducing a descriptor matrix; see~\cite[Section 4.3]{MehU22}.
The structure-preserving MOR of pH systems is concerned with finding a low-order approximation of a large-scale parametric pH system that also satisfies the conditions in~\cref{def:phsys}. For a structure-preserving PMOR it must be ensured that the ROM is pH for all parameter configurations $p \in \pdom$.  
%This can be encoded into our proposed parameterization as demonstrated in~\Cref{sec:PHMSD}.
We explain how the pH structure can be encoded in our proposed parameterization in \Cref{sec:PHMSD}.

There exist several methods for nonparametric structure-preserving MOR of pH systems (see, e.\,g.,~\cite{BeaGM21, BreU21, Gugercin2012, HauMM19, PolyugaS2010}). The development of structure-preserving MOR algorithms for parametric pH systems has only recently started~\cite{StahlSM2022}. In order to be able to compare the method proposed in this article to another structure-preserving method, we extend IRKA-PH, which is the pH structure-preserving variant of IRKA, to the parametric case using the framework of~\cite{BaurBBG2011}. This generalization is straightforward in case $\fQ$ is constant. For a set of sample-points $\{p^{(1)}, \dots, p^{(n_s)}\}$, we can compute a set of projection matrices $V_i$, $i=1,\ldots,n_s$ using IRKA-PH~\cite{Gugercin2012}. These are concatenated to one global projection matrix $V = [V_1, \dots, V_{n_s}]$. As in nonparametric IRKA-PH, we then compute the projection $W := QV{\left(V^\T Q V\right)}^{-1}$ and project the system via
\begin{align*}
  \rdx(t,p) &= (W^\T \fJ(p) W-W^\T \fR(p) W)V^\T \fQ V \rx(t,p) + W^\T (\fG(p) - \fP(p))u(t), \\
  \ry(t,p) &= \left({(\fG(p) + \fP(p))}^\T W \right) V^\T \fQ V \rx(t,p) + (\fS(p) + \fN(p))u(t).
\end{align*}
This projection retains the symmetry and definiteness properties as imposed by \cref{def:phsys} for all $p\in\pdom$ and is therefore structure-preserving.

% After describing the extension of SOBMOR to the parametric case in detail in the next section we compare the accuracy of IRKA-PH, the global- and local-basis methods described in the previous \Cref{subsub:global} and \Cref{subsub:local} to our method in \Cref{sec:numerics}. 

\section{Our Method}\label{sec:ourmethod}

We present our optimization-based PMOR method in the following order: First, we introduce our parameterization of a general parameter-dependent ROM\@. Then we explain how the free parameters of the ROM can be optimized to obtain a small $\hinflinf$ error. For this, we provide an optimization strategy in the spirit of~\cite{SchwerdtnerV2020} that is based on transfer function evaluations. Moreover, we generalize the adaptive frequency sampling strategy~\cite{Schwerdtner2021} to the parametric case for an efficient adaptive optimization-based ROM computation.

\subsection{Parameterization}

We extend the approach presented in~\cite{SchwerdtnerV2020} in order to parameterize a general class of stable parametric systems. For this, we distinguish between the \emph{design} parameter vector $\theta \in \R^{\dimtheta}$, $\dimtheta\in \mathbb{N}$, which can be used to tune the ROM during the optimization (and remains fixed thereafter), and the model parameter vector $p \in \pdom$. Our parameterized ROMs have the form
\begin{align}
  \label{eq:romform}
  \Sigma(p, \theta):
  \begin{cases}
    \dot x = \pA(p, \theta) x + \pB(p, \theta), \\
         y = \pC(p, \theta) x + \pD(p, \theta),
  \end{cases}
\end{align}
where
$\pA: \pdom \myotimes \R^{\dimtheta}  \rightarrow \R^{\dimr \times \dimr}$,
$\pB: \pdom \myotimes \R^{\dimtheta}  \rightarrow \R^{\dimr \times \dimu}$,
$\pC: \pdom \myotimes \R^{\dimtheta}  \rightarrow \R^{\dimy \times \dimr}$, and
$\pD: \pdom \myotimes \R^{\dimtheta}  \rightarrow \R^{\dimy \times \dimu}$.
The transfer function of $\Sigma(p, \theta)$ is given by
\begin{align}
  \label{eq:paramTF}
\ptf(s, p; \theta) = \pC(p, \theta){\left(sI-\pA(p, \theta)\right)}^{-1}\pB(p, \theta)+\pD(p, \theta).
\end{align}
Analogously to~\cite{SchwerdtnerV2020}, we optimize $\theta$ to obtain a transfer function~$\ptf( \cdot,  \cdot, \theta)$ that approximates a given transfer function $\tf(\cdot,\cdot)$.

In the reduced model, we must ensure that the matrix $\pA( p, \theta)$ is an asymptotically stable matrix (i.\,e., $\pA(p, \theta)$ only has eigenvalues with strictly negative real part) for all admissible parameters $p \in \pdom$ and ${\theta \in \R^{\dimtheta}}$. In order to avoid constrained optimization, we pursue a straight-forward parameterization, where we exploit the equivalence of Dissipative-Hamiltonian (DH) matrices and stable matrices~\cite{GillisS2017}. 
\begin{definition}[Dissipative-Hamiltonian matrix~\cite{GillisS2017}]\label{def:DH} 
  A matrix $M \in \R^{n \times  n}$ is Dissipative-Hamiltonian (DH) if $M = (J-R)Q$ for some $J, R, Q \in \R^{n \times n}$, where $J$ is skew-symmetric, $R$ is symmetric positive semi-definite, and $Q$ is symmetric positive definite.
\end{definition}
We can thus parameterize a stable matrix as a composition of parameterized skew-symmetric and symmetric positive (semi)-definite matrices. In particular, we set
\[ \pA(p, \theta) = (\pJ(p, \theta)-\pR(p, \theta))\pQ(p, \theta), \]
where 
$\pJ: \pdom \myotimes \R^{\dimtheta}  \rightarrow \R^{\dimr \times \dimr}$ takes values in the set of skew-symmetric matrices, while
$\pR: \pdom \myotimes \R^{\dimtheta}  \rightarrow \R^{\dimr \times \dimr}$ and 
$\pQ: \pdom \myotimes \R^{\dimtheta}  \rightarrow \R^{\dimr \times \dimr}$ both take values in the set of symmetric positive semi-definite matrices. Several remarks are in order.
\begin{remark}[Ensuring asymptotic stability]\label{rem:PSD} \hfill

	\begin{enumerate}
\item  Whereas the matrix $Q$ in \cref{def:DH} is positive definite, our parameterization will only ensure the positive \emph{semi}-definiteness of $\pQ(p, \theta)$. The positive definiteness can be guaranteed by using $\widetilde \pQ(p, \theta) := \pQ(p, \theta) + \varepsilon I$, where $\varepsilon>0$ is a small positive scalar, and subsequently using $\widetilde \pQ$ during the optimization. 
\item Furthermore, using matrices as in \cref{def:DH} only ensures stability, while we require asymptotically stable ROMs.
In our numerical experiments this has not led to any problems (all our ROMs are asymptotically stable for parameter values $p \in \pdom$ that we have tested). 
If we want to explicitly ensure asymptotic stability, we could proceed similar as above for $\pQ$ and set $\widetilde\pR(p,\theta):=\pR(p,\theta)+\varepsilon I$ and thus ensure an asymptotically ROM.
\item  Using this DH structure to ensure stability is preferable especially for PMOR, since the alternative (using a general system matrix $\pA(p, \theta)$ and imposing stability as a constraint) requires the solution of a constrained optimization problem, where the constraint is given by
$$
\max\limits_{p \in \pdom, i \in \{1, \dots, r\}}\Real\Lambda_i(\pA(p, \theta)),
$$
where $\Lambda_i$ denotes the $i$-th eigenvalue of its matrix argument. This quantity is not only hard to compute but it also depends nonsmoothly on $\theta$, which poses additional challenges during optimization.
	\end{enumerate}
\end{remark}

\begin{remark}
  Our parameterization~\eqref{eq:romform} does not include a descriptor matrix. This is because we have assumed that $\fE(p)$ in~\eqref{eq:fullmodel} is nonsingular, implying that~\eqref{eq:fullmodel} is an ordinary differential equation which has a proper transfer function with $\lim\limits_{s  \rightarrow \infty } \tf(s, p) = D(p)$. This can be approximated by means of our ansatz.
  In \Cref{sec:conclusion}, we comment on extensions of our approach to FOMs with a singular $\fE(p)$, which may have improper transfer functions.
\end{remark}

\noindent For the matrix-valued functions $\pB, \pC, \pD, \pJ, \pR$, and $\pQ$ we employ the ansatz
\begin{align}
  \label{eq:ansatz}
  % \mathcal{M}(p, \theta) &= \sum\limits_{i=1}^{\kappa_j} f_i^j(p) M_i(\theta),
  \begin{split}
  \mathcal{B}(p, \theta) = \sum\limits_{i=1}^{\kappa_B} f_i^B(p) B_i(\theta), \quad
  \mathcal{C}(p, \theta) = \sum\limits_{i=1}^{\kappa_C} f_i^C(p) C_i(\theta), \\
  \mathcal{D}(p, \theta) = \sum\limits_{i=1}^{\kappa_D} f_i^D(p) D_i(\theta), \quad
  \mathcal{J}(p, \theta) = \sum\limits_{i=1}^{\kappa_J} f_i^J(p) J_i(\theta), \\
  \mathcal{R}(p, \theta) = \sum\limits_{i=1}^{\kappa_R} f_i^R(p) R_i(\theta), \quad
  \mathcal{Q}(p, \theta) = \sum\limits_{i=1}^{\kappa_Q} f_i^Q(p) Q_i(\theta), \\
  \end{split}
\end{align}
% where $f_1, \dots, f_\kappa : \pdom \rightarrow \R$
where the functions $\funB_1, \dots, \funB_{\kappaB}: \Omega  \rightarrow \R$, $\funC_1, \dots, \funC_{\kappaC}: \Omega  \rightarrow \R$, $\funD_1, \dots, \funD_{\kappaD}: \Omega  \rightarrow \R$, $\funJ_1, \dots, \funJ_{\kappaJ}: \Omega  \rightarrow \R$, $\funR_1, \dots, \funR_{\kappaR}: \Omega  \rightarrow \R$, and~${\funQ_1, \dots, \funQ_{\kappaQ}: \pdom  \rightarrow \R}$ capture the dependency on the model parameter. %This ansatz allows us to separate the dependencies on~$p$ and~$\theta$ in the construction of the system matrices, which is crucial for the optimization over~$\theta$.

In \cref{rem:stabAllp}, we discuss how the positive semi-definiteness and symmetry of $\pR(p, \theta)$ and $\pQ(p, \theta)$ and the skew-symmetry of $\pJ(p,\theta)$ defined in \eqref{eq:ansatz} can be ensured for all $p$ and $\theta$. Before that, we present reshaping operations, which constitute the building blocks of the $\theta$-dependent parts in the system matrix functions.
\begin{definition}[Reshaping operations as in~\cite{SchwerdtnerV2020}]
  \begin{enumerate}[a)] 
    \item The function family  
      \begin{align*}
        \vtf_m: \C^{n \cdot m} \rightarrow \C^{n\times m}, \quad  v \mapsto
        \begin{bmatrix}
          v_1 & v_{n+1} & \dots & v_{m(n-1)+1}\\
          v_2 & v_{n+2} & \dots & v_{m(n-1)+2}\\
          \vdots & \vdots &  & \vdots \\
          v_n & v_{2n} & \dots & v_{nm}
        \end{bmatrix}
      \end{align*}
      reshapes a vector into an accordingly sized matrix with $m$ columns. Here, $\vtf$ stands for vector-to-full (matrix).
      The inverse operation is given by
      \begin{align*}
        \ftv: \C^{n \times m}  \rightarrow \C^{n \cdot m}, \quad A \mapsto
        \begin{bmatrix}
          a_{1,1} &a_{2,1} &\dots & a_{n,1} & a_{1,2} & \dots & a_{n,m}
        \end{bmatrix}^\T,
      \end{align*}
      where $\ftv$ stands for full (matrix)-to-vector given by the standard vectorization operator usually denoted by $\operatorname{vec}$.
    \item The function 
      \begin{align*}
        \vtu : \C^{n(n+1)/2}  \rightarrow \C^{n\times n}, \quad v \mapsto
        \begin{bmatrix}
          v_1 & v_2 & \dots & v_n \\
          0 & v_{n+1}& \dots & v_{2n-1} \\
          \vdots & \vdots & \ddots & \vdots \\
          0 & 0 & \dots & v_{n(n+1)/2} \\
        \end{bmatrix}
      \end{align*}
      maps a vector of length $n(n+1)/2$ to an $n\times n$ upper triangular 
      matrix (where $\vtu$ stands for vector-to-upper (triangular)), while the function 
      \begin{align*}
        \utv : \C^{n\times n}  \rightarrow \C^{n(n+1)/2}, \quad A \mapsto
        \begin{bmatrix}
          a_{1,1} & a_{1,2} & \dots & a_{1,n} & a_{2,2} & \dots & a_{n,n}
        \end{bmatrix}^\T
      \end{align*}
      maps the upper triangular part of an $n\times n$ matrix to a vector (where $\utv$ stands for upper (triangular)-to-vector).
    \item The function
      \begin{align*}
        \vtsu : \C^{n(n-1)/2}  \rightarrow \C^{n\times n}, \quad v \mapsto
        \begin{bmatrix}
          0 & v_1 & v_2 & \dots  & v_{n-1} \\
          0 & 0   & v_n & \dots  & v_{2n-3}  \\
          \vdots & \vdots  & \vdots   & \ddots & \vdots \\
          0 & 0   & 0   &  \dots     & v_{n(n-1)/2} \\
          0 & 0   & 0   & \dots      & 0\\
        \end{bmatrix}
      \end{align*}
      maps a vector of length $n(n-1)/2$ to an $n\times n$ strictly upper 
      triangular matrix (where $\vtu$ stands for vector-to-strictly upper (triangular)), while the function 
      \begin{align*}
        \sutv : \C^{n \times n}  \rightarrow \C^{n(n-1)/2}, \quad A \mapsto
        \begin{bmatrix}
          a_{1,2} & a_{1,3} & \dots & a_{1,n} & a_{2,3} & \dots & a_{n-1,n}
        \end{bmatrix}^\T
      \end{align*}
       maps the strictly upper triangular part of an $n\times n$ matrix to a 
       vector (where $\sutv$ means strictly upper (triangular)-to-vector).
  \end{enumerate}
\end{definition}

\noindent Utilizing these functions, the $\theta$-dependent parts of $\pB, \pC, \pD, \pJ, \pR$, and $\pQ$ are defined by
\begin{align*}
  B_i(\theta) &:= \vtf_{\dimu}(\theta_{B_i})                      \quad &&\text{ for } i \in \{1, \dots, \kappaB \},\\
  C_i(\theta) &:= \vtf_{\dimy}(\theta_{C_i})                      \quad &&\text{ for } i \in \{1, \dots, \kappaC \},\\
  D_i(\theta) &:= \vtf_{\dimy}(\theta_{D_i})                      \quad &&\text{ for } i \in \{1, \dots, \kappaD \},\\
  J_i(\theta) &:= \vtsu(\theta_{J_i}) - \vtsu{(\theta_{J_i})}^\T  \quad &&\text{ for } i \in \{1, \dots, \kappaJ \},\\
  R_i(\theta) &:= \vtu(\theta_{R_i}) \vtu{(\theta_{R_i})}^\T      \quad &&\text{ for } i \in \{1, \dots, \kappaR \},\\
  Q_i(\theta) &:= \vtu(\theta_{Q_i}) \vtu{(\theta_{Q_i})}^\T      \quad &&\text{ for } i \in \{1, \dots, \kappaQ \}.
\end{align*}
Here the (design) parameter vector $\theta$ is partitioned as
\begin{align}
\label{eq:partitioning}
\begin{split}
  \theta = [\theta_{B_1}^\T, \dots, \theta_{B_{\kappaB}}^\T, \theta_{C_1}^\T, \dots, \theta_{C_{\kappaC}}^\T,
  \theta_{D_1}^\T, \dots,& \theta_{D_{\kappaD}}^\T, \theta_{J_1}^\T, \dots, \theta_{J_{\kappaJ}}^\T,\\
         & \pushright{\theta_{R_1}^\T, \dots, \theta_{R_{\kappaR}}^\T, \theta_{Q_1}^\T, \dots, \theta_{Q_{\kappaQ}}^\T
  ]^\T,}
\end{split}
\end{align}
with
$\theta_{B_i}~\in~\R^{\dimr\cdot \dimu}$, 
$\theta_{C_i}~\in~\R^{\dimy\cdot \dimr}$, 
$\theta_{D_i}~\in~\R^{\dimy\cdot \dimu}$, 
$\theta_{J_i}~\in~\R^{\dimr\cdot(\dimr-1)/2}$, 
$\theta_{Q_i}~\in~\R^{\dimr\cdot(\dimr+1)/2}$, and 
$\theta_{R_i}~\in~\R^{\dimr\cdot(\dimr+1)/2}$. By this parameterization, skew-symmetry of $J_i(\theta)$  and symmetry and positive semi-definiteness of all $Q_i(\theta)$  and $R_i(\theta)$ follows straightforwardly.

\begin{remark}\label{rem:stabAllp}
  If $\pR(p, \theta)$ and $\pQ(p, \theta)$ are defined as in~\eqref{eq:ansatz}, then we can only use functions $f_1^R, \dots, f_{\kappaR}^R$ and $f_1^Q, \dots, f_{\kappaQ}^Q$ that take nonnegative values, since otherwise, we cannot ensure the positive semi-definiteness of the summands of $\pR(p, \theta)$ and $\pQ(p, \theta)$. In our implementation, we use shifted hat-functions (see~\eqref{eq:hat}), which only attain nonnegative values.

  However, we can modify $\pR$ and $\pQ$ to also allow for scalar functions, which may attain negative values, as follows. For this, let
  \begin{align*}
    \mathcal{V}_R(p, \theta) = \sum\limits_{i=1}^{\kappaR}f^R_i(p)\vtu(\theta_{R_i}), \quad 
    \mathcal{V}_Q(p, \theta) = \sum\limits_{i=1}^{\kappaQ}f^Q_i(p)\vtu(\theta_{Q_i}).
  \end{align*}
  Then we can define $\mathcal{R}_{r,\text{mod}}$ and $\mathcal{Q}_{r,\text{mod}}$ via
  \begin{align*}
    \mathcal{R}_{r,\text{mod}}(p, \theta) = \mathcal{V}_R(p, \theta) \mathcal{V}_R{(p, \theta)}^\T, \quad 
    \mathcal{Q}_{r,\text{mod}}(p, \theta) = \mathcal{V}_Q(p, \theta) \mathcal{V}_Q{(p, \theta)}^\T.
  \end{align*}
  This ensures that $\mathcal{R}_{r,\text{mod}}(p, \theta)$ and $\mathcal{Q}_{r,\text{mod}}(p, \theta)$ are positive semi-definite for all $p \in \pdom$ for all values of $f_1^R, \dots, f_{\kappaR}^R$ and $f_1^Q, \dots, f_{\kappaQ}^Q$.
\end{remark}

\subsection{Optimization}
The computation of the reduced model is performed by tuning $\theta$ to iteratively reduce the error $\|H - \ptf(\cdot, \cdot, \theta)\|_{\hinfhinf}$. 
% As the computation of the $\|  \cdot \|_{\hinflinf}$ norm for large-scale systems has not yet been established
For that, we do not minimize the $\hinfhinf$ error directly but instead minimize the objective function
\begin{align}
  \begin{split}
  \loss(\theta; \tf,\gamma,\mathcal{S}) := \frac{1}{\gamma}\sum\limits_{(\omega_i, \param_i) \in \mathcal{S}} \sum\limits_{j=1}^{\min(\dimu, \dimy)}{\left({\left[\sigma_j \left(\tf(\ri \omega_i, \param_i)-\ptf(\ri \omega_i, \param_i; \theta)\right)-\gamma\right]}_+\right)}^2
\end{split}
  \label{eq:loss}
\end{align}
with respect to $\theta$ on a sequence of decreasing values of $\gamma > 0$, where $\sigma_j(\cdot)$ denotes the $j$-th singular value of its matrix argument. Further,
\begin{align*}
  {[\, \cdot\, ]}_+:  \R \rightarrow [0,\infty), \quad x \mapsto 
  \begin{cases}
    x & \text{if } x\ge 0,\\
    0 & \text{if } x<0,
  \end{cases}
\end{align*}
denotes the positive part of a scalar and the set $\mathcal{S} \subset \R \times \pdom$ contains the sample points both in frequency and parameter domain, at which both the original and our parameterized transfer function are evaluated.

In the following, we highlight several favorable properties of $\loss$ which motivate the use of this objective functional rather than the $\hinfhinf$ error. For that, we first restate~\cite[Proposition~3.1]{SchwerdtnerV2020} for the case of parametric systems.
We assume that both $\tf$ and $\ptf$ depend smoothly on the parameter vector $p \in \pdom$ and the Laplace variable $s \in \C$.

\begin{proposition}[Properties of $\loss$] Let $\mathcal{S} =\{(\omega_1,p^{(1)}),\,\ldots,\,(\omega_k, p^{(k)}) \} \subset \R \times \R^{\dimp}$, $\theta_0 \in \R^{n_\theta}$, and $\gamma > 0$ be fixed and let $\loss$ be given as in~\eqref{eq:loss}. For $i=1,\,\ldots,\,k$ and $j~=~1,\,\ldots,\,\min(\dimu, \dimy)$ define
\begin{align*}
  f_{ij}(\theta_0) &:= \sigma_j(\tf(\ri \omega_i, p^{(i)})-\ptf(\ri \omega_i, p^{(i)}; \theta_0))\quad \text{and}\\
  g(\theta_0) &:= \loss(\gamma,\tf,\ptf(\cdot, \cdot, \theta_0),\mathcal{S}).
\end{align*}
Then the following statements hold:
  \begin{enumerate}[i)]
    \item We have that $\loss(\gamma,\tf,\ptf(\cdot, \cdot, \theta_0),\mathcal{S})=0$ for all $\gamma>{\|\tf-\ptf(\cdot, \cdot, \theta_0)\|}_{\hinfhinf}$.
    \item %If for each $i=1,\,\ldots,\,k$, $G(s_i)-G_r(s_i,\theta_0)$ has only simple singular value or if $f_{ij}(\theta_0) < \gamma$ for $j=1, \dots, \enu$, then $g(\cdot)$ is differentiable at $\theta_0$.
    The function $g(\cdot)$ is differentiable. Moreover, the partial derivatives of $g(\cdot)$ at $\theta_0$ are given by
    \begin{align}\label{eq:diffg}
     \begin{split}
      \frac{\partial}{\partial \theta_{\ell}}g(\theta_0) &= \frac{2}{\gamma} \sum_{f_{ij}(\theta_0) > \gamma} (f_{ij}(\theta_0) -\gamma) \frac{\partial_+}{\partial \theta_\ell} f_{ij}(\theta_0) \\ &= \frac{2}{\gamma} \sum_{f_{ij}(\theta_0) > \gamma} (f_{ij}(\theta_0) -\gamma) \frac{\partial_-}{\partial \theta_\ell} f_{ij}(\theta_0), \quad \ell=1,\,\ldots,\,n_\theta,
      \end{split}
    \end{align}
    where $\frac{\partial_+}{\partial \theta_\ell}$ and $\frac{\partial_-}{\partial \theta_\ell}$ denote the \emph{right} and \emph{left} partial derivative with respect to $\theta_\ell$ and $\theta_\ell$ denotes the $\ell$-th element of $\theta_0$.
  \end{enumerate}
  \label{prop:lossprops}
\end{proposition}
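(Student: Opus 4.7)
For part (i), I unpack the definitions. If $\gamma > \|\tf - \ptf(\cdot,\cdot,\theta_0)\|_{\hinfhinf}$, then for every $(\omega_i, p^{(i)}) \in \mathcal{S}$ and every $j$ one has
\[
\sigma_j\bigl(\tf(\ri\omega_i,p^{(i)}) - \ptf(\ri\omega_i, p^{(i)}; \theta_0)\bigr) \le \|\tf(\ri\omega_i,p^{(i)}) - \ptf(\ri\omega_i, p^{(i)}; \theta_0)\|_2 \le \|\tf - \ptf(\cdot,\cdot,\theta_0)\|_{\hinfhinf} < \gamma,
\]
so every truncation $[\sigma_j - \gamma]_+$ is zero, whence $\loss(\theta_0;\tf,\gamma,\mathcal{S}) = 0$.

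The plan for part (ii) is to write $g(\theta) = \gamma^{-1}\sum_{i,j}\phi(f_{ij}(\theta))$ with the scalar link $\phi(x) := ([x-\gamma]_+)^2$, which is $C^1$ on $\R$ with $\phi'(x) = 2[x-\gamma]_+$; the corner of $[\cdot]_+$ at $x=\gamma$ is smoothed out by squaring. The matrix map $M_i(\theta) := \tf(\ri\omega_i,p^{(i)}) - \ptf(\ri\omega_i,p^{(i)};\theta)$ is smooth by the assumed regularity of $\ptf$, and each $f_{ij}(\theta) = \sigma_j(M_i(\theta))$ is locally Lipschitz with well-defined one-sided directional derivatives by standard singular value perturbation theory. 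First I would split the indices $(i,j)$ at $\theta_0$ into three regimes and show that only the active ones contribute. For \emph{inactive} indices $f_{ij}(\theta_0) < \gamma$, continuity yields $\phi(f_{ij}(\theta)) \equiv 0$ in a neighborhood of $\theta_0$, contributing nothing to any derivative. For \emph{boundary} indices $f_{ij}(\theta_0) = \gamma$, the Lipschitz bound $|f_{ij}(\theta) - \gamma| \le L\|\theta-\theta_0\|$ gives $\phi(f_{ij}(\theta)) \le L^2\|\theta - \theta_0\|^2$, so this contribution is differentiable at $\theta_0$ with vanishing partial derivative, consistent with its absence from \eqref{eq:diffg}.

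The main obstacle lies with the \emph{active} indices $f_{ij}(\theta_0) > \gamma$ in the case that $\sigma_j$ is a \emph{repeated} singular value of $M_i(\theta_0)$. When $\sigma_j$ is simple, $f_{ij}$ is locally $C^\infty$ and the chain rule directly gives the asserted expression, with left and right partial derivatives coinciding. When $\sigma_j$ is repeated, the individual $f_{ij}$'s may fail to be classically differentiable; my plan is to group terms across the cluster of indices sharing a common value $c > \gamma$ and argue that $\sum_{j' \in \text{cluster}}\phi(f_{ij'}(\theta))$ is a symmetric function of the cluster's singular values, hence $C^1$ in $M_i$ by Lewis's theorem on symmetric spectral functions (or a direct perturbation expansion) and therefore $C^1$ in $\theta$ by the chain rule. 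To reconcile with the one-sided form in \eqref{eq:diffg}, I would invoke the structural fact that left- and right-directional derivatives of clustered singular values differ only by a reversing permutation within the cluster; since the common factor $(c-\gamma)$ pulls out, the summed product is invariant under this permutation, making either choice of one-sided derivative legitimate. Assembling the three regimes yields both differentiability of $g$ and the formula; the coalescence step is the delicate piece and the rest is book-keeping.
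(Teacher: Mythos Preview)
Your argument is correct. The paper itself does not prove this proposition from scratch: it simply observes that the claim concerns only the $\theta$-derivative and invokes the nonparametric result \cite[Proposition~3.1]{SchwerdtnerV2020} verbatim. You have instead reconstructed a self-contained proof.

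Your treatment of part~(i) and of the inactive and boundary index sets in part~(ii) is routine and matches what any direct proof would do. The genuinely delicate point is the active-repeated case, and your two-pronged attack is sound: (a) differentiability of $g$ follows because $\sum_{j}\phi(\sigma_j(M))$ is an absolutely symmetric function of the singular values with $C^1$ symmetric part (the even extension $\phi(|\cdot|)$ is $C^1$ since $\phi'(0)=0$ for $\gamma>0$), so Lewis--Sendov-type results apply; (b) your permutation observation---that within a cluster at level $c>\gamma$ the right one-sided partial derivatives $(\mu_1,\dots,\mu_m)$ and the left ones $(\mu_m,\dots,\mu_1)$ are reversals of each other, while the common factor $(c-\gamma)$ makes the weighted sum permutation-invariant---is exactly what reconciles the two one-sided formulas in \eqref{eq:diffg}. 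This is the key step, and you have it right; the rest is indeed book-keeping.
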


\begin{proof}
  The proof carries over from the nonparametric case~\cite[Proof of Proposition~3.1]{SchwerdtnerV2020}, as the claim only involves the derivative with respect to~$\theta$.
\end{proof}

\noindent Both properties in \cref{prop:lossprops} promote the use of~$\loss$ in an optimization loop to attain a good $\hinflinf$ fit: Property~(i) establishes a connection between a minimization of~$\loss$ and the reduction of the $\hinflinf$ error, while the differentiability established in Property~(ii) facilitates the numerical optimization of~$\loss$ with gradient-based methods, which is explained in the following remark.
% The following remark emphasizes advantages of using the sample-based objective function $\loss$ compared to a direct minimization of the $\hinflinf$ error. %, cf.~\cite[Remark~3.3]{SchwerdtnerV2020} for a discussion in the nonparametric case.
\begin{remark}
  The benefits of a minimization of $\loss$ compared to directly minimizing the $\hinfhinf$ error between the FOM and the ROM directly carry over from~\cite[Remark~3.3]{SchwerdtnerV2020}. Moreover, in the parametric case, the direct minimization of the $\hinflinf$ error poses even greater challenges than the minimization of the $\hinf$ error in the nonparametric case.
  \begin{enumerate}
    \item The $\hinf$ norm computation of a large-scale transfer function (such as the error transfer function between FOM and ROM) to a sufficient accuracy is computationally expensive. There exist several methods for the $\hinf$ norm computation of large-scale (nonparametric) systems such as~\cite{AliyevBMSV2017, Benner2012, Freitag2014, Guglielmi2013, MitO15b}. However, for the computation of the $\hinf\otimes\linf$-error, we are not aware of any fast and reliable algorithms.
    % \item The $\hinflinf$ error does in general not depend smoothly on the parameter vector~$\theta$. This may obstruct the numerical optimization.
    \item Even though the transfer function depends smoothly on the parameter vector and the Laplace variable, the $\hinflinf$ error only depends continuously and not differentiably on the parameter vector~$\theta$. This may obstruct the gradient-based numerical optimization.
    \item When using $\loss$, we include information on the error transfer function at all sample points, at which the error is larger than $\gamma$. In contrast, an evaluation of the $\hinflinf$ error only contains information on the current maximum of error of the transfer functions.
  \end{enumerate}
\end{remark}
Our PMOR method, based on bisection over $\gamma$, is described in \cref{alg:bisection}. In each iteration, after updating $\gamma$, the sample points are updated using our adaptive sampling algorithm described in \Cref{sec:adaptiveSampling}. Then, the objective function $\loss$ is minimized by means of a nonlinear optimization algorithm. We use the Broyden-Fletcher-Goldfarb-Shanno (BFGS) algorithm as implemented in~\cite{MogensenR2018} with default parameters\footnote{see \url{https://julianlsolvers.github.io/Optim.jl/stable/algo/lbfgs/}}. If the optimization leads to an objective value that is lower than a prescribed tolerance~${\varepsilon_2>0}$, that is,~${\mathcal{L}(\theta;H,\gamma,\mathcal{S})<\varepsilon_2}$, %, that is,$$ \frac{1}{\gamma}\sum\limits_{(\omega_i, \param_i) \in \mathcal{S}} \sum\limits_{j=1}^{\min(\dimu, \dimy)}{\left({\left[\sigma_j \left(\tf(\ri \omega_i, \param_i)-\ptf(\ri \omega_i, \param_i; \theta)\right)-\gamma\right]}_+\right)}^2 <\varepsilon_2,$$ 
the solver has managed to reduce the error transfer function at all sample points below the current $\gamma$-level up to the tolerance~$\varepsilon_2$, i.e., $\frac{1}{\gamma}{({ [\sigma_j (\tf(\ri \omega_i, \param_i)-\ptf(\ri \omega_i, \param_i; \theta))-\gamma]}_+)}^2 <\varepsilon_2$ and thus $$\sigma_j \left(\tf(\ri \omega_i, \param_i)-\ptf(\ri \omega_i, \param_i; \theta)\right)^2 < \gamma( \varepsilon_2 +\gamma)$$ for all $(\omega_i,p^{(i)})\in \mathcal{S}$ and $j=1,\ldots,\min(n_u,n_y)$. In this case we reduce the upper bound~$\gamma_{\rm u}$. Otherwise, the lower bound $\gamma_{\rm l}$ is increased. The bisection is terminated, when the relative distance between $\gamma_{\rm u}$ and $\gamma_{\rm l}$ is lower than the bisection tolerance~$\varepsilon_1$.

\begin{algorithm}[tbh]
 \hspace*{\algorithmicindent} \textbf{Input:} FOM transfer function $\tf$, initial ROM transfer function $\ptf( \cdot,  \cdot; \theta_0)$ as in~\eqref{eq:paramTF} with parameter $\theta_0 \in \R^{\dimtheta}$, initial sample point set $\mathcal{S} \subset \mathbb{R}\times \Omega$, upper bound $\gamma_{\rm u} > 0$, bisection tolerance $\varepsilon_1 > 0$, termination tolerance $\varepsilon_2 > 0$ \\
 \hspace*{\algorithmicindent} \textbf{Output:} final ROM parameters $\theta_{\rm fin} \in \R^{n_\theta}$
  \begin{algorithmic}[1]
  \STATE{Set $j:=0$ and $\gamma_{\rm l}:=0$.}
  \WHILE{$(\gamma_{\rm u}-\gamma_{\rm l})/(\gamma_{\rm u}+\gamma_{\rm l}) > \varepsilon_1$}
    \STATE{Set $\gamma=(\gamma_{\rm u}+\gamma_{\rm l})/2$.}
    \STATE{Update sample set $\mathcal{S}$ using~\cref{alg:sampling}.}
    \STATE{Solve the minimization problem $\alpha := \min_{\theta\in \R^{n_\theta}}\loss(\theta;\tf,\gamma,\mathcal{S})$ with minimizer $\theta_{j+1}~\in~\R^{n_\theta}$, initialized at $\theta_j$.}
  \IF{$\alpha > \varepsilon_2$}
    \STATE{Set $\gamma_{\rm l}:=\gamma$.}
    \ELSE
    \STATE{Set $\gamma_{\rm u}:=\gamma$.}
  \ENDIF
  \STATE{Set $j:=j+1$.}
  \ENDWHILE
  \STATE{Set $\theta_{\rm fin}:= \theta_j$.}
  \caption{SOBMOR-$\hinf$}\label{alg:bisection}
  \end{algorithmic}
\end{algorithm}

\subsection{Gradient Computation}

To compute the gradient of~$\loss$ as defined in \eqref{eq:loss} analytically, we derive the gradients of the singular values of the difference of the reduced order transfer function~\eqref{eq:paramTF} and the given transfer function~\eqref{eq:transferfunc} at a fixed sample point with respect to~$\theta$. These computations are a straightforward adaptation of the non-parametric case \cite{SchwerdtnerV2020}. In view of the following theorem, we briefly recall the transfer function of the parameterized system as defined in~\eqref{eq:paramTF}:
$$ \ptf(s, p; \theta) = \pC(p, \theta){\left(sI-\pA(p, \theta)\right)}^{-1}\pB(p, \theta)+\pD(p, \theta),$$
where $\pA(p,\theta) = (\pJ(p,\theta)-\pR(p,\theta))\pQ(p,\theta)$. As both FOM and ROM are assumed to be asymptotically stable, $sI-\fA(p)$ and $sI-\pA(p, \theta)$ are invertible for any~${s\in \overline{\C^+}}$ and~${p\in \Omega}$. In the following theorem, for ease of notation, we assume $\kappaB = \kappaC = \kappaD = \kappaJ = \kappaR = \kappaQ = \kappa$. An extension to the general case with distinct numbers of ansatz functions is straightforward.

\begin{theorem} 
  \label{thm::gradients}
  Let $\theta_0 \in \R^{n_\theta}$, $s_0 \in \overline{\C^+}$, and $p^{(0)} \in \Omega$ be given. %a
  % Consider $s \in \overline{\C^+}$ and $p\in \Omega$ such that $sI-\fA(p)$ and $sI-\pA(p,\theta_0)$ are invertible.
  Assume that the $j$-th singular value of $\tf(s_0,p^{(0)})-\ptf(s_0,p^{(0)};\theta_0)$ is nonzero and simple and denote by $\vv$ and $\uu$ the corresponding left and right singular vectors. Then the function $${\theta \mapsto \sigma_j(\tf(s_0,p^{(0)})-\ptf(s_0,p^{(0)};\theta))}$$ is differentiable in a neighborhood of $\theta_0$
  %and suppose that
  %	\begin{align*}
  %	\dyn&:=s_0 I_{n_x} -\left(\pJ_0-\pR_0\right)\pQ_0
  % \inp(\theta)&:=\pB(\theta), \\
  %\out&:= \pB(\theta_0)^{\T} \pQ(\theta_0),
  %	\end{align*}
  and the gradient is given by
  \begin{align*}
    \nabla_\theta \sigma_j(\tf(s_0,p^{(0)})-\ptf(s_0,p^{(0)};\theta_0))=\begin{bmatrix}
      \gradB^\T,\, \gradC^\T,\, \gradD^\T, \gradJ^\T,\gradR^\T,\gradQ^\T
    \end{bmatrix}^\T,
  \end{align*}
  where, according to the partitioning \eqref{eq:partitioning}, 
  \begin{alignat*}{2}
    \gradB^\T &= \begin{bmatrix}
      \dd \theta_{B_0},\ldots,\dd \theta_{B_{\kappaB}}\end{bmatrix}, \qquad \gradC^\T &&= \begin{bmatrix}
    \dd \theta_{C_0},\ldots,\dd \theta_{C_{\kappaC}}\end{bmatrix},\\
      \gradD^\T &= \begin{bmatrix}
        \dd \theta_{D_0},\ldots,\dd \theta_{D_{\kappaD}}\end{bmatrix}, \qquad \gradJ^\T&& = \begin{bmatrix}
      \dd \theta_{J_0},\ldots,\dd \theta_{J_{\kappaJ}}\end{bmatrix},\\
        \gradR^\T &= \begin{bmatrix}
          \dd \theta_{R_0},\ldots,\dd \theta_{R_{\kappaR}}\end{bmatrix}, \qquad \gradQ^\T&& = \begin{bmatrix}
          \dd \theta_{Q_0},\ldots,\dd \theta_{Q_{\kappaQ}}\\
        \end{bmatrix},
        \end{alignat*}
        with
        \begin{subequations}
          \begin{align}
            \gradBj &= -\Real (\ftv({(\vv\uu^\H\pC(p^{(0)}, \theta_0)\dyn^{-1}{f^B_j(p^{(0)})})}^\top)) \label{eq:gradB}  &&\text{ for } j \in \{1, \dots, \kappaB \},\\
            \gradCj &= -\Real (\ftv({(\dyn^{-1}\pB(p^{(0)},\theta_0)\vv\uu^\H{f^C_j(p^{(0)})})}^\top)) \label{eq:gradC}  &&\text{ for } j \in \{1, \dots, \kappaC \},\\
            \gradDj &= -\Real(\ftv({(\vv\uu^\H f^D_j(p^{(0)}))}^\top)) \label{eq:gradD} &&\text{ for } j \in \{1, \dots, \kappaD \},\\
            \gradRj &= \phantom{-}\Real\left( \utv \left(Y_1f_j^R\vtu(\theta_R) +{(Y_1f_j^R)}^\T\vtu(\theta_R) \right)\right)\label{eq:gradR} &&\text{ for } j \in \{1, \dots, \kappaR \},\\
            \gradJj &= \phantom{-}\Real( \sutv(Y_1f_j^J)-\sutv({(Y_1f_j^J)}^\top)) \label{eq:gradJ} &&\text{ for } j \in \{1, \dots, \kappaJ \},\\
            \gradQj&= -\Real( \utv( {(Y_2f_j^Q)}^\top \vtu(\theta_0)+ Y_2f_j^Q\vtu(\theta_0))),\label{eq:gradQ}  &&\text{ for } j \in \{1, \dots, \kappaQ \}
          \end{align}
        \end{subequations}
and 
        \begin{align*}
          \dyn&= s_0I-\pA(p^{(0)},\theta_0),\\
          Y_1 &=\pQ(p^{(0)},\theta_0)\dyn^{-1}  \pB(p^{(0)},\theta_0) \vv \uu^\H \pC(p^{(0)},\theta_0) \dyn^{-1},\\
          Y_2 &=\dyn^{-1}\pB(p^{(0)},\theta_0)\vv\uu^\H\pC(p^{(0)},\theta_0)\dyn^{-1} (\pJ(p^{(0)},\theta) - \pR(p^{(0)},\theta_0)).
          %	\dQ &:=\dyn^{-1} \pB_0  \vv  \left(\uu^\H \pB_0^\T+ \uu^{\H}  \pB_0^{\T} \pQ_0  \dyn^{-1}  \left(\pJ_0-\pR_0\right)\right),
        \end{align*}
      \end{theorem}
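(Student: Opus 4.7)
The plan rests on two standard facts combined with a careful chain-rule computation. The first is that a simple, nonzero singular value is a smooth function of the matrix entries in a neighborhood of the matrix, with directional derivative $d\sigma_j(M)[dM] = \Real(\vv^H (dM) \uu)$ where $\vv,\uu$ are the associated unit singular vectors (classical analytic perturbation theory). The second is that $\theta \mapsto \ptf(s_0,p^{(0)};\theta)$ is smooth in a neighborhood of $\theta_0$: by \eqref{eq:ansatz} and the reshaping operations, the matrices $\pB,\pC,\pD,\pJ,\pR,\pQ$ evaluated at $(p^{(0)},\theta)$ are polynomial in $\theta$; hence so is $\pA = (\pJ-\pR)\pQ$; and $\dyn = s_0 I - \pA$ remains invertible on a neighborhood of $\theta_0$ by continuity of eigenvalues together with the asymptotic-stability hypothesis. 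Composition yields the claimed differentiability of $\theta \mapsto \sigma_j(\tf(s_0,p^{(0)}) - \ptf(s_0,p^{(0)};\theta))$.

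For the explicit formulas, apply the chain rule to $\ptf = \pC\dyn^{-1}\pB + \pD$, using $d(\dyn^{-1}) = \dyn^{-1}(d\pA)\dyn^{-1}$ (the sign flip from differentiating the inverse cancels the minus in $d\dyn = -d\pA$). This gives
\begin{equation*}
d\ptf = (d\pC)\dyn^{-1}\pB + \pC\dyn^{-1}(d\pA)\dyn^{-1}\pB + \pC\dyn^{-1}(d\pB) + d\pD,
\end{equation*}
and the product rule on $\pA = (\pJ-\pR)\pQ$ yields $d\pA = (d\pJ - d\pR)\pQ + (\pJ-\pR)\,d\pQ$. Combined with the singular-value derivative formula and the minus sign from $\tf - \ptf$, this produces six contributions, one per block of $\theta$, each carrying the scalar prefactor $f_i^{\bullet}(p^{(0)})$ from the corresponding summand of \eqref{eq:ansatz}; this is what produces the factors $f_j^B, f_j^C, \dots$ appearing in \eqref{eq:gradB}--\eqref{eq:gradQ}.

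The final step translates each matrix-valued directional derivative into a gradient with respect to the free parameter entries via the reshaping operators. Using cyclic trace invariance $\vv^H K (dX) L \uu = \operatorname{tr}((dX)\,L \uu \vv^H K)$, the gradient with respect to the free entries of $X$ equals the transpose of the matrix paired with $dX$ inside the trace. For $B_i = \vtf_{\dimu}(\theta_{B_i}), C_i = \vtf_{\dimy}(\theta_{C_i}), D_i = \vtf_{\dimy}(\theta_{D_i})$ this directly produces the $\ftv$-based expressions \eqref{eq:gradB}--\eqref{eq:gradD}. For $R_i = \vtu(\theta_{R_i})\vtu(\theta_{R_i})^\T$ (and analogously for $Q_i$), the product rule for a quadratic form yields two summands of the form $Y\vtu(\theta_{R_i})$ and $Y^\T\vtu(\theta_{R_i})$, whose upper-triangular extraction via $\utv$ assembles into \eqref{eq:gradR} and \eqref{eq:gradQ}. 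For $J_i = \vtsu(\theta_{J_i}) - \vtsu(\theta_{J_i})^\T$, the skew-symmetric structure converts the derivative into a difference of $\sutv$-extractions as in \eqref{eq:gradJ}.

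The main obstacle is not analytic but combinatorial: one must verify that the auxiliary matrices $Y_1$ and $Y_2$ defined in the statement emerge exactly from the cyclic rearrangements under the trace after $d\pA$ has been expanded via the product rule, and that the $\utv$ and $\sutv$ extractions correctly encode the symmetric and antisymmetric constraints imposed by the parameterization. Since the model parameter $p^{(0)}$ enters only as a scalar multiplier in each summand of \eqref{eq:ansatz}, no additional calculus in the parameter is needed, and the argument is a direct parametric extension of the nonparametric case treated in \cite{SchwerdtnerV2020}.
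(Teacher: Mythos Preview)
Your proposal is correct and follows essentially the same route as the paper's proof: both combine the standard derivative formula for a simple singular value with a chain-rule differentiation of $\ptf$ through the parameterization \eqref{eq:ansatz}, then convert the resulting trace identities into the $\ftv$/$\utv$/$\sutv$ expressions via the reshaping lemma of \cite{SchwerdtnerV2020}. The only stylistic difference is that the paper derives the derivative of $\dyn^{-1}$ for the $R$ and $Q$ blocks via an explicit Neumann-series expansion of the perturbed inverse, whereas you invoke the compact identity $d(\dyn^{-1}) = \dyn^{-1}(d\pA)\dyn^{-1}$ directly; these are equivalent, and your formulation is slightly more economical.
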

\begin{proof}
  The proof is a straightforward adaption of~\cite[Proof of Theorem 3.1]{SchwerdtnerV2020} and provided in \Cref{sec:gradcomp}.
\end{proof}

\subsection{Adaptive Sampling}\label{sec:adaptiveSampling}

A zero value of $\loss$ implies that the error is less than $\gamma$ at all sample points. In this way, the distribution of the sample points plays an essential role in the success of our method in minimizing the overall $\hinflinf$ error. If the sample points are chosen poorly, peaks in the error transfer function may be missed entirely and a minimization of $\loss$ may not lead to a small $\hinflinf$ error. However, just choosing an abundance of sample points leads to a high computational burden to construct the ROM, especially in the parametric case, in which a multi-dimensional sample space must be considered. This challenge of choosing the sample points occurs already for nonparametric SOBMOR and was treated in~\cite{Schwerdtner2021}, in which an adaptive sampling procedure that determines frequency sample points (on the imaginary axis) was introduced. For another PMOR sampling strategy based on an error estimate, we refer to~\cite{Binder2021}.

The method in~\cite{Schwerdtner2021} is based on~\cite[Lemma~3]{ApkarianN2018}, which provides a criterion to check if a function $\phi : \R  \rightarrow [0, \infty)$ exceeds a prescribed tolerance from a piecewise linear interpolation of $\phi$ on an interval between two sample points. In~\cite{Schwerdtner2021}, $\phi$ is chosen as the spectral norm of the (nonparametric) error transfer function $\ERR$, i.\,e.,
\begin{align*}
  \ERR: \R  \rightarrow [0, \infty), \quad \omega \mapsto \|\tf(\ri \omega) - \tfr(\ri \omega; \theta)\|_2.
\end{align*}
We propose a method which adds sample points recursively until the deviation between $\ERR$ and its piecewise linear interpolation is sufficiently small for all frequency sample points. This then ensures that the error (over the continuous frequency domain) is captured accurately enough within our set of sample points.

We treat the frequency and parameter domain as a single, $\ell$-dimensional sample space, where $\ell = \dimp +1$ and extend the one-dimensional method in~\cite{Schwerdtner2021} to $\ell$-dimensional grids. In particular, in what follows, we only consider one sample vector ${z = (\omega, p) \in \R \times \pdom}$.

\begin{lemma}[Guarantees along the edges]\label{lem:guaranteesAlongEdges}
  Consider $\phi : \R^{\ell} \to [0,\infty)$ continuously differentiable. Let $h>0$ and consider two neighboring grid points $x,y\in \mathbb{R}^\ell$ with $(x,y) = (x,x+he_i)$, $i=1,\ldots,\ell$, and a first-order upper bound satisfying $L[x,y] \geq \left\vert\frac{\partial}{\partial z_i} \phi(z)\right\vert$ for all $z\in [x,y]$, where $[x,y]$ is the line segment between $x$ and $y$.
  Then if
  \[
    L[x,y]h <2\gamma^* + 2\tau - \phi(y)-\phi(x)
  \]
  with $\gamma^* \geq \max \{\phi(x),\phi(y)\}$, we have
  \[
    \phi(z) < \gamma^* + \tau \qquad \text{for all }z \in [x,y].
  \]
\end{lemma}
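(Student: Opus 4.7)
The plan is to reduce the statement to a one-dimensional claim along the line segment $[x,y]$ and then apply the fundamental theorem of calculus from both endpoints to obtain a symmetric bound whose average matches the hypothesis exactly.

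First I would parameterize the edge by $\psi(t):=\phi(x+te_i)$ for $t\in[0,h]$. Since $\phi$ is continuously differentiable, $\psi$ is $C^1$ with $\psi'(t)=\partial_{z_i}\phi(x+te_i)$, and the hypothesis on $L[x,y]$ gives the pointwise estimate $|\psi'(t)|\leq L[x,y]$ on $[0,h]$. Note that $\psi(0)=\phi(x)$ and $\psi(h)=\phi(y)$.

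Next, I would apply the fundamental theorem of calculus in both directions. Integrating forward from $0$ yields
\[
\psi(t)=\psi(0)+\int_0^t\psi'(s)\,ds\;\leq\;\phi(x)+L[x,y]\,t,
\]
and integrating backward from $h$ yields
\[
\psi(t)=\psi(h)-\int_t^h\psi'(s)\,ds\;\leq\;\phi(y)+L[x,y]\,(h-t).
\]
Summing the two inequalities eliminates the dependence on $t$: $2\psi(t)\leq \phi(x)+\phi(y)+L[x,y]\,h$. The strict inequality $L[x,y]\,h<2\gamma^*+2\tau-\phi(x)-\phi(y)$ from the hypothesis then immediately gives $\psi(t)<\gamma^*+\tau$ for every $t\in[0,h]$, which is the desired conclusion $\phi(z)<\gamma^*+\tau$ on $[x,y]$.

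There is no real obstacle here: the only subtlety is recognizing that using a one-sided bound (e.g.\ only from $x$) would require the stronger assumption $L[x,y]\,h<\gamma^*+\tau-\phi(x)$, which need not hold under the stated hypothesis; the two-sided averaging trick is what makes the constant $2$ on the right-hand side natural. I would also note in passing that the assumption $\gamma^*\geq\max\{\phi(x),\phi(y)\}$ guarantees that the right-hand side $2\gamma^*+2\tau-\phi(x)-\phi(y)$ is nonnegative, so the condition on $L[x,y]\,h$ is not vacuous.
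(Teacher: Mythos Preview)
Your proof is correct and follows essentially the same approach as the paper: bound $\phi(z)$ from both endpoints and add the two estimates to obtain $2\phi(z)\le \phi(x)+\phi(y)+L[x,y]h$, then invoke the strict hypothesis. The only cosmetic difference is that the paper applies the mean value theorem at $x$ and $y$ rather than integrating $\psi'$, but the resulting inequality and the two-sided averaging trick are identical.
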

\begin{proof}
  %This proof follows the one-dimensional argument of the proof in~\cite{ApkarianN2018}. We provide a proof based on the mean value theorem as this extends also to higher dimensions. 
  By means of the mean value theorem, consider $\xi_1,\xi_2 \in  [x,y]$ such that
%  \begin{subequations}
  \begin{align}
    \phi(z) - \phi(x) = \nabla \phi(\xi_1)^\top (z-x) \quad \text{and} \quad\phi(z) - \phi(y) = \nabla \phi(\xi_2)^\top (z-y). \label{eq:proof}
  \end{align}
%    \end{subequations}
  As $z-x = x - x + she_i = she_i$ and $z-y = x+she_i - x + he_i = (s-1)he_i$  for some $s\in (0,1)$, we have, adding the two equalities in \eqref{eq:proof} that
  \begin{align*}
    2\phi(z) - \phi(x) -\phi(y) &= h\left((s\frac{\partial}{\partial z_i} \phi(\xi_1) + (s-1)\frac{\partial}{\partial z_i} \phi(\xi_2)\right)\\
                                & \leq h\left(s\left\vert \frac{\partial}{\partial z_i} \phi(\xi_1)\right\vert + (1-s) \left\vert \frac{\partial}{\partial z_i} \phi(\xi_2)\right\vert\right).
  \end{align*}
  The derivatives can now be estimated by means of the first-order upper bound and the corresponding assumption, yielding 
  \begin{align*}
    2\phi(z) - \phi(x) -\phi(y) \leq L[x,y]h< 2\gamma^* + 2\tau - \phi(y)-\phi(x).
  \end{align*}
  Canceling the term $ - \phi(x) -\phi(y)$ and dividing by two yields the result.
\end{proof}
The following lemma also provides an error estimate inside the respective cells of the sampling grid. %To this end, we consider one cell given as a hyperrectangle $H$, that is, $H= \bigtimes_{i=1}^{\ell} [a_i,b_i]$ for $a_i,b_i \in \mathbb{R}$ with $a_i<b_i$, $i=1,\ldots,\ell$. 
\begin{lemma}[Guarantees inside the cells]\label{lem:guaranteesInside}
  Consider $\phi : \R^{\ell} \to [0,\infty)$ continuously differentiable and an $\ell$-dimensional hyperrectangle $H= \bigtimes_{i=1}^{\ell} [a_i,b_i]$ for $a_i,b_i \in \mathbb{R}$ with $a_i<b_i$, $i=1,\ldots,\ell$ with corners $\{z_1,\ldots,z_{2^\ell}\}$ and side lengths $\delta z_i=b_i-a_i$, $i=1,\ldots,\ell$. Further consider $\ell$ first-order upper bounds satisfying $L_i[z_1,\ldots,z_{2^\ell}] \geq |(\nabla \phi(z))_i|$ for all $z\in H$ and $i=1,\ldots,\ell$.
  Then if
  $$
  \sum_{j=1}^{\ell} L_i[z_1,\ldots,z_{2^\ell}]\delta z_j<2\gamma^* + 2\tau - \frac{2\sum_{i=1}^{2^\ell}\phi(z_i)}{2^{\ell}} \qquad \text{for all }z\in H
  $$
  with $\gamma^* \geq \max_{i=1,\ldots,2^\ell} \phi(z_i)$, we have
  $$
  \phi(z) < \gamma^* + \tau \qquad \text{for all }z \in H.
  $$
\end{lemma}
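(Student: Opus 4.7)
The plan is to mimic the one-dimensional argument of \cref{lem:guaranteesAlongEdges} but to average over all $2^\ell$ corners of the hyperrectangle instead of just two endpoints. The key observation is a geometric cancellation: although the distance from a generic interior point $z$ to an individual corner depends on where $z$ sits in $H$, the sum of these distances over all corners, taken coordinate-wise, reduces to a quantity that is independent of $z$.

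Concretely, first I would fix an arbitrary $z\in H$ and, for each corner $z_i$, apply the multivariable mean value theorem on the segment $[z_i,z]\subset H$ (which is contained in $H$ by convexity) to obtain $\xi_i\in[z_i,z]$ with
\begin{equation*}
\phi(z) - \phi(z_i) = \nabla\phi(\xi_i)^{\!\top}(z - z_i).
\end{equation*}
Invoking the first-order bounds $L_j \geq |(\nabla\phi(\xi_i))_j|$, valid on all of $H$, this yields $\phi(z) - \phi(z_i) \leq \sum_{j=1}^{\ell} L_j\,|z_j - (z_i)_j|$. Summing over all corners $i=1,\ldots,2^\ell$ gives
\begin{equation*}
2^\ell \phi(z) - \sum_{i=1}^{2^\ell}\phi(z_i) \;\leq\; \sum_{j=1}^{\ell} L_j \sum_{i=1}^{2^\ell} |z_j - (z_i)_j|.
\end{equation*}

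The crucial combinatorial step is to evaluate the inner sum. For each coordinate $j$, exactly $2^{\ell-1}$ of the corners satisfy $(z_i)_j = a_j$ and the remaining $2^{\ell-1}$ satisfy $(z_i)_j = b_j$, hence
\begin{equation*}
\sum_{i=1}^{2^\ell} |z_j - (z_i)_j| \;=\; 2^{\ell-1}(z_j - a_j) + 2^{\ell-1}(b_j - z_j) \;=\; 2^{\ell-1}\delta z_j,
\end{equation*}
which is independent of $z$. Dividing by $2^\ell$ produces the clean bound
\begin{equation*}
\phi(z) \;\leq\; \frac{1}{2^\ell}\sum_{i=1}^{2^\ell}\phi(z_i) + \frac{1}{2}\sum_{j=1}^{\ell} L_j\,\delta z_j,
\end{equation*}
and substituting the hypothesis on $\sum_j L_j\,\delta z_j$ directly collapses the right-hand side to $\gamma^*+\tau$, establishing the claim.

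The only real obstacle is recognizing the corner-distance identity above; once identified, everything else is a short rearrangement. I would also briefly remark that the ``for all $z\in H$'' qualifier in the hypothesis is vacuous (neither side depends on $z$) and should be read as a fixed numerical condition on the data $L_j$, $\delta z_j$, $\phi(z_i)$, $\gamma^*$, $\tau$. Finally, it is worth noting that the bound specializes to \cref{lem:guaranteesAlongEdges} when $\ell=1$ and $a_1=x$, $b_1=y$, since then the two corners contribute $\phi(x)+\phi(y)$ and the single edge length is $h$, recovering exactly the inequality $L[x,y]h < 2\gamma^* + 2\tau - \phi(x)-\phi(y)$.
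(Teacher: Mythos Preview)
Your proposal is correct and follows essentially the same route as the paper: apply the mean value theorem on the segment from $z$ to each corner, sum over all $2^{\ell}$ corners, and exploit the coordinate-wise cancellation in $\sum_i |z_j-(z_i)_j|$. Your version is in fact slightly tidier, since you compute that inner sum \emph{exactly} as $2^{\ell-1}\delta z_j$, whereas the paper records it only as an inequality; your remarks on the vacuous ``for all $z\in H$'' qualifier and on the $\ell=1$ specialization are also apt.
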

\begin{proof}
  Let $z\in H$ and consider the mean value theorem for all corners, i.e., for $i\in \{1\,\ldots,2^\ell\}$ there are $\xi_i \in [z,z_i]$ such that
  \begin{align}
  \label{eq:proofequation}
    \phi(z)-\phi(z_i) = \nabla \phi(\xi_i)^\top(z-z_i).
  \end{align}
  Summing up \eqref{eq:proofequation} over all corners $i = 1,\ldots,2^\ell$, we obtain
  \begin{align}
    2^\ell \phi(z) - \sum_{i=1}^{2^\ell} \phi(z_i) %= \sum_{i=1}^{2^\ell} \nabla \phi(\xi_i)^\top(z-z_i) 
                                                   %= \sum_{i=1}^{2^\ell} \sum_{j=1}^{\ell} \left(\nabla \phi(\xi_i)\right)_j (z_j-(z_i)^j)
                                                   \leq \sum_{i=1}^{2^\ell} \sum_{j=1}^{\ell} \left\vert \left(\nabla \phi(\xi_i)\right)_j (z_j-(z_i)^j)\right\vert\label{eq:proofcube1}
  \end{align}
  Consider now a fixed corner $i\in \{1,\ldots,2^\ell\}$. Then, as,
  \begin{align*}
    \left\vert (z_j-(z_i)^j)\right\vert = \begin{cases} z_j-(z_i)^j,\quad \text{if } z_j>(z_i)^j, \\
      (z_i)^j-z_j,\quad \text{if } z_j\leq (z_i)^j ,\\
    \end{cases}
  \end{align*}
  we have that $\sum_{i=1}^{2^\ell}\sum_{j=1}^{\ell}\left\vert (z_j-(z_i)^j)\right\vert \leq \frac{2^\ell}{2}\sum_{j=1}^{\ell} \delta z_j$, as $z_i$ are the corners of the hyperrectangle and $z\in H$. Thus,
  \begin{align*}
    \sum_{i=1}^{2^\ell} \sum_{j=1}^{\ell} \left\vert \left(\nabla \phi(\xi_i)\right)_j (z_j-(z_i)^j)\right\vert &\leq  \sum_{i=1}^{2^\ell} \sum_{j=1}^{\ell} \max_{\xi\in H} \left\vert(\nabla(\xi))_i\right\vert \left\vert(z_j-(z_i)^j)\right\vert\\
                                                                                                                &\leq \frac{2^\ell}{2} \sum_{j=1}^{\ell} L_i[z_1,\ldots,z_{2^\ell}]\delta z_j
                                                                                                                < 2^\ell\left(\gamma^* + \tau\right) - \sum_{j=1}^{2^\ell} \phi(z_i)
  \end{align*}
  Invoking \eqref{eq:proofcube1}, cancelling $\sum_{j=1}^{2^\ell} \phi(z_i)$ and dividing by $2^{\ell}$, the claim follows.
\end{proof}

Our adaptive sampling method is described \cref{alg:sampling}. It consists of two steps, that are applied recursively until no new sampling points are added. First, we check along each edge, whether a new sample point must be added based on our criterion from \cref{lem:guaranteesAlongEdges}. For this, we set $\phi$ to the spectral norm of the parametric error transfer function
\begin{align*}
  \ERR_p: \R \times \pdom  \rightarrow [0, \infty), \quad (\omega, p) \mapsto \|\tf(\ri \omega, p) - \tfr(\ri \omega, p; \theta)\|_2.
\end{align*}
For the upper bound of the derivative on an edge we use the maximum of the difference quotients between the center of the edge and its endpoints. After that, if a new point is added, we add edges between the added point and its neighboring points. In this way, we add new points not only on the edges, but also cover the domain using the added edges. In \cref{alg:sampling} the grid $\mathcal{G} = (\mathcal{V}, \mathcal{E})$ consists of a set of vertices $\mathcal{V}\subset\R^\ell$ and edges $\mathcal{E}$. We denote the vertices of edge $e\in \mathcal{E}$ by $(z_1, z_2)$ and the edge connecting the nodes $z_1,z_2\in \mathcal{V}$ by $v(e)$. We define the neighbors of a vertex as all points that differ from the vertex in only one component (i.e., have \textit{Hamming distance} one~\cite[Chapter 2]{Niederreiter2002}) and additionally are not separated by an edge from the given vertex\footnote{an implementation of \cref{alg:sampling} is available at~\url{https://github.com/Algopaul/AdaptiveMesh}}.

% We use Alg.~\ref{alg:samp} to update the sample points that are used in the objective in Alg.~\ref{alg:main} to achieve a good $\hinf$ fit. The sampling update in Alg.~\ref{alg:samp} works as follows. Between each pair of adjacent sample points, a new sample $\ri \omega_{\text{test}}$ is chosen and the difference quotients between this new sample and the two respective adjacent points are computed. The maximum of these difference quotients serves as estimate for the first-order upper bound in the respective interval. The tolerance ($\tau$ in Lemma~\ref{lem:accuracy}) is set to the current level $\gamma$. If now the criterion given in the inequality \eqref{eq:criterion} which guarantees that the error between the two sample points is less than $\gamma^*+\tau$ is not met, the extra sample point $\ri \omega_{\text{test}}$ is added to the sample point set. Due to the fact that multiple additional points may be required between two original sample points, we iterate through the set of sample points until no more new points are added. Note that in Alg.~\ref{alg:samp}, for simplicity, we only refer to the imaginary parts of the sample points.

\begin{algorithm}[bt]
\caption{Adaptive sampling}\label{alg:sampling}
 \hspace*{\algorithmicindent} \textbf{Input:} Error function $\ERR_p : \mathbb{R}^\ell \to [0,\infty)$, grid $\mathcal{G} = (\mathcal{V},\mathcal{E})$, error level $\gamma > 0$. \\
 \hspace*{\algorithmicindent} \textbf{Output:} Adapted grid $\mathcal{G_+}$
\begin{algorithmic}[1]
	\STATE Set newPointsAdded = true.
	\WHILE{newPointsAdded}
	\STATE Set newPointsAdded = false.
	\FOR{$e \in \mathcal{E}$}
	\STATE Set $(z_1,z_2) = v(e)$.
	\STATE Set $z_\text{test} = z_1 + \tfrac{z_2-z_1}{2}$.
	\STATE Set $d_1 := \left(\ERR_p(z_\text{test}) -\ERR_p(z_1)\right)/(z_\text{test} - z_1)$.
	\STATE Set $d_2 := \left(\ERR_p(z_2 - z_\text{test})\right)/(z_2 - z_\text{test})$.
	\STATE Set $\gamma^* = \max \{\ERR_p(z_1),\ERR_p(z_2)\}$.
	\STATE Set $d^* = \max \{d_1,d_2\}$.
	\IF{$d^*(z_2 - z_1) \geq 2(\gamma + \gamma^*) - \ERR_p(z_1) - \ERR_p(z_2)$}
	\STATE Set $\mathcal{V} = \mathcal{V} \cup \{z_\text{test}\}$.
	\STATE Set newPointsAdded = true.
	\STATE Set $\mathcal{E} = \mathcal{E} \setminus (z_1,z_2)$
	\FOR{$\hat{z} \in \text{neighbor}(z_\text{test})$}
	\IF{$(z_\text{test},\hat{z})\notin \mathcal{E}$}
	\STATE Set $\mathcal{E} = \mathcal{E} \cup (z_\text{test},\hat{z})$.
	\ENDIF
	\ENDFOR
	\ENDIF
	\ENDFOR
	\ENDWHILE
        \STATE Return refined grid $\mathcal{G_+} = (\mathcal{V}, \mathcal{E})$.
%	\Ensure $y = x^n$
%	\State $y \gets 1$
%	\State $X \gets x$
%	\State $N \gets n$
%	\While{$N \neq 0$}
%	\If{$N$ is even}
%	\State $X \gets X \times X$
%	\State $N \gets \frac{N}{2}$  \Comment{This is a comment}
%	\ElsIf{$N$ is odd}
%	\State $y \gets y \times X$
%	\State $N \gets N - 1$
%	\EndIf
%	\EndWhile
\end{algorithmic}
\end{algorithm}

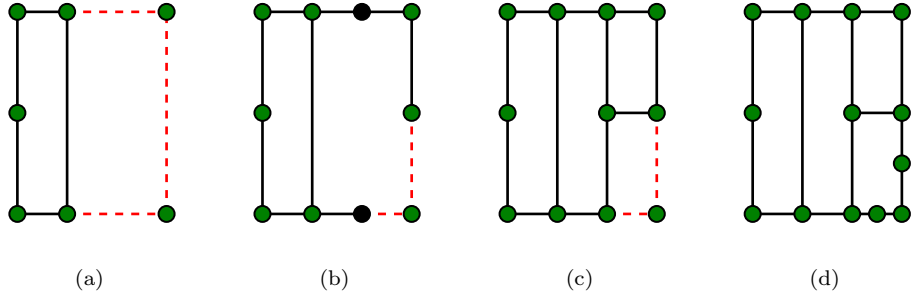
\begin{figure}[ht]
  \centering
    \begin{tabular}{cccc}
      % Recommended preamble:
% \usetikzlibrary{arrows.meta}
% \usetikzlibrary{backgrounds}
% \usepgfplotslibrary{patchplots}
% \usepgfplotslibrary{fillbetween}
% \pgfplotsset{%
% layers/standard/.define layer set={%
%     background,axis background,axis grid,axis ticks,axis lines,axis tick labels,pre main,main,axis descriptions,axis foreground%
% }{grid style= {/pgfplots/on layer=axis grid},%
%     tick style= {/pgfplots/on layer=axis ticks},%
%     axis line style= {/pgfplots/on layer=axis lines},%
%     label style= {/pgfplots/on layer=axis descriptions},%
%     legend style= {/pgfplots/on layer=axis descriptions},%
%     title style= {/pgfplots/on layer=axis descriptions},%
%     colorbar style= {/pgfplots/on layer=axis descriptions},%
%     ticklabel style= {/pgfplots/on layer=axis tick labels},%
%     axis background@ style={/pgfplots/on layer=axis background},%
%     3d box foreground style={/pgfplots/on layer=axis foreground},%
%     },
% }

\begin{tikzpicture}[/tikz/background rectangle/.style={fill={rgb,1:red,1.0;green,1.0;blue,1.0}, draw opacity={1.0}}, show background rectangle]
\begin{axis}[point meta max={nan}, point meta min={nan}, legend cell align={left}, legend columns={1}, title={}, title style={at={{(0.5,1)}}, anchor={south}, font={{\fontsize{14 pt}{18.2 pt}\selectfont}}, color={rgb,1:red,0.0;green,0.0;blue,0.0}, draw opacity={1.0}, rotate={0.0}}, legend style={color={rgb,1:red,0.0;green,0.0;blue,0.0}, draw opacity={1.0}, line width={1}, solid, fill={rgb,1:red,1.0;green,1.0;blue,1.0}, fill opacity={1.0}, text opacity={1.0}, font={{\fontsize{8 pt}{10.4 pt}\selectfont}}, text={rgb,1:red,0.0;green,0.0;blue,0.0}, cells={anchor={center}}, at={(1.02, 1)}, anchor={north west}}, axis background/.style={fill={rgb,1:red,1.0;green,1.0;blue,1.0}, opacity={1.0}}, anchor={north west}, xshift={1.0mm}, yshift={-1.0mm}, width={0.25\textwidth}, height={0.3\textwidth}, scaled x ticks={false}, xlabel={}, x tick style={draw=none}, x tick label style={color={rgb,1:red,0.0;green,0.0;blue,0.0}, opacity={1.0}, rotate={0}}, xlabel style={at={(ticklabel cs:0.5)}, anchor=near ticklabel, at={{(ticklabel cs:0.5)}}, anchor={near ticklabel}, font={{\fontsize{11 pt}{14.3 pt}\selectfont}}, color={rgb,1:red,0.0;green,0.0;blue,0.0}, draw opacity={1.0}, rotate={0.0}}, xmajorgrids={false}, xmin={0.91}, xmax={4.09}, xtick={}, xticklabels={}, xtick align={inside}, xticklabel style={}, x grid style={}, axis x line*={left}, x axis line style={draw=none}, scaled y ticks={false}, ylabel={}, y tick style={draw=none}, y tick label style={color={rgb,1:red,0.0;green,0.0;blue,0.0}, opacity={1.0}, rotate={0}}, ylabel style={at={(ticklabel cs:0.5)}, anchor=near ticklabel, at={{(ticklabel cs:0.5)}}, anchor={near ticklabel}, font={{\fontsize{11 pt}{14.3 pt}\selectfont}}, color={rgb,1:red,0.0;green,0.0;blue,0.0}, draw opacity={1.0}, rotate={0.0}}, ymajorgrids={false}, ymin={0.98}, ymax={2.03}, ytick={}, yticklabels={}, ytick align={inside}, yticklabel style={font={{\fontsize{8 pt}{10.4 pt}\selectfont}}, color={rgb,1:red,0.0;green,0.0;blue,0.0}, draw opacity={1.0}, rotate={0.0}}, y grid style={}, axis y line*={left}, y axis line style={draw=none}, colorbar={false}]
    \addplot[color={rgb,1:red,0.0;green,0.0;blue,0.0}, name path={3c0df0d4-cfc8-4c32-bf24-f638b67173ce}, draw opacity={1.0}, line width={1}, solid, forget plot]
        table[row sep={\\}]
        {
            \\
            1.0  1.0  \\
            2.0  1.0  \\
        }
        ;
    \addplot[color={rgb,1:red,0.0;green,0.0;blue,0.0}, name path={479173f6-006b-4087-bd57-ac88b09dc9dc}, draw opacity={1.0}, line width={1}, solid, forget plot]
        table[row sep={\\}]
        {
            \\
            1.0  1.0  \\
            1.0  2.0  \\
        }
        ;
    \addplot[color={rgb,1:red,0.0;green,0.0;blue,0.0}, name path={6e3e5afa-1960-4f8c-bf41-4bc1c9817c73}, draw opacity={1.0}, line width={1}, solid, forget plot]
        table[row sep={\\}]
        {
            \\
            1.0  2.0  \\
            2.0  2.0  \\
        }
        ;
    \addplot[color={rgb,1:red,0.0;green,0.0;blue,0.0}, name path={de0bcf34-8cca-4c82-912d-b41d71503548}, draw opacity={1.0}, line width={1}, solid, forget plot]
        table[row sep={\\}]
        {
            \\
            2.0  1.0  \\
            2.0  2.0  \\
        }
        ;
    \addplot[color={rgb,1:red,1.0;green,0.0;blue,0.0}, name path={4905f7ee-88e0-40c0-90a7-54b17797acac}, draw opacity={1.0}, line width={1}, dashed, forget plot]
        table[row sep={\\}]
        {
            \\
            2.0  2.0  \\
            4.0  2.0  \\
        }
        ;
    \addplot[color={rgb,1:red,1.0;green,0.0;blue,0.0}, name path={4905f7ee-88e0-40c0-90a7-54b17797acac}, draw opacity={1.0}, line width={1}, dashed, forget plot]
        table[row sep={\\}]
        {
            \\
            2.0  1.0  \\
            4.0  1.0  \\
        }
        ;
    % \addplot[color={rgb,1:red,0.0;green,0.0;blue,0.0}, name path={9fb55379-364d-4125-b834-0355fd2ac8fd}, draw opacity={1.0}, line width={1}, solid, forget plot]
    %     table[row sep={\\}]
    %     {
    %         \\
    %         3.0  1.0  \\
    %         3.0  2.0  \\
    %     }
    %     ;
    \addplot[color={rgb,1:red,1.0;green,0.0;blue,0.0}, name path={ec6f74d0-2b55-418e-9a2f-a79205f4e6ec}, draw opacity={1.0}, line width={1}, dashed, forget plot]
        table[row sep={\\}]
        {
            \\
            4.0  1.0  \\
            4.0  2.0  \\
        }
        ;
    \addplot[color={rgb,1:red,0.4231;green,0.6225;blue,0.1988}, name path={7a4bbad6-977f-44a9-b65f-39687012977b}, only marks, draw opacity={1.0}, line width={0}, solid, mark={*}, mark size={3.0 pt}, mark repeat={1}, mark options={color={rgb,1:red,0.0;green,0.0;blue,0.0}, draw opacity={1.0}, fill={rgb,1:red,0.0;green,0.502;blue,0.0}, fill opacity={1.0}, line width={0.75}, rotate={0}, solid}, forget plot]
        table[row sep={\\}]
        {
            \\
            1.0  1.0  \\
            1.0  1.5  \\
            1.0  2.0  \\
            2.0  1.0  \\
            2.0  2.0  \\
            4.0  1.0  \\
            4.0  2.0  \\
        }
        ;
\end{axis}
\end{tikzpicture} &
      % Recommended preamble:
% \usetikzlibrary{arrows.meta}
% \usetikzlibrary{backgrounds}
% \usepgfplotslibrary{patchplots}
% \usepgfplotslibrary{fillbetween}
% \pgfplotsset{%
% layers/standard/.define layer set={%
%     background,axis background,axis grid,axis ticks,axis lines,axis tick labels,pre main,main,axis descriptions,axis foreground%
% }{grid style= {/pgfplots/on layer=axis grid},%
%     tick style= {/pgfplots/on layer=axis ticks},%
%     axis line style= {/pgfplots/on layer=axis lines},%
%     label style= {/pgfplots/on layer=axis descriptions},%
%     legend style= {/pgfplots/on layer=axis descriptions},%
%     title style= {/pgfplots/on layer=axis descriptions},%
%     colorbar style= {/pgfplots/on layer=axis descriptions},%
%     ticklabel style= {/pgfplots/on layer=axis tick labels},%
%     axis background@ style={/pgfplots/on layer=axis background},%
%     3d box foreground style={/pgfplots/on layer=axis foreground},%
%     },
% }

\begin{tikzpicture}[/tikz/background rectangle/.style={fill={rgb,1:red,1.0;green,1.0;blue,1.0}, draw opacity={1.0}}, show background rectangle]
\begin{axis}[point meta max={nan}, point meta min={nan}, legend cell align={left}, legend columns={1}, title={}, title style={at={{(0.5,1)}}, anchor={south}, font={{\fontsize{14 pt}{18.2 pt}\selectfont}}, color={rgb,1:red,0.0;green,0.0;blue,0.0}, draw opacity={1.0}, rotate={0.0}}, legend style={color={rgb,1:red,0.0;green,0.0;blue,0.0}, draw opacity={1.0}, line width={1}, solid, fill={rgb,1:red,1.0;green,1.0;blue,1.0}, fill opacity={1.0}, text opacity={1.0}, font={{\fontsize{8 pt}{10.4 pt}\selectfont}}, text={rgb,1:red,0.0;green,0.0;blue,0.0}, cells={anchor={center}}, at={(1.02, 1)}, anchor={north west}}, axis background/.style={fill={rgb,1:red,1.0;green,1.0;blue,1.0}, opacity={1.0}}, anchor={north west}, xshift={1.0mm}, yshift={-1.0mm}, width={0.25\textwidth}, height={0.3\textwidth}, scaled x ticks={false}, xlabel={}, x tick style={draw=none}, x tick label style={color={rgb,1:red,0.0;green,0.0;blue,0.0}, opacity={1.0}, rotate={0}}, xlabel style={at={(ticklabel cs:0.5)}, anchor=near ticklabel, at={{(ticklabel cs:0.5)}}, anchor={near ticklabel}, font={{\fontsize{11 pt}{14.3 pt}\selectfont}}, color={rgb,1:red,0.0;green,0.0;blue,0.0}, draw opacity={1.0}, rotate={0.0}}, xmajorgrids={false}, xmin={0.91}, xmax={4.09}, xtick={}, xticklabels={}, xtick align={inside}, xticklabel style={}, x grid style={}, axis x line*={left}, x axis line style={draw=none}, scaled y ticks={false}, ylabel={}, y tick style={draw=none}, y tick label style={color={rgb,1:red,0.0;green,0.0;blue,0.0}, opacity={1.0}, rotate={0}}, ylabel style={at={(ticklabel cs:0.5)}, anchor=near ticklabel, at={{(ticklabel cs:0.5)}}, anchor={near ticklabel}, font={{\fontsize{11 pt}{14.3 pt}\selectfont}}, color={rgb,1:red,0.0;green,0.0;blue,0.0}, draw opacity={1.0}, rotate={0.0}}, ymajorgrids={false}, ymin={0.98}, ymax={2.03}, ytick={}, yticklabels={}, ytick align={inside}, yticklabel style={font={{\fontsize{8 pt}{10.4 pt}\selectfont}}, color={rgb,1:red,0.0;green,0.0;blue,0.0}, draw opacity={1.0}, rotate={0.0}}, y grid style={}, axis y line*={left}, y axis line style={draw=none}, colorbar={false}]
    \addplot[color={rgb,1:red,0.0;green,0.0;blue,0.0}, name path={3c0df0d4-cfc8-4c32-bf24-f638b67173ce}, draw opacity={1.0}, line width={1}, solid, forget plot]
        table[row sep={\\}]
        {
            \\
            1.0  1.0  \\
            2.0  1.0  \\
        }
        ;
    \addplot[color={rgb,1:red,0.0;green,0.0;blue,0.0}, name path={479173f6-006b-4087-bd57-ac88b09dc9dc}, draw opacity={1.0}, line width={1}, solid, forget plot]
        table[row sep={\\}]
        {
            \\
            1.0  1.0  \\
            1.0  2.0  \\
        }
        ;
    \addplot[color={rgb,1:red,0.0;green,0.0;blue,0.0}, name path={6e3e5afa-1960-4f8c-bf41-4bc1c9817c73}, draw opacity={1.0}, line width={1}, solid, forget plot]
        table[row sep={\\}]
        {
            \\
            1.0  2.0  \\
            2.0  2.0  \\
        }
        ;
    \addplot[color={rgb,1:red,0.0;green,0.0;blue,0.0}, name path={de0bcf34-8cca-4c82-912d-b41d71503548}, draw opacity={1.0}, line width={1}, solid, forget plot]
        table[row sep={\\}]
        {
            \\
            2.0  1.0  \\
            2.0  2.0  \\
        }
        ;
    \addplot[color={rgb,1:red,0.0;green,0.0;blue,0.0}, name path={4905f7ee-88e0-40c0-90a7-54b17797acac}, draw opacity={1.0}, line width={1}, solid, forget plot]
        table[row sep={\\}]
        {
            \\
            2.0  2.0  \\
            4.0  2.0  \\
        }
        ;
    \addplot[color={rgb,1:red,0.0;green,0.0;blue,0.0}, name path={4905f7ee-88e0-40c0-90a7-54b17797acac}, draw opacity={1.0}, line width={1}, solid, forget plot]
        table[row sep={\\}]
        {
            \\
            2.0  1.0  \\
            3.0  1.0  \\
        }
        ;
    \addplot[color={rgb,1:red,1.0;green,0.0;blue,0.0}, name path={4905f7ee-88e0-40c0-90a7-54b17797acac}, draw opacity={1.0}, line width={1}, dashed, forget plot]
        table[row sep={\\}]
        {
            \\
            3.0  1.0  \\
            4.0  1.0  \\
        }
        ;
    % \addplot[color={rgb,1:red,0.0;green,0.0;blue,0.0}, name path={9fb55379-364d-4125-b834-0355fd2ac8fd}, draw opacity={1.0}, line width={1}, solid, forget plot]
    %     table[row sep={\\}]
    %     {
    %         \\
    %         3.0  1.0  \\
    %         3.0  2.0  \\
    %     }
    %     ;
    \addplot[color={rgb,1:red,1.0;green,0.0;blue,0.0}, name path={ec6f74d0-2b55-418e-9a2f-a79205f4e6ec}, draw opacity={1.0}, line width={1}, dashed, forget plot]
        table[row sep={\\}]
        {
            \\
            4.0  1.0  \\
            4.0  1.5  \\
        }
        ;
    \addplot[color={rgb,1:red,0.0;green,0.0;blue,0.0}, name path={ec6f74d0-2b55-418e-9a2f-a79205f4e6ec}, draw opacity={1.0}, line width={1}, solid, forget plot]
        table[row sep={\\}]
        {
            \\
            4.0  1.5  \\
            4.0  2.0  \\
        }
        ;
    \addplot[color={rgb,1:red,0.4231;green,0.6225;blue,0.1988}, name path={7a4bbad6-977f-44a9-b65f-39687012977b}, only marks, draw opacity={1.0}, line width={0}, solid, mark={*}, mark size={3.0 pt}, mark repeat={1}, mark options={color={rgb,1:red,0.0;green,0.0;blue,0.0}, draw opacity={1.0}, fill={rgb,1:red,0.0;green,0.502;blue,0.0}, fill opacity={1.0}, line width={0.75}, rotate={0}, solid}, forget plot]
        table[row sep={\\}]
        {
            \\
            1.0  1.0  \\
            1.0  1.5  \\
            1.0  2.0  \\
            2.0  1.0  \\
            2.0  2.0  \\
            4.0  1.0  \\
            4.0  2.0  \\
            % 3.0  1.0  \\
            % 3.0  2.0  \\
            4.0  1.5  \\
        }
        ;
    \addplot[color={rgb,1:red,0.0;green,0.0;blue,0.0}, name path={7a4bbad6-977f-44a9-b65f-39687012977b}, only marks, draw opacity={1.0}, line width={0}, solid, mark={*}, mark size={3.0 pt}, mark repeat={1}, mark options={color={rgb,1:red,0.0;green,0.0;blue,0.0}, draw opacity={1.0}, fill={rgb,1:red,0.0;green,0.0;blue,0.0}, fill opacity={1.0}, line width={0.75}, rotate={0}, solid}, forget plot]
        table[row sep={\\}]
        {
            \\
            3.0  1.0  \\
            3.0  2.0  \\
        }
        ;
\end{axis}
\end{tikzpicture} &
      % Recommended preamble:
% \usetikzlibrary{arrows.meta}
% \usetikzlibrary{backgrounds}
% \usepgfplotslibrary{patchplots}
% \usepgfplotslibrary{fillbetween}
% \pgfplotsset{%
% layers/standard/.define layer set={%
%     background,axis background,axis grid,axis ticks,axis lines,axis tick labels,pre main,main,axis descriptions,axis foreground%
% }{grid style= {/pgfplots/on layer=axis grid},%
%     tick style= {/pgfplots/on layer=axis ticks},%
%     axis line style= {/pgfplots/on layer=axis lines},%
%     label style= {/pgfplots/on layer=axis descriptions},%
%     legend style= {/pgfplots/on layer=axis descriptions},%
%     title style= {/pgfplots/on layer=axis descriptions},%
%     colorbar style= {/pgfplots/on layer=axis descriptions},%
%     ticklabel style= {/pgfplots/on layer=axis tick labels},%
%     axis background@ style={/pgfplots/on layer=axis background},%
%     3d box foreground style={/pgfplots/on layer=axis foreground},%
%     },
% }

\begin{tikzpicture}[/tikz/background rectangle/.style={fill={rgb,1:red,1.0;green,1.0;blue,1.0}, draw opacity={1.0}}, show background rectangle]
\begin{axis}[point meta max={nan}, point meta min={nan}, legend cell align={left}, legend columns={1}, title={}, title style={at={{(0.5,1)}}, anchor={south}, font={{\fontsize{14 pt}{18.2 pt}\selectfont}}, color={rgb,1:red,0.0;green,0.0;blue,0.0}, draw opacity={1.0}, rotate={0.0}}, legend style={color={rgb,1:red,0.0;green,0.0;blue,0.0}, draw opacity={1.0}, line width={1}, solid, fill={rgb,1:red,1.0;green,1.0;blue,1.0}, fill opacity={1.0}, text opacity={1.0}, font={{\fontsize{8 pt}{10.4 pt}\selectfont}}, text={rgb,1:red,0.0;green,0.0;blue,0.0}, cells={anchor={center}}, at={(1.02, 1)}, anchor={north west}}, axis background/.style={fill={rgb,1:red,1.0;green,1.0;blue,1.0}, opacity={1.0}}, anchor={north west}, xshift={1.0mm}, yshift={-1.0mm}, width={0.25\textwidth}, height={0.3\textwidth}, scaled x ticks={false}, xlabel={}, x tick style={draw=none}, x tick label style={color={rgb,1:red,0.0;green,0.0;blue,0.0}, opacity={1.0}, rotate={0}}, xlabel style={at={(ticklabel cs:0.5)}, anchor=near ticklabel, at={{(ticklabel cs:0.5)}}, anchor={near ticklabel}, font={{\fontsize{11 pt}{14.3 pt}\selectfont}}, color={rgb,1:red,0.0;green,0.0;blue,0.0}, draw opacity={1.0}, rotate={0.0}}, xmajorgrids={false}, xmin={0.91}, xmax={4.09}, xtick={}, xticklabels={}, xtick align={inside}, xticklabel style={}, x grid style={}, axis x line*={left}, x axis line style={draw=none}, scaled y ticks={false}, ylabel={}, y tick style={draw=none}, y tick label style={color={rgb,1:red,0.0;green,0.0;blue,0.0}, opacity={1.0}, rotate={0}}, ylabel style={at={(ticklabel cs:0.5)}, anchor=near ticklabel, at={{(ticklabel cs:0.5)}}, anchor={near ticklabel}, font={{\fontsize{11 pt}{14.3 pt}\selectfont}}, color={rgb,1:red,0.0;green,0.0;blue,0.0}, draw opacity={1.0}, rotate={0.0}}, ymajorgrids={false}, ymin={0.98}, ymax={2.03}, ytick={}, yticklabels={}, ytick align={inside}, yticklabel style={font={{\fontsize{8 pt}{10.4 pt}\selectfont}}, color={rgb,1:red,0.0;green,0.0;blue,0.0}, draw opacity={1.0}, rotate={0.0}}, y grid style={}, axis y line*={left}, y axis line style={draw=none}, colorbar={false}]
    \addplot[color={rgb,1:red,0.0;green,0.0;blue,0.0}, name path={3c0df0d4-cfc8-4c32-bf24-f638b67173ce}, draw opacity={1.0}, line width={1}, solid, forget plot]
        table[row sep={\\}]
        {
            \\
            1.0  1.0  \\
            2.0  1.0  \\
        }
        ;
    \addplot[color={rgb,1:red,0.0;green,0.0;blue,0.0}, name path={479173f6-006b-4087-bd57-ac88b09dc9dc}, draw opacity={1.0}, line width={1}, solid, forget plot]
        table[row sep={\\}]
        {
            \\
            1.0  1.0  \\
            1.0  2.0  \\
        }
        ;
    \addplot[color={rgb,1:red,0.0;green,0.0;blue,0.0}, name path={6e3e5afa-1960-4f8c-bf41-4bc1c9817c73}, draw opacity={1.0}, line width={1}, solid, forget plot]
        table[row sep={\\}]
        {
            \\
            1.0  2.0  \\
            2.0  2.0  \\
        }
        ;
    \addplot[color={rgb,1:red,0.0;green,0.0;blue,0.0}, name path={de0bcf34-8cca-4c82-912d-b41d71503548}, draw opacity={1.0}, line width={1}, solid, forget plot]
        table[row sep={\\}]
        {
            \\
            2.0  1.0  \\
            2.0  2.0  \\
        }
        ;
    \addplot[color={rgb,1:red,0.0;green,0.0;blue,0.0}, name path={4905f7ee-88e0-40c0-90a7-54b17797acac}, draw opacity={1.0}, line width={1}, solid, forget plot]
        table[row sep={\\}]
        {
            \\
            2.0  2.0  \\
            4.0  2.0  \\
        }
        ;
    \addplot[color={rgb,1:red,0.0;green,0.0;blue,0.0}, name path={4905f7ee-88e0-40c0-90a7-54b17797acac}, draw opacity={1.0}, line width={1}, solid, forget plot]
        table[row sep={\\}]
        {
            \\
            2.0  1.0  \\
            3.0  1.0  \\
        }
        ;
    \addplot[color={rgb,1:red,1.0;green,0.0;blue,0.0}, name path={4905f7ee-88e0-40c0-90a7-54b17797acac}, draw opacity={1.0}, line width={1}, dashed, forget plot]
        table[row sep={\\}]
        {
            \\
            3.0  1.0  \\
            4.0  1.0  \\
        }
        ;
    % \addplot[color={rgb,1:red,0.0;green,0.0;blue,0.0}, name path={9fb55379-364d-4125-b834-0355fd2ac8fd}, draw opacity={1.0}, line width={1}, solid, forget plot]
    %     table[row sep={\\}]
    %     {
    %         \\
    %         3.0  1.0  \\
    %         3.0  2.0  \\
    %     }
    %     ;
    \addplot[color={rgb,1:red,1.0;green,0.0;blue,0.0}, name path={ec6f74d0-2b55-418e-9a2f-a79205f4e6ec}, draw opacity={1.0}, line width={1}, dashed, forget plot]
        table[row sep={\\}]
        {
            \\
            4.0  1.0  \\
            4.0  1.5  \\
        }
        ;
    \addplot[color={rgb,1:red,0.0;green,0.0;blue,0.0}, name path={ec6f74d0-2b55-418e-9a2f-a79205f4e6ec}, draw opacity={1.0}, line width={1}, solid, forget plot]
        table[row sep={\\}]
        {
            \\
            4.0  1.5  \\
            4.0  2.0  \\
        }
        ;
    \addplot[color={rgb,1:red,0.0;green,0.0;blue,0.0}, name path={ec6f74d0-2b55-418e-9a2f-a79205f4e6ec}, draw opacity={1.0}, line width={1}, solid, forget plot]
        table[row sep={\\}]
        {
            \\
            3.0  1.0  \\
            3.0  2.0  \\
        }
        ;
    \addplot[color={rgb,1:red,0.0;green,0.0;blue,0.0}, name path={ec6f74d0-2b55-418e-9a2f-a79205f4e6ec}, draw opacity={1.0}, line width={1}, solid, forget plot]
        table[row sep={\\}]
        {
            \\
            3.0  1.5  \\
            4.0  1.5  \\
        }
        ;
    \addplot[color={rgb,1:red,0.4231;green,0.6225;blue,0.1988}, name path={7a4bbad6-977f-44a9-b65f-39687012977b}, only marks, draw opacity={1.0}, line width={0}, solid, mark={*}, mark size={3.0 pt}, mark repeat={1}, mark options={color={rgb,1:red,0.0;green,0.0;blue,0.0}, draw opacity={1.0}, fill={rgb,1:red,0.0;green,0.502;blue,0.0}, fill opacity={1.0}, line width={0.75}, rotate={0}, solid}, forget plot]
        table[row sep={\\}]
        {
            \\
            1.0  1.0  \\
            1.0  1.5  \\
            1.0  2.0  \\
            2.0  1.0  \\
            2.0  2.0  \\
            4.0  1.0  \\
            4.0  2.0  \\
            3.0  1.0  \\
            3.0  2.0  \\
            4.0  1.5  \\
            3.0  1.5  \\
        }
        ;
\end{axis}
\end{tikzpicture} &
      % Recommended preamble:
% \usetikzlibrary{arrows.meta}
% \usetikzlibrary{backgrounds}
% \usepgfplotslibrary{patchplots}
% \usepgfplotslibrary{fillbetween}
% \pgfplotsset{%
% layers/standard/.define layer set={%
%     background,axis background,axis grid,axis ticks,axis lines,axis tick labels,pre main,main,axis descriptions,axis foreground%
% }{grid style= {/pgfplots/on layer=axis grid},%
%     tick style= {/pgfplots/on layer=axis ticks},%
%     axis line style= {/pgfplots/on layer=axis lines},%
%     label style= {/pgfplots/on layer=axis descriptions},%
%     legend style= {/pgfplots/on layer=axis descriptions},%
%     title style= {/pgfplots/on layer=axis descriptions},%
%     colorbar style= {/pgfplots/on layer=axis descriptions},%
%     ticklabel style= {/pgfplots/on layer=axis tick labels},%
%     axis background@ style={/pgfplots/on layer=axis background},%
%     3d box foreground style={/pgfplots/on layer=axis foreground},%
%     },
% }

\begin{tikzpicture}[/tikz/background rectangle/.style={fill={rgb,1:red,1.0;green,1.0;blue,1.0}, draw opacity={1.0}}, show background rectangle]
\begin{axis}[point meta max={nan}, point meta min={nan}, legend cell align={left}, legend columns={1}, title={}, title style={at={{(0.5,1)}}, anchor={south}, font={{\fontsize{14 pt}{18.2 pt}\selectfont}}, color={rgb,1:red,0.0;green,0.0;blue,0.0}, draw opacity={1.0}, rotate={0.0}}, legend style={color={rgb,1:red,0.0;green,0.0;blue,0.0}, draw opacity={1.0}, line width={1}, solid, fill={rgb,1:red,1.0;green,1.0;blue,1.0}, fill opacity={1.0}, text opacity={1.0}, font={{\fontsize{8 pt}{10.4 pt}\selectfont}}, text={rgb,1:red,0.0;green,0.0;blue,0.0}, cells={anchor={center}}, at={(1.02, 1)}, anchor={north west}}, axis background/.style={fill={rgb,1:red,1.0;green,1.0;blue,1.0}, opacity={1.0}}, anchor={north west}, xshift={1.0mm}, yshift={-1.0mm}, width={0.25\textwidth}, height={0.3\textwidth}, scaled x ticks={false}, xlabel={}, x tick style={draw=none}, x tick label style={color={rgb,1:red,0.0;green,0.0;blue,0.0}, opacity={1.0}, rotate={0}}, xlabel style={at={(ticklabel cs:0.5)}, anchor=near ticklabel, at={{(ticklabel cs:0.5)}}, anchor={near ticklabel}, font={{\fontsize{11 pt}{14.3 pt}\selectfont}}, color={rgb,1:red,0.0;green,0.0;blue,0.0}, draw opacity={1.0}, rotate={0.0}}, xmajorgrids={false}, xmin={0.91}, xmax={4.09}, xtick={}, xticklabels={}, xtick align={inside}, xticklabel style={}, x grid style={}, axis x line*={left}, x axis line style={draw=none}, scaled y ticks={false}, ylabel={}, y tick style={draw=none}, y tick label style={color={rgb,1:red,0.0;green,0.0;blue,0.0}, opacity={1.0}, rotate={0}}, ylabel style={at={(ticklabel cs:0.5)}, anchor=near ticklabel, at={{(ticklabel cs:0.5)}}, anchor={near ticklabel}, font={{\fontsize{11 pt}{14.3 pt}\selectfont}}, color={rgb,1:red,0.0;green,0.0;blue,0.0}, draw opacity={1.0}, rotate={0.0}}, ymajorgrids={false}, ymin={0.98}, ymax={2.03}, ytick={}, yticklabels={}, ytick align={inside}, yticklabel style={font={{\fontsize{8 pt}{10.4 pt}\selectfont}}, color={rgb,1:red,0.0;green,0.0;blue,0.0}, draw opacity={1.0}, rotate={0.0}}, y grid style={}, axis y line*={left}, y axis line style={draw=none}, colorbar={false}]
    \addplot[color={rgb,1:red,0.0;green,0.0;blue,0.0}, name path={3c0df0d4-cfc8-4c32-bf24-f638b67173ce}, draw opacity={1.0}, line width={1}, solid, forget plot]
        table[row sep={\\}]
        {
            \\
            1.0  1.0  \\
            2.0  1.0  \\
        }
        ;
    \addplot[color={rgb,1:red,0.0;green,0.0;blue,0.0}, name path={479173f6-006b-4087-bd57-ac88b09dc9dc}, draw opacity={1.0}, line width={1}, solid, forget plot]
        table[row sep={\\}]
        {
            \\
            1.0  1.0  \\
            1.0  2.0  \\
        }
        ;
    \addplot[color={rgb,1:red,0.0;green,0.0;blue,0.0}, name path={6e3e5afa-1960-4f8c-bf41-4bc1c9817c73}, draw opacity={1.0}, line width={1}, solid, forget plot]
        table[row sep={\\}]
        {
            \\
            1.0  2.0  \\
            2.0  2.0  \\
        }
        ;
    \addplot[color={rgb,1:red,0.0;green,0.0;blue,0.0}, name path={de0bcf34-8cca-4c82-912d-b41d71503548}, draw opacity={1.0}, line width={1}, solid, forget plot]
        table[row sep={\\}]
        {
            \\
            2.0  1.0  \\
            2.0  2.0  \\
        }
        ;
    \addplot[color={rgb,1:red,0.0;green,0.0;blue,0.0}, name path={4905f7ee-88e0-40c0-90a7-54b17797acac}, draw opacity={1.0}, line width={1}, solid, forget plot]
        table[row sep={\\}]
        {
            \\
            2.0  2.0  \\
            4.0  2.0  \\
        }
        ;
    \addplot[color={rgb,1:red,0.0;green,0.0;blue,0.0}, name path={4905f7ee-88e0-40c0-90a7-54b17797acac}, draw opacity={1.0}, line width={1}, solid, forget plot]
        table[row sep={\\}]
        {
            \\
            2.0  1.0  \\
            3.0  1.0  \\
        }
        ;
    \addplot[color={rgb,1:red,0.0;green,0.0;blue,0.0}, name path={4905f7ee-88e0-40c0-90a7-54b17797acac}, draw opacity={1.0}, line width={1}, solid, forget plot]
        table[row sep={\\}]
        {
            \\
            3.0  1.0  \\
            4.0  1.0  \\
        }
        ;
    % \addplot[color={rgb,1:red,0.0;green,0.0;blue,0.0}, name path={9fb55379-364d-4125-b834-0355fd2ac8fd}, draw opacity={1.0}, line width={1}, solid, forget plot]
    %     table[row sep={\\}]
    %     {
    %         \\
    %         3.0  1.0  \\
    %         3.0  2.0  \\
    %     }
    %     ;
    \addplot[color={rgb,1:red,0.0;green,0.0;blue,0.0}, name path={ec6f74d0-2b55-418e-9a2f-a79205f4e6ec}, draw opacity={1.0}, line width={1}, solid, forget plot]
        table[row sep={\\}]
        {
            \\
            4.0  1.0  \\
            4.0  1.5  \\
        }
        ;
    \addplot[color={rgb,1:red,0.0;green,0.0;blue,0.0}, name path={ec6f74d0-2b55-418e-9a2f-a79205f4e6ec}, draw opacity={1.0}, line width={1}, solid, forget plot]
        table[row sep={\\}]
        {
            \\
            4.0  1.5  \\
            4.0  2.0  \\
        }
        ;
    \addplot[color={rgb,1:red,0.0;green,0.0;blue,0.0}, name path={ec6f74d0-2b55-418e-9a2f-a79205f4e6ec}, draw opacity={1.0}, line width={1}, solid, forget plot]
        table[row sep={\\}]
        {
            \\
            3.0  1.0  \\
            3.0  2.0  \\
        }
        ;
    \addplot[color={rgb,1:red,0.0;green,0.0;blue,0.0}, name path={ec6f74d0-2b55-418e-9a2f-a79205f4e6ec}, draw opacity={1.0}, line width={1}, solid, forget plot]
        table[row sep={\\}]
        {
            \\
            3.0  1.5  \\
            4.0  1.5  \\
        }
        ;
    \addplot[color={rgb,1:red,0.4231;green,0.6225;blue,0.1988}, name path={7a4bbad6-977f-44a9-b65f-39687012977b}, only marks, draw opacity={1.0}, line width={0}, solid, mark={*}, mark size={3.0 pt}, mark repeat={1}, mark options={color={rgb,1:red,0.0;green,0.0;blue,0.0}, draw opacity={1.0}, fill={rgb,1:red,0.0;green,0.502;blue,0.0}, fill opacity={1.0}, line width={0.75}, rotate={0}, solid}, forget plot]
        table[row sep={\\}]
        {
            \\
            1.0  1.0  \\
            1.0  1.5  \\
            1.0  2.0  \\
            2.0  1.0  \\
            2.0  2.0  \\
            4.0  1.0  \\
            4.0  2.0  \\
            3.0  1.0  \\
            3.0  2.0  \\
            4.0  1.5  \\
            3.0  1.5  \\
            3.5  1.0  \\
            4.0  1.25  \\
        }
        ;
\end{axis}
\end{tikzpicture} \\
      \scriptsize{(a)} & \scriptsize{(b)} & \scriptsize{(c)} & \scriptsize{(d)} \\
    \end{tabular}
  \caption{Example of the adaptive sampling algorithm for a 2D grid.}\label{fig:adaptillustration}
\end{figure}

In \cref{fig:adaptillustration} (a-d), we illustrate the adaptive sampling algorithm for a 2D grid. In~(a) we show the grid before the adaptive sampling algorithm is applied. Edges with small enough error, i.e., that do not satisfy the condition in line 11 of \cref{alg:sampling} and thus need no further division are shown as black solid line. Edges that need to be divided according to line 11 of \cref{alg:sampling} are illustrated as red dashed line. In~(b), new sample points are added at the center of the red edges. After that, we check if further sample points must be added because there are two neighboring points that are not connected. This is true for the two black points in~(b). Therefore a point is added and connected to all its neighboring points in~(c). Since new points are added, the loop in lines 2--19 of \cref{alg:sampling} is executed again and two more points are added at the remaining red dashed edges. After that, the adaptive sampling algorithm terminates.

\section{Numerical Experiments}\label{sec:numerics}
In the following numerical experiments we showcase the effectiveness of our method. The PMOR methods in this comparison all have a number of hyper-parameters that can be tweaked and which may lead to different results. As an example, in parametric IRKA (pIRKA), the position of the sample points $p^{(i)}$ and dimensions of the corresponding projection subspaces influence the final accuracy. To provide a fair comparison between the different methods, we compare our method to the global basis method of~\cite{BaurBBG2011} in \Cref{sec:Thermal} and the matrix interpolation methods of~\cite{GeussPL2013} in \Cref{sec:Timoshenko} by means of the benchmark systems presented in the respective articles using the same experimental setup. In \Cref{sec:PHMSD}, a third experiment is conducted to demonstrate our method for structure-preserving PMOR\@, to provide a comparison to the optimization-based $\htwo \otimes \ltwo$ reduction of~\cite{HundMMS2021}, and to explore the increase in accuracy for increased reduced model orders. We use the implementation of the \textsf{MATLAB} toolbox \textsf{psssMOR}\footnote{available at \url{https://www.mathworks.com/matlabcentral/fileexchange/64392-psssmor-toolbox}} for the local and global methods described in \Cref{subsec:existing}.

% We do not compare to~\cite{HundMMS2021} within the other experiments, as the implementation currently only supports linear parameter dependencies in the ROM and a one-dimensional parameter domain, which is not sufficient for the two-dimensional parameter domain in our first example or to track the drastic changes in the tran

% We first show a comparison with the projection-based global basis method of~\cite{BaurBBG2011} and with the matrix interpolation approach presented in~\cite{GeussPL2013}. For both comparisons, we use the models and experimental setup that are also used in~\cite{BaurBBG2011} or~\cite{GeussPL2013}, respectively.
% After that, we demonstrate how our method can be used to perform structure-preserving PMOR\@. We use the \textsf{MATLAB} toolbox \textsf{psssMOR}\footnote{available at \url{https://www.mathworks.com/matlabcentral/fileexchange/64392-psssmor-toolbox}} for the projection-based global basis and matrix-interpolation-based MOR\@.

The scalar ansatz functions in our ROMs offer much space for hyper-parameter tuning. In an engineering application, the number of coefficients per matrix function and the type of function can be tuned to achieve the best possible fit. For comparibility in our experiments, we only use rather general purpose linear hat functions as ansatz functions. As short-hand notation, for $a, b \in \R$ with $a<b$, we define
%\begin{align}\label{eq:hat}
%\hat{f}(x, a, b) = \max(\min(2(x-a)/(b-a), -2(x-b)/(b-a)), 0),
%\end{align}
\begin{align}\label{eq:hat}
\hat{f}(x,a,b) = \begin{cases}
\phantom{-}2(x-a)/(b-a) & \mathrm{if}\, x \in  [a,a+(b-a)/2]\\
-2(x-b)/(b-a) & \mathrm{if}\, x \in  [a+(b-a)/2,b]\\
\phantom{-}0&\text{otherwise}
\end{cases}
\end{align}
 This continuous function $\hat{f}(x, a, b)$ is zero for $x\leq a$ and $x\geq b$ and describes a piecewise linear function between $a$ and $b$ with maximum value one for $(a+b)/2$.

% Moreover, for all matrices in our ansatz~\eqref{eq:ansatz}, we use the same scalar functions (and thus the same number of matrix coefficients) such that we can describe our ansatz in each experiment by just providing a template function, denoted by $\mathcal{M}(p, \theta)$, which, in view of~\eqref{eq:ansatz}, is defined by \benni{warum nicht oben gleich so einführen?}
% \[
%   \mathcal{M}(p,\theta) := \sum_{i=1}^\kappa f_i(p)M_i(\theta)
% \]
% with $M_i\in \{B_i,C_i,D_i,J_i,R_i,Q_i\}$ and $\mathcal{M} \in \{\pB, \pC, \pD, \pJ, \pR, \pQ\}$, $i \in \{1, \dots, \kappa\}$.

% \textcolor{red}{(Comparable to the matrix-inter-polation approach $ \rightarrow$ Remark)} Gilt nur im 1d-fall daher kein Remark

\subsection{Thermal Model}\label{sec:Thermal}
In our first experiment, we compare our suggested meth\-od to the global basis approach suggested in \cite{BaurBBG2011}.
We use the same model as in in~\cite[Section~6.2]{BaurBBG2011}, which describes the thermal conduction in a semiconductor chip that is connected to its environment with three device interfaces. The modeling process is explained in~\cite{Lasance2001}. The model equations are given by
\begin{align}
  \thermM :
  \begin{cases}
  \thermE \dot x(t) = \left(\thermA_0 - \sum\limits_{i=1}^3 p_i \thermA_i \right)x(t)+\thermB u(t),\\ 
  \phantom{\thermE} y(t) = \thermC x(t),
  \end{cases}
\end{align}
where $\thermE, \thermA_i \in \R^{4257 \times 4257}$, $i=1,\ldots,3$, $\thermB \in \R^{4257}$, and $\thermC \in \R^{7 \times 4257}$, which are available at the \textsf{MORWiki}\footnote{\url{https://morwiki.mpi-magdeburg.mpg.de/morwiki/index.php/Thermal_Model}}. %For $i = 1, \dots, 3$, the matrices $\thermA_i$ are diagonal and describe the convection at the $i$th device interface.
The parameters $p_i, i=1\ldots,3$, are scalar and contained in the parameter range $\Omega = [1, 10^4]$. This model is used in~\cite{BaurBBG2011} to showcase the effectiveness of the PMOR method described therein. 

In~\cite{BaurBBG2011}, a parametric ROM of order 14 is obtained from reducing $\thermM$ for fixed parameter values $p^{(1)} = \left[ 10^4, 10^4, 1 \right]^\T$, $p^{(2)} = \left[ 1, 1, 1\right]^\T$. Here, we use the \textsf{psssMOR} implementation of pIRKA to compute a projection-based parametric ROM with the same dimension and parameter samples. %As in~\cite{BaurBBG2011}, we use IRKA \cite{GugercinA2008} to obtain the reduced local bases at $p^{(1)}$ and $p^{(2)}$. 
We obtain slightly lower $\hinf$ errors than reported in~\cite[Fig.~6.12]{BaurBBG2011}, even though we use the same algorithm and setup. Note that we report the absolute $\hinf$ errors in \cref{fig:ThermalSurfacePlot}, while in~\cite{BaurBBG2011} the relative $\hinf$ errors are shown. As in~\cite{BaurBBG2011} we regard the third parameter $p_3=1$ as fixed and do not vary it during the reduction or evaluation.

In our method, we use two different parameterizations. First, we aim at a parametric ROM with the same complexity as the model obtained with pIRKA\@. 
For this, we use constant input, output, and feedthrough matrices. After that, we allow for a parameter dependence in all matrices. The parameterizations are described in \Cref{romdef:thermala} and \Cref{romdef:thermalb}, respectively.

\begin{romdef}\label{romdef:thermala}
  We use the ROM setup defined in~\eqref{eq:romform} with the matrix-valued functions as described in~\eqref{eq:ansatz}. The ROM dimensions are set to $r=14$, $n_y=7$, and $n_u=1$. We set ${\kappa_B=\kappa_C=\kappa_D=1}$ and use the ansatz functions ${f_1^B \equiv f_1^C \equiv f_1^D \equiv 1}$ for the input, output, and feedthrough matrices. For $\pJ, \pR$, and $\pQ$, we define the functions
  \begin{align}
    \label{eq:thermfs}
    f_1 = \hat{f}( \cdot, 2.0-10^4, 10^4), \quad f_2 = \hat{f}( \cdot, 1.0, 2  \cdot 10^4 -1),
  \end{align}
  and set ${\kappa_J=\kappa_R=\kappa_Q=4}$ and use the ansatz functions
  \begin{align}
    \label{eq:therma_JRQ}
    \begin{split}
      f_1^J(p) \equiv f_1^R(p) \equiv f_1^Q(p) \equiv f_1(p_1), \quad
      f_2^J(p) \equiv f_2^R(p) \equiv f_2^Q(p) \equiv f_1(p_2), \\
      f_3^J(p) \equiv f_3^R(p) \equiv f_3^Q(p) \equiv f_2(p_1), \quad
      f_4^J(p) \equiv f_4^R(p) \equiv f_4^Q(p) \equiv f_2(p_2).
    \end{split}
  \end{align}
\end{romdef}

\begin{romdef}\label{romdef:thermalb}
  We use the ROM setup defined in~\eqref{eq:romform} with the matrix-valued functions as described in~\eqref{eq:ansatz}. The ROM dimensions are set to $r=14$, $p=7$, and $m=1$. We set all ${\kappa_J = \dots = \kappa_D = 4}$, use the same ansatz functions for $f_i^J, f_i^R, f_i^Q$ for $i\in \{1,\dots,4\}$ as in~\eqref{eq:therma_JRQ}, and use additionally
  \begin{align*}
    \begin{split}
      f_1^B(p) \equiv f_1^C(p) \equiv f_1^D(p) \equiv f_1(p_1), \quad
      f_2^B(p) \equiv f_2^C(p) \equiv f_2^D(p) \equiv f_1(p_2), \\
      f_3^B(p) \equiv f_3^C(p) \equiv f_3^D(p) \equiv f_2(p_1), \quad
      f_4^B(p) \equiv f_4^C(p) \equiv f_4^D(p) \equiv f_2(p_2),
    \end{split}
  \end{align*}
  where $f_1$ and $f_2$ are defined as in~\eqref{eq:thermfs}.
\end{romdef}

\begin{figure}[t]
  \centering
  \includegraphics[width=0.9\textwidth]{./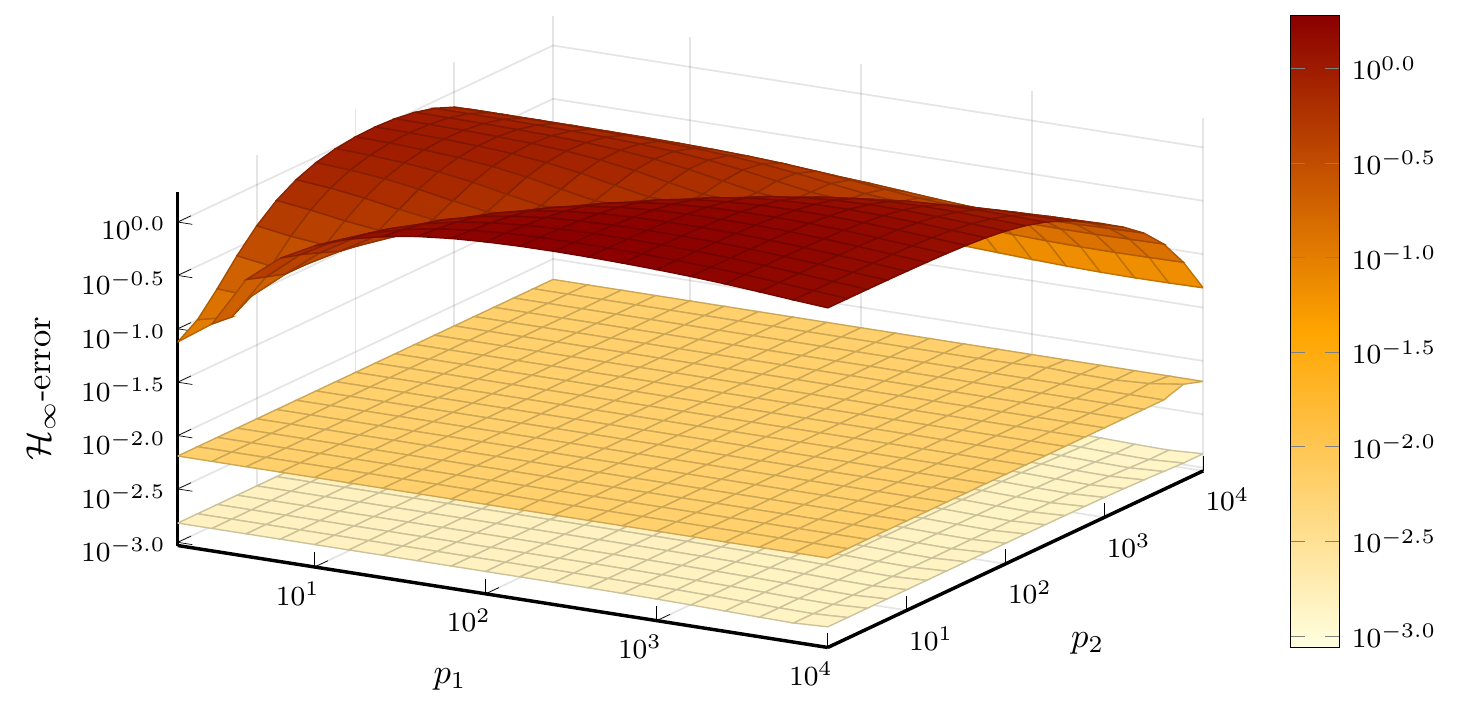}
  \caption{$\mathcal{H}_\infty $ error comparison for varying $p_1$ and $p_2$. The upper surface corresponds to the errors of pIRKA~\cite{BaurBBG2011}, the surface in the middle to the errors of our method with \Cref{romdef:thermala}, and the lower surface to the errors of our method with \Cref{romdef:thermalb}.}\label{fig:ThermalSurfacePlot}
\end{figure}

The comparison between our method and the method in~\cite{BaurBBG2011} is shown in \cref{fig:ThermalSurfacePlot} and \cref{fig:ThermFreqErrs}. In \Cref{fig:ThermalSurfacePlot}, we depict the $\hinf$ errors for the considered parameter configurations. The $\hinf$ errors of our ROMs are significantly lower than the errors of the pIRKA model -- both for the simpler setup \cref{romdef:thermala} and for the more involved setup \cref{romdef:thermalb}. Note that the errors of pIRKA are lower at the parameter samples $p^{(1)}$ and $p^{(2)}$ and increase further away from the samples. In contrast to that, SOBMOR provides low and roughly constant error surfaces. We use the method in~\cite{AliyevBMSV2017} to compute the $\hinf$ errors, which is designed for the computation of $\hinf$ norms of large-scale transfer functions, due to the vast amount of otherwise costly $\hinf$ norm computations.

\begin{figure}[tb]
  \centering
  \input{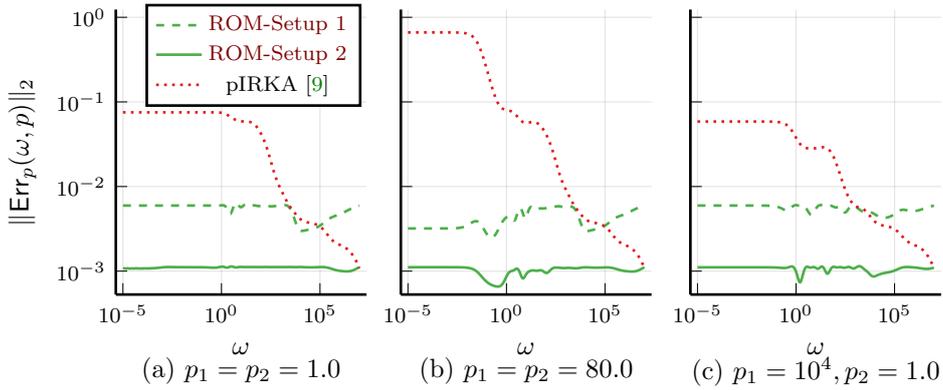}
  \caption{Frequency-wise errors as $p_1$ and $p_2$ are varied.}\label{fig:ThermFreqErrs}
\end{figure}

% Causes overfull hbox.
In \cref{fig:ThermFreqErrs}, we show the spectral norm of the error transfer function, $\ERR_p(\cdot, p)$ for different parameter configurations $p\in \Omega$. This highlights the fact that SOBMOR not only produces roughly constant $\hinf$ errors across the parameter range but also leads to rather flat transfer function errors at specific parameter configurations. % for \cref{romdef:thermalb}.
% Note that our parameterization with nonzero feedthrough leads to infinite $\htwo$ errors, while the method in~\cite{BaurBBG2011} computes ROMs with finite $\htwo$ errors.
% For our next benchmark system, the method in~\cite{BaurBBG2011} does not lead to stable ROMs for most parameters, which is why it is excluded from the following experimental results.

% However, a finite $\htwo$ error is not ensured when using~\cite{BaurBBG2011}, since the stability of the ROM at all parameter configurations is not guaranteed.

% \benni{Dann ist streng genommen Hinfty nicht definiert, wir zeigen dann nur Linfty.} \volker{letzter satz besser erklären}

\subsection{Timoshenko Beam}\label{sec:Timoshenko}
For the comparison of our method with the matrix-inter\-polation-framework, we use the benchmark model used in~\cite{GeussPL2013}, given by a 3D cantilever Timoshenko beam. The free parameter $p$ is the beam length, which is varied in $\Omega= [0.4,2.4]$.
The model is given by
\begin{align*}
  E(p) \dot x &= A(p) x + B u, \\
  y &= C x,
\end{align*}
where $E(p), A(p) \in \R^{240 \times 240}$ and $E(p)$ is a full-rank matrix for all $p \in [0.4, 2.4]$. In this example, and in contrast to the one considered in \Cref{sec:Thermal}, the parameter dependency of $E$ and $A$ on $p$ is nonlinear. In~\cite{GeussPL2013} six local ROMs are computed at six different lengths $p^{(i)}$, $i=1,\ldots,6$ uniformly distributed between $0.4$ and $2.4$. We use the \textsf{psssMOR} toolbox again to compute the ROMs\footnote{Our configuration of \textsf{psssMOR} for the interpolation-based PMOR of the Timoshenko beam is documented in detail at~\url{https://gist.github.com/Algopaul/1999b0e34b54f800f56cbdf1be1e45b4}}. For the local reduction, we use BT (which typically leads to low $\hinf$ errors) and a two-sided rational Krylov interpolation at zero as proposed in~\cite{GeussPL2013}. Both ROMs result in a reduced model of dimension $10$ as in~\cite{GeussPL2013}. The local ROMs are merged via piecewise linear interpolation.

For SOBMOR, we use an ansatz that also employs six ansatz functions to get a similar complexity as in the interpolated ROM in~\cite{GeussPL2013}.
\begin{romdef}\label{romdef:timo}
  We set the ROM dimensions to $r=10, n_u=1$, and $n_y=1$. We set all $\kappa_j = 6$ for $j \in \{J, R, Q, B, C, D\}$. Moreover, we define the ansatz functions $f_i$ for $i \in \{1, \dots 6\}$ by
\begin{align*}
  f_i &= \hat{f}(\cdot, 0.4+(i-2) \cdot 0.2, 0.4+i \cdot 0.2)
\end{align*}
and set
%\begin{align*}
  $f_i^J \equiv f_i^R \equiv f_i^Q \equiv f_i^B \equiv f_i^C \equiv f_i^D \equiv f_i$ 
%\end{align*}
for $i \in \{1,\ldots,6\}$. Then our ROM is constructed as in~\eqref{eq:romform} and~\eqref{eq:ansatz}.
\end{romdef}
\begin{figure}[t]
  \centering
  \input{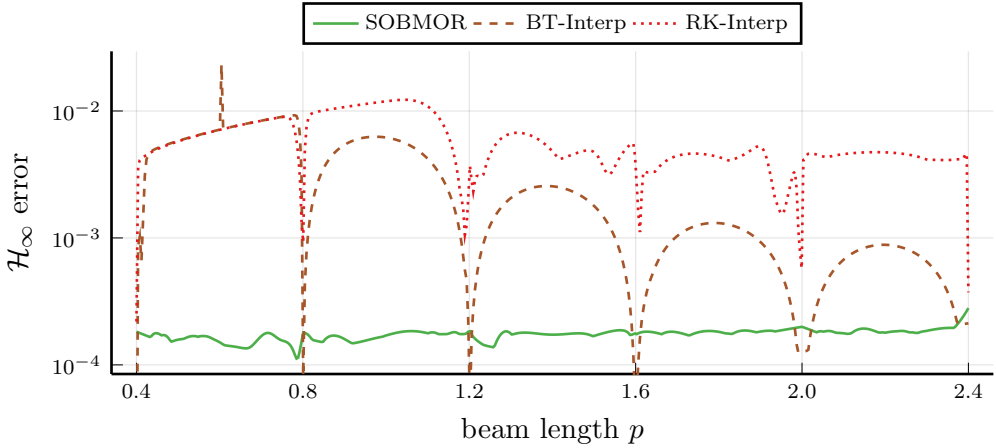}
  \caption{$\mathcal{H}_\infty $ error comparison for different PMOR methods.  }\label{fig:TimoshenkoHinfPlot}
\end{figure}
The results are shown in \cref{fig:TimoshenkoHinfPlot}.  The ROMs based on balanced truncation and rational Krylov interpolation are denoted by BT-Interp and RK-Interp, respectively. We observe that the interpolated ROMs are accurate at the interpolation points (in the case of BT-Interp even more accurate than the SOBMOR-ROMs). However, between the interpolation points, the error increases drastically, often by more than an order of magnitude. This is because in the interpolation framework, the ROM is not constructed to yield a small error in between the interpolation points. Therefore, the error can only be reduced in the matrix interpolation framework by increasing the number of interpolation points, which in turn increases the complexity of the ROM\@. On the other hand, using SOBMOR, we can set up a ROM ansatz and then minimize the error across a wide parameter range without changing the predefined ROM structure. In this way, we keep the complexity of the ROM low, while including a large number of parameter samples. This is the main benefit of using SOBMOR for parametric systems: a large number of parameter samples can be included during optimization but the ROM ansatz is chosen independently and can maintain a low complexity. Experiments analyzing the dependence of the approximation error on the size of the reduced model are conducted in the next subsection.

\subsection{Port-Hamiltonian MSD Chain}\label{sec:PHMSD}

Our final experiment is concerned with a port-Hamiltonian model of a mass-spring-damper chain from~\cite{Gugercin2012}.
We compare SOBMOR with IRKA-PH, which is adapted to parametric systems using the same technique as described \Cref{sec:prelim}. Moreover, we compare with the $\htwo \otimes \ltwo$ optimi\-zation-based approach~\cite{HundMMS2021}, which is not designed to preserve the pH structure. %Note that the method in~\cite{HundMMS2021} does not aim for preservation of the pH structure and uses a general formulation.

The model is described in detail in the package \textsf{PortHamiltonianBenchmarkSystems}\footnote{\url{https://algopaul.github.io/PortHamiltonianBenchmarkSystems.jl/SingleMSDChain/}}. To obtain a parametric model, we use the damping coefficient as free parameter in the interval $\Omega = [0.5, 1.5]$. Moreover, we only use a single input and output, since the implementation for the method in~\cite{HundMMS2021} only encompasses SISO systems.

The system equations are given by
\begin{align}
  \label{eq:msd_model}
  \msdM(p) :
  \begin{cases}
    \dot x = (\msdJ-\msdR(p))\msdQ x + \msdB u, \\
    y = \msdB^\T \msdQ x,
  \end{cases}
\end{align}
where $\msdJ, \msdQ, \msdR(p) \in \R^{100 \times 100}$ and $\msdB \in \R^{100 \times 1}$. The skew-symmetry of $\msdJ$, the positive definiteness and symmetry of $\msdQ$, and the positive semi-definiteness and symmetry of $\msdR(p)$ for all $p \in \R^+$ guarantee that $\msdM(p)$ is pH for all parameter values $p \in [0.5, 1.5]$.

To preserve the pH structure in our reduced order model, we do not use the ansatz in~\eqref{eq:romform} but instead define our ROM as follows.

\begin{romdef}\label{romdef:pH}
  We test different ROM orders $r \in \{1, \dots 10\}$ and set $n_u = 1$ and $n_y = 1$. Our ansatz is given by
\begin{align}
  \label{eq:msd_ansatz}
  \msdMred(c,\theta) :
  \begin{cases}
    \dot x = (\pJ(p, \theta)-\pR(p, \theta))\pQ(p, \theta) x + \pB(p, \theta) u, \\
    y = \pB{(p, \theta)}^\T \pQ(p, \theta) x,
  \end{cases}
\end{align}
where we set all $\kappa_j = 2$ for $j \in \{J, R, Q, B, C, D\}$. We define the scalar ansatz functions
\begin{align*}
  f_i &= \hat{f}(\cdot, 0.5+(i-2) \cdot 0.5, 0.5+i \cdot 0.5)
\end{align*}
and set $f_i^J \equiv f_i^R \equiv f_i^Q \equiv f_i^B \equiv f_i^C \equiv f_i^D \equiv f_i$ for $i \in \{1,2\}$. Then $\pJ, \pR, \pQ$ and $\pB$ are defined as in~\eqref{eq:ansatz}. The skew-symmetry of $\pJ^M(p, \theta)$ and the positive semi-definiteness of $\pR^M(p, \theta)$ and $\pQ^M(p, \theta)$ ensure that our ansatz leads to pH models for all considered $p\in \Omega$ and $\theta\in \mathbb{R}^{n_\theta}$.
\end{romdef}
We configure the method in~\cite{HundMMS2021} (denoted by $\mathcal{H}_2 \otimes \mathcal{L}_2$-Opt) using the most flexible ansatz provided in the available implementation\footnote{available at \url{https://zenodo.org/record/5710777}}, which allows for linear parameter dependencies in the system matrices.
\begin{romdef}
  The ROM structure used in $\mathcal{H}_2 \otimes \mathcal{L}_2$-Opt is given by
  \begin{align*}
    (E_{r, 1} + pE_{r, 2})\dot x &= (A_{r, 1} + pA_{r, 2})x + (B_{r, 1} + pB_{r, 2}) u, \\
    y &= (C_{r, 1} + pC_{r, 2})x,
  \end{align*}
  where we again test ROM orders $r \in \{1, \dots 10\}$.
\end{romdef}

For IRKA-PH, we use the sample points $\{0.5, 1.0, 1.5\}$ to obtain three projection matrices $V_{\text{IRKA}, i} \in \R^{100 \times 10}$, $i=1,\ldots,3$. The final projection matrix $V_{\text{IRKA}}$ is then obtained from a singular value decomposition of the horizontal concatenation of $V_{\text{IRKA}, i}$, $i=1,\ldots,3$. To obtain a ROM of order $r$, the first $r$ columns of $V_{\text{IRKA}}$ are used to compute the ROM\@.

\begin{figure}[t]
  \centering
  % Recommended preamble:
% \usetikzlibrary{arrows.meta}
% \usetikzlibrary{backgrounds}
% \usepgfplotslibrary{patchplots}
% \usepgfplotslibrary{fillbetween}
% \pgfplotsset{%
%     layers/standard/.define layer set={%
%         background,axis background,axis grid,axis ticks,axis lines,axis tick labels,pre main,main,axis descriptions,axis foreground%
%     }{
%         grid style={/pgfplots/on layer=axis grid},%
%         tick style={/pgfplots/on layer=axis ticks},%
%         axis line style={/pgfplots/on layer=axis lines},%
%         label style={/pgfplots/on layer=axis descriptions},%
%         legend style={/pgfplots/on layer=axis descriptions},%
%         title style={/pgfplots/on layer=axis descriptions},%
%         colorbar style={/pgfplots/on layer=axis descriptions},%
%         ticklabel style={/pgfplots/on layer=axis tick labels},%
%         axis background@ style={/pgfplots/on layer=axis background},%
%         3d box foreground style={/pgfplots/on layer=axis foreground},%
%     },
% }

\begin{tikzpicture}[/tikz/background rectangle/.style={fill={rgb,1:red,1.0;green,1.0;blue,1.0}, draw opacity={1.0}}, show background rectangle]
\begin{axis}[point meta max={nan}, point meta min={nan}, legend cell align={left}, legend columns={3}, title={}, title style={at={{(0.5,1)}}, anchor={south}, font={{\fontsize{14 pt}{18.2 pt}\selectfont}}, color={rgb,1:red,0.0;green,0.0;blue,0.0}, draw opacity={1.0}, rotate={0.0}}, legend style={color={rgb,1:red,0.0;green,0.0;blue,0.0}, draw opacity={1.0}, line width={1}, solid, fill={rgb,1:red,1.0;green,1.0;blue,1.0}, fill opacity={1.0}, text opacity={1.0}, font={{\fontsize{8 pt}{10.4 pt}\selectfont}}, text={rgb,1:red,0.0;green,0.0;blue,0.0}, cells={anchor={center}}, at={(0.5, 1.02)}, anchor={south}}, axis background/.style={fill={rgb,1:red,1.0;green,1.0;blue,1.0}, opacity={1.0}}, anchor={north west}, xshift={1.0mm}, yshift={-1.0mm}, width={0.8\textwidth}, height={0.35\textwidth}, scaled x ticks={false}, xlabel={reduced model order $r$}, x tick style={color={rgb,1:red,0.0;green,0.0;blue,0.0}, opacity={1.0}}, x tick label style={color={rgb,1:red,0.0;green,0.0;blue,0.0}, opacity={1.0}, rotate={0}}, xlabel style={}, xmajorgrids={true}, xmin={0.73}, xmax={10.27}, xticklabels={{$2$,$4$,$6$,$8$,$10$}}, xtick={{2.0,4.0,6.0,8.0,10.0}}, xtick align={inside}, xticklabel style={font={{\fontsize{8 pt}{10.4 pt}\selectfont}}, color={rgb,1:red,0.0;green,0.0;blue,0.0}, draw opacity={1.0}, rotate={0.0}}, x grid style={color={rgb,1:red,0.0;green,0.0;blue,0.0}, draw opacity={0.1}, line width={0.5}, solid}, axis x line*={left}, x axis line style={color={rgb,1:red,0.0;green,0.0;blue,0.0}, draw opacity={1.0}, line width={1}, solid}, scaled y ticks={false}, ylabel={$\mathcal{H}_\infty \otimes \mathcal{L}_\infty$ error}, y tick style={color={rgb,1:red,0.0;green,0.0;blue,0.0}, opacity={1.0}}, y tick label style={color={rgb,1:red,0.0;green,0.0;blue,0.0}, opacity={1.0}, rotate={0}}, ylabel style={}, ymode={log}, log basis y={10}, ymajorgrids={true}, ymin={0.0009435109541374004}, ymax={1.0225990690930593}, yticklabels={{$10^{-3}$,$10^{-2}$,$10^{-1}$,$10^0$}}, ytick={{0.001,0.01,0.1,1}}, ytick align={inside}, yticklabel style={font={{\fontsize{8 pt}{10.4 pt}\selectfont}}, color={rgb,1:red,0.0;green,0.0;blue,0.0}, draw opacity={1.0}, rotate={0.0}}, y grid style={color={rgb,1:red,0.0;green,0.0;blue,0.0}, draw opacity={0.1}, line width={0.5}, solid}, axis y line*={left}, y axis line style={color={rgb,1:red,0.0;green,0.0;blue,0.0}, draw opacity={1.0}, line width={1}, solid}, colorbar={false}]
    \addplot[color={rgb,1:red,0.302;green,0.6863;blue,0.2902}, name path={f7c8c7f6-7bcb-427e-98af-1b4ec95c9362}, draw opacity={1.0}, line width={1}, solid, mark={*}, mark size={3.0 pt}, mark repeat={1}, mark options={color={rgb,1:red,0.0;green,0.0;blue,0.0}, draw opacity={1.0}, fill={rgb,1:red,0.302;green,0.6863;blue,0.2902}, fill opacity={1.0}, line width={0.75}, rotate={0}, solid}]
        table[row sep={\\}]
        {
            \\
            1.0  0.14193762339405466  \\
            2.0  0.07566017348889567  \\
            3.0  0.036847361748898515  \\
            4.0  0.023490082976028382  \\
            5.0  0.006740613333747307  \\
            6.0  0.003220530578957909  \\
            7.0  0.0023112397743184467  \\
            8.0  0.002164614399696213  \\
            9.0  0.0011923746254673961  \\
            10.0  0.0011465080762474867  \\
        }
        ;
    \addlegendentry {SOBMOR$ $}
    \addplot[color={rgb,1:red,0.5961;green,0.3059;blue,0.6392}, name path={e136fb16-a2fd-4413-8553-a277f16ad7c7}, draw opacity={1.0}, line width={1}, solid, mark={star}, mark size={3.0 pt}, mark repeat={1}, mark options={color={rgb,1:red,0.0;green,0.0;blue,0.0}, draw opacity={1.0}, fill={rgb,1:red,0.5961;green,0.3059;blue,0.6392}, fill opacity={1.0}, line width={0.75}, rotate={0}, solid}]
        table[row sep={\\}]
        {
            \\
            1.0  0.25084166608134967  \\
            2.0  0.20055790086751188  \\
            3.0  0.042077626248238964  \\
            4.0  0.06122102250511263  \\
            5.0  0.051663164892689295  \\
            6.0  0.0233757251598945  \\
            7.0  0.020382935867183843  \\
            8.0  0.015455785366548014  \\
            9.0  0.016502350849391112  \\
            10.0  0.015637353191798365  \\
        }
        ;
    \addlegendentry {$\mathcal{H}_2 \otimes \mathcal{L}_2$-Opt \cite{HundMMS2021}}
    \addplot[color={rgb,1:red,1.0;green,0.498;blue,0.0}, name path={753e71df-d933-4dd4-9e55-e32b027558c2}, draw opacity={1.0}, line width={1}, solid, mark={triangle*}, mark size={3.0 pt}, mark repeat={1}, mark options={color={rgb,1:red,0.0;green,0.0;blue,0.0}, draw opacity={1.0}, fill={rgb,1:red,1.0;green,0.498;blue,0.0}, fill opacity={1.0}, line width={0.75}, rotate={180}, solid}]
        table[row sep={\\}]
        {
            \\
            1.0  0.34023463965876344  \\
            2.0  0.23180071483495787  \\
            3.0  0.29586080321514197  \\
            4.0  0.261333007267056  \\
            5.0  0.7592465731382829  \\
            6.0  0.315350741247336  \\
            7.0  0.2915285296243695  \\
            8.0  0.2082325100589644  \\
            9.0  0.2075999502746541  \\
            10.0  0.2742615676384892  \\
        }
        ;
    \addlegendentry {IRKAPH$ $}
\end{axis}
\end{tikzpicture} \\
  (a) $\hinflinf$ error comparison for different ROM orders
  % Recommended preamble:
% \usetikzlibrary{arrows.meta}
% \usetikzlibrary{backgrounds}
% \usepgfplotslibrary{patchplots}
% \usepgfplotslibrary{fillbetween}
% \pgfplotsset{%
%     layers/standard/.define layer set={%
%         background,axis background,axis grid,axis ticks,axis lines,axis tick labels,pre main,main,axis descriptions,axis foreground%
%     }{
%         grid style={/pgfplots/on layer=axis grid},%
%         tick style={/pgfplots/on layer=axis ticks},%
%         axis line style={/pgfplots/on layer=axis lines},%
%         label style={/pgfplots/on layer=axis descriptions},%
%         legend style={/pgfplots/on layer=axis descriptions},%
%         title style={/pgfplots/on layer=axis descriptions},%
%         colorbar style={/pgfplots/on layer=axis descriptions},%
%         ticklabel style={/pgfplots/on layer=axis tick labels},%
%         axis background@ style={/pgfplots/on layer=axis background},%
%         3d box foreground style={/pgfplots/on layer=axis foreground},%
%     },
% }

\begin{tikzpicture}[/tikz/background rectangle/.style={fill={rgb,1:red,1.0;green,1.0;blue,1.0}, draw opacity={1.0}}, show background rectangle]
\begin{axis}[point meta max={nan}, point meta min={nan}, legend cell align={left}, legend columns={2}, title={}, title style={at={{(0.5,1)}}, anchor={south}, font={{\fontsize{14 pt}{18.2 pt}\selectfont}}, color={rgb,1:red,0.0;green,0.0;blue,0.0}, draw opacity={1.0}, rotate={0.0}}, legend style={color={rgb,1:red,0.0;green,0.0;blue,0.0}, draw opacity={1.0}, line width={1}, solid, fill={rgb,1:red,1.0;green,1.0;blue,1.0}, fill opacity={1.0}, text opacity={1.0}, font={{\fontsize{8 pt}{10.4 pt}\selectfont}}, text={rgb,1:red,0.0;green,0.0;blue,0.0}, cells={anchor={center}}, at={(1.02, 1)}, anchor={north west}}, axis background/.style={fill={rgb,1:red,1.0;green,1.0;blue,1.0}, opacity={1.0}}, anchor={north west}, xshift={1.0mm}, yshift={-1.0mm}, width={0.8\textwidth}, height={0.35\textwidth}, scaled x ticks={false}, xlabel={reduced model order $r$}, x tick style={color={rgb,1:red,0.0;green,0.0;blue,0.0}, opacity={1.0}}, x tick label style={color={rgb,1:red,0.0;green,0.0;blue,0.0}, opacity={1.0}, rotate={0}}, xlabel style={}, xmajorgrids={true}, xmin={0.73}, xmax={10.27}, xticklabels={{$2$,$4$,$6$,$8$,$10$}}, xtick={{2.0,4.0,6.0,8.0,10.0}}, xtick align={inside}, xticklabel style={font={{\fontsize{8 pt}{10.4 pt}\selectfont}}, color={rgb,1:red,0.0;green,0.0;blue,0.0}, draw opacity={1.0}, rotate={0.0}}, x grid style={color={rgb,1:red,0.0;green,0.0;blue,0.0}, draw opacity={0.1}, line width={0.5}, solid}, axis x line*={left}, x axis line style={color={rgb,1:red,0.0;green,0.0;blue,0.0}, draw opacity={1.0}, line width={1}, solid}, scaled y ticks={false}, ylabel={$\mathcal{H}_2 \otimes \mathcal{L}_2$ error}, y tick style={color={rgb,1:red,0.0;green,0.0;blue,0.0}, opacity={1.0}}, y tick label style={color={rgb,1:red,0.0;green,0.0;blue,0.0}, opacity={1.0}, rotate={0}}, ylabel style={}, ymode={log}, log basis y={10}, ymajorgrids={true}, ymin={0.00064639779799002}, ymax={0.23073792842569033}, yticklabels={{$10^{-3}$,$10^{-2}$,$10^{-1}$}}, ytick={{0.001,0.01,0.1}}, ytick align={inside}, yticklabel style={font={{\fontsize{8 pt}{10.4 pt}\selectfont}}, color={rgb,1:red,0.0;green,0.0;blue,0.0}, draw opacity={1.0}, rotate={0.0}}, y grid style={color={rgb,1:red,0.0;green,0.0;blue,0.0}, draw opacity={0.1}, line width={0.5}, solid}, axis y line*={left}, y axis line style={color={rgb,1:red,0.0;green,0.0;blue,0.0}, draw opacity={1.0}, line width={1}, solid}, colorbar={false}]
    \addplot[color={rgb,1:red,0.302;green,0.6863;blue,0.2902}, name path={ef7289b3-6de0-4549-b26f-b8bfb7188f55}, draw opacity={1.0}, line width={1}, solid, mark={*}, mark size={3.0 pt}, mark repeat={1}, mark options={color={rgb,1:red,0.0;green,0.0;blue,0.0}, draw opacity={1.0}, fill={rgb,1:red,0.302;green,0.6863;blue,0.2902}, fill opacity={1.0}, line width={0.75}, rotate={0}, solid}]
        table[row sep={\\}]
        {
            \\
            1.0  0.1033086240445387  \\
            2.0  0.050804663465768875  \\
            3.0  0.02873137315239892  \\
            4.0  0.01803773103326831  \\
            5.0  0.0065255576056998275  \\
            6.0  0.0043213766987728825  \\
            7.0  0.0020297550624328027  \\
            8.0  0.0020212334527439364  \\
            9.0  0.0019219182527028553  \\
            10.0  0.0009593254240763519  \\
        }
        ;
    \addplot[color={rgb,1:red,0.5961;green,0.3059;blue,0.6392}, name path={5e4a9b03-d431-4c74-8cce-388d4aaded1c}, draw opacity={1.0}, line width={1}, solid, mark={star}, mark size={3.0 pt}, mark repeat={1}, mark options={color={rgb,1:red,0.0;green,0.0;blue,0.0}, draw opacity={1.0}, fill={rgb,1:red,0.5961;green,0.3059;blue,0.6392}, fill opacity={1.0}, line width={0.75}, rotate={0}, solid}]
        table[row sep={\\}]
        {
            \\
            1.0  0.07448545805818985  \\
            2.0  0.031461444984531274  \\
            3.0  0.014300097806513616  \\
            4.0  0.006256961006468841  \\
            5.0  0.0033929143996510014  \\
            6.0  0.0015641170766271242  \\
            7.0  0.0011058191723587597  \\
            8.0  0.0007633853227027855  \\
            9.0  0.0007800600545762375  \\
            10.0  0.0008066640456270851  \\
        }
        ;
    \addplot[color={rgb,1:red,1.0;green,0.498;blue,0.0}, name path={e6aed9ea-d337-4807-a640-ef6825513417}, draw opacity={1.0}, line width={1}, solid, mark={triangle*}, mark size={3.0 pt}, mark repeat={1}, mark options={color={rgb,1:red,0.0;green,0.0;blue,0.0}, draw opacity={1.0}, fill={rgb,1:red,1.0;green,0.498;blue,0.0}, fill opacity={1.0}, line width={0.75}, rotate={180}, solid}]
        table[row sep={\\}]
        {
            \\
            1.0  0.19537772657074548  \\
            2.0  0.17503908662964274  \\
            3.0  0.1728909583997595  \\
            4.0  0.13942590619624265  \\
            5.0  0.1472560042843293  \\
            6.0  0.09219082238718806  \\
            7.0  0.0874846260149802  \\
            8.0  0.06840099511303936  \\
            9.0  0.06839489948821997  \\
            10.0  0.05038292732602917  \\
        }
        ;
\end{axis}
\end{tikzpicture} \\
  (b) $\htwoltwo$ error comparison for different ROM orders
  \caption{Error comparison between SOBMOR and IRKA-PH on port-Hamiltonian mass-spring-damper chain example.}\label{fig:HinfxHinfPHMSD}
\end{figure}

In \cref{fig:HinfxHinfPHMSD}, we compare the accuracies of SOBMOR, IRKA-PH, and $\htwo \otimes \ltwo$-Opt. As expected, SOBMOR leads to higher accuracies in terms of the $\hinflinf$ error\footnote{We compute the $\hinf$ error for 200 parameter samples in $[0.5, 1.5]$ and plot the maximum $\hinf$ error of all samples as $\hinflinf$ error.}, while the method in~\cite{HundMMS2021} leads to lower $\htwoltwo$ errors. IRKA-PH leads to the worst accuracies in both norms. This can be explained by the fact that SOBMOR already finds more accurate ROMs in the nonparametric case (see~\cite{SchwerdtnerV2020}) and also seems to determine more accurate parametric models as demonstrated in our previous experiments. Note that in the nonparametric case, IRKA-PH has been vastly improved by utilizing different energy-representations~\cite{BreU21}.

% It can be seen clearly that SOBMOR leads to a higher accuracy for all parameter values and reduced model orders. In \cref{fig:HinfxHinfPHMSD} (c), the $\hinfhinf$ errors of SOBMOR decay (monotonically) \volker{erklären} as the reduced model order is increased, while this cannot be observed for IRKA-PH\@. In \cref{fig:HinfxHinfPHMSD} (a) and (b), the $\hinf$ and $\htwo$ errors, respectively, are shown as the parameter is varied for the two reduced model orders $r=5$ and $r=10$. This comparison shows that SOBMOR outperforms IRKA-PH in both error measures (even though SOBMOR only aims at minimizing the $\hinflinf$ error). For $r=10$, the SOBMOR error is smaller by more than an order of magnitude for almost all parameter configurations. This stark \benni{Wort eingeringelt} difference can be explained by the fact that SOBMOR already finds more \benni{IRKA can be vastly improved by different factoritazion [BreU22]} accurate ROMs than IRKA-PH in the nonparametric case~\cite{SchwerdtnerV2020} and again by the fact that SOBMOR allows for an optimization of the inter-sample ROMs, in contrast to the fusion proposed in~\cite{BaurBBG2011}.

The main drawback of our approach in comparison to the well-established PMOR methods is its longer (offline) runtime needed to compute the ROM\@. The runtime is mainly influenced by the number of optimization parameters, i.\,e.\ the order of the ROM and the number of ansatz functions. %In \cref{fig:runtimePHMSD}, we show the runtimes of SOBMOR for the structure-preserving experiments for the different reduced model orders we have tried. While IRKA-PH (and also the PMOR methods in the other experiments) typically terminate within a few seconds, our algorithm may take several hours to complete. 
In \cref{fig:runtimePHMSD}, we show the runtimes of SOBMOR and $\htwoltwo$-Opt for the different reduced model orders. Both runtimes tend to increase, as the reduced model order $r$ increases, which is expected, since the dimension of the optimization parameter vector increases with $r$. We note that SOBMOR is often significantly faster than $\htwoltwo$-Opt but both methods can take up to several hours to complete. This long runtime (the well-established PMOR methods typically terminate within a few seconds) is the main drawback of the optimization-based approaches. However, in many applications of PMOR, a computationally expensive offline phase is tolerated to achieve fast and reliable predictions from a small and accurate parametric ROM\@~\cite{AllaireKLMW2014, MaininiW2015}.

% While IRKA-PH (and also the PMOR methods in the other experiments) typically terminate within a few seconds, our algorithm may take several hours to complete.

% However, since PMOR is usually performed offline, a low runtime is often not an essential requirement for a PMOR algorithm. \volker{saubere Erklärung}

\begin{figure}[htpb]
  \centering
  \input{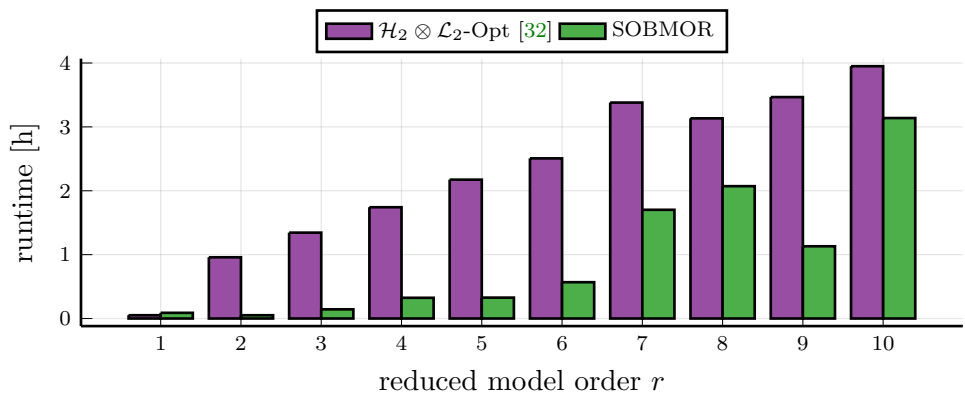}
\caption{Runtimes of SOBMOR for the port-Hamiltonian mass-spring-damper chain example for different reduced model orders. Experiments were performed on a Desktop PC with an \emph{Intel\textsuperscript{\textregistered} Core\textsuperscript{\texttrademark} i9--9900K} CPU at 3.60GHz and 32 GB of RAM\@.}\label{fig:runtimePHMSD}
\end{figure}

\section{Conclusion}\label{sec:conclusion}
We have presented a parametric model order reduction method by extending SOBMOR to parametric systems. For this, we have provided a new parameterization of stable parametric systems and an extension of an adaptive sampling strategy to the multi-dimensional case. Several numerical experiments demonstrate the high accuracy of our method in a comparison to state-of-the-art PMOR methods. 
% However, we must note that our method often has a higher runtime than the state-of-the-art methods. This is because in the parametric case $n_\theta$ is often in the hundreds or thousands, especially, if the ROM dimension and number of matrix coefficients gets higher. But since MOR is typically performed offline, we do not consider this a major drawback of our method.

We briefly present perspectives of future research. Model reduction of nonparametric descriptor systems (i.\,e.\ systems with a singular $E$-matrix) using SOBMOR was pursued in~\cite{MosSMV2022b, MosSMV22}. Such systems may have a transfer function containing a polynomial part. If this polynomial part is constant, we can apply our methodology with no further changes. In case of higher order polynomials, that may occur for higher index systems, the polynomial parts must be matched \emph{exactly} in the ROM to obtain a finite $\hinf$ error. This also applies to PMOR\@: If the polynomial part of the FOM transfer function has degree less than one for all $p \in \pdom$, we can straightforwardly extend the method presented in this article. Otherwise, the polynomial part must be matched exactly for all $p \in \pdom$, for which the degree is greater or equal to one, which requires a priori knowledge of the polynomial part of the FOM transfer function. % for all $p \in \pdom$.

Another research perspective concerns the choice of the ansatz functions in~\eqref{eq:ansatz}. Until now, we have only used univariate and linear hat functions. However, the ansatz functions may be designed specifically for a given use-case which allows for a more efficient ansatz with fewer parameters that need to be tuned. We currently investigate such a tailored parameterization for model reduction in real-time optimization-based retinal laser treatment~\cite{SchallerWKMBMW2022}.

\section*{Acknowledgments}
We gratefully acknowledge Volker Mehrmann and Benjamin Unger for their helpful comments on an earlier version of this manuscript.

\bibliographystyle{siamplain}
\bibliography{references}

\begin{thebibliography}{10}

\bibitem{AliyevBMSV2017}
{\sc N.~Aliyev, P.~Benner, E.~Mengi, P.~Schwerdtner, and M.~Voigt}, {\em
  Large-scale computation of $\mathcal{L}_\infty$-norms by a greedy subspace
  method}, {SIAM} J. Matr. Anal. Appl., 38 (2017), pp.~1496--1516,
  \url{https://doi.org/10.1137/16M1086200}.

\bibitem{AllaireKLMW2014}
{\sc D.~Allaire, D.~Kordonowy, M.~Lecerf, L.~Mainini, and K.~Willcox}, {\em
  Multifidelity {DDDAS} methods with application to a self-aware aerospace
  vehicle}, Procedia Comput. Sci., 29 (2014), pp.~1182--1192,
  \url{https://doi.org/10.1016/j.procs.2014.05.106}.

\bibitem{AmsallemCCF2009}
{\sc D.~Amsallem, J.~Cortial, K.~Carlberg, and C.~Farhat}, {\em A method for
  interpolating on manifolds structural dynamics reduced-order models}, Int. J.
  Numer. Methods. Eng., 80 (2009), pp.~1241--1258,
  \url{https://doi.org/10.1002/nme.2681}.

\bibitem{AmsF2011}
{\sc D.~Amsallem and C.~Farhat}, {\em An online method for interpolating linear
  parametric reduced-order models}, {SIAM} J. Sci. Comput., 33 (2011),
  pp.~2169--2198, \url{https://doi.org/10.1137/100813051}.

\bibitem{Antoulas2005}
{\sc A.~C. Antoulas}, {\em Approximation of large-scale dynamical systems},
  SIAM, 2005, \url{https://doi.org/10.1137/1.9780898718713}.

\bibitem{AntoulasBG2020}
{\sc A.~C. Antoulas, C.~A. Beattie, and S.~G{\"u}{\u{g}}ercin}, {\em
  Interpolatory methods for model reduction}, SIAM, 2020,
  \url{https://doi.org/10.1137/1.9781611976083}.

\bibitem{ApkarianN2018}
{\sc P.~Apkarian and D.~Noll}, {\em Structured $\mathcal{H}_\infty$-control of
  infinite-dimensional systems}, Int. J. Robust Nonlinear Control., 28 (2018),
  pp.~3212--3238, \url{https://doi.org/10.1002/rnc.4073}.

\bibitem{BarraulsMNP2004}
{\sc M.~Barrault, Y.~Maday, N.~C. Nguyen, and A.~T. Patera}, {\em An
  ‘empirical interpolation’method: application to efficient reduced-basis
  discretization of partial differential equations}, Comptes Rendus
  Mathematique, 339 (2004), pp.~667--672,
  \url{https://doi.org/10.1016/j.crma.2004.08.006}.

\bibitem{BaurBBG2011}
{\sc U.~Baur, C.~Beattie, P.~Benner, and S.~Gugercin}, {\em Interpolatory
  projection methods for parameterized model reduction}, {SIAM} J. Sci.
  Comput., 33 (2011), pp.~2489--2518, \url{https://doi.org/10.1137/090776925}.

\bibitem{BaurB2009}
{\sc U.~Baur and P.~Benner}, {\em Model reduction for parametric systems using
  balanced truncation and interpolation}, AT-Autom., 57 (2009), pp.~411--419,
  \url{https://doi.org/10.1524/auto.2009.0787}.

\bibitem{BaurBHHMO2011}
{\sc U.~Baur, P.~Benner, B.~Haasdonk, C.~Himpe, I.~Martini, and M.~Ohlberger},
  {\em Chapter 9: Comparison of Methods for Parametric Model Order Reduction of
  Time-Dependent Problems}, SIAM, 2017, pp.~377--407,
  \url{https://doi.org/10.1137/1.9781611974829.ch9}.

\bibitem{BeaGM21}
{\sc C.~A. Beattie, S.~Gugercin, and V.~Mehrmann}, {\em Structure-preserving
  interpolatory model reduction for port-{H}amiltonian differential-algebraic
  systems}, in Realization and Model Reduction of Dynamical Systems: A
  Festschrift in Honor of the 70th Birthday of Thanos Antoulas, C.~Beattie,
  P.~Benner, M.~Embree, S.~Gugercin, and S.~Lefteriu, eds., Springer, Cham,
  2022.

\bibitem{BennerGW2015}
{\sc P.~Benner, S.~Gugercin, and K.~Willcox}, {\em A survey of projection-based
  model reduction methods for parametric dynamical systems}, SIAM Rev., 57
  (2015), pp.~483--531, \url{https://doi.org/10.1137/130932715}.

\bibitem{BennerOCW2017}
{\sc P.~Benner, M.~Ohlberger, A.~Cohen, and K.~Willcox}, {\em Model reduction
  and approximation: theory and algorithms}, SIAM, 2017,
  \url{https://doi.org/10.1137/1.9781611974829}.

\bibitem{Benner2012}
{\sc P.~Benner and M.~Voigt}, {\em ${H}_\infty$-norm computation for large and
  sparse descriptor systems}, PAMM Proc. Appl. Math. Mech., 12 (2012),
  pp.~797--800, \url{https://doi.org/10.1002/pamm.201210383}.

\bibitem{Binder2021}
{\sc A.~Binder, O.~Jadhav, and V.~Mehrmann}, {\em Model order reduction for the
  simulation of parametric interest rate models in financial risk analysis}, J.
  Math. Ind., 11 (2021), pp.~1--34,
  \url{https://doi.org/10.1186/s13362-021-00105-8}.

\bibitem{BreU21}
{\sc T.~Breiten and B.~Unger}, {\em Passivity preserving model reduction via
  spectral factorization}, Automatica, 142 (2022), p.~110368,
  \url{https://doi.org/10.1016/j.automatica.2022.110368}.

\bibitem{Bui2008}
{\sc T.~Bui-Thanh, K.~Willcox, and O.~Ghattas}, {\em Model reduction for
  large-scale systems with high-dimensional parametric input space}, {SIAM} J.
  Sci. Comput., 30 (2008), pp.~3270--3288,
  \url{https://doi.org/10.1137/070694855}.

\bibitem{ChaturantabutS2010}
{\sc S.~Chaturantabut and D.~C. Sorensen}, {\em Nonlinear model reduction via
  discrete empirical interpolation}, {SIAM} J. Sci. Comput., 32 (2010),
  pp.~2737--2764, \url{https://doi.org/10.1137/090766498}.

\bibitem{CohenDDN2020}
{\sc A.~Cohen, W.~Dahmen, R.~DeVore, and J.~Nichols}, {\em Reduced basis greedy
  selection using random training sets}, ESAIM: Math. Model. Numer. Anal., 54
  (2020), pp.~1509--1524, \url{https://doi.org/10.1051/m2an/2020004}.

\bibitem{Daniel2004}
{\sc L.~Daniel, O.~C. Siong, L.~S. Chay, K.~H. Lee, and J.~White}, {\em A
  multiparameter moment-matching model-reduction approach for generating
  geometrically parameterized interconnect performance models}, IEEE Trans.
  Comput.-Aided Des. Integr. Circuits Syst., 23 (2004), pp.~678--693,
  \url{https://doi.org/10.1109/TCAD.2004.826583}.

\bibitem{DegrooteVW2010}
{\sc J.~Degroote, J.~Vierendeels, and K.~Willcox}, {\em Interpolation among
  reduced-order matrices to obtain parameterized models for design,
  optimization and probabilistic analysis}, Int. J. Numer. Methods Fluids, 63
  (2010), pp.~207--230, \url{https://doi.org/10.1002/fld.2089}.

\bibitem{Freitag2014}
{\sc M.~A. Freitag, A.~Spence, and P.~Van~Dooren}, {\em Calculating the
  ${H}_\infty$-norm using the implicit determinant method}, SIAM J. Matrix
  Anal. Appl., 35 (2014), pp.~619--635.

\bibitem{GeussPL2013}
{\sc M.~Geuss, H.~Panzer, and B.~Lohmann}, {\em On parametric model order
  reduction by matrix interpolation}, in 2013 Eur. Control Conf. {ECC}, 2013,
  pp.~3433--3438, \url{https://doi.org/10.23919/ECC.2013.6669829}.

\bibitem{GillisS2017}
{\sc N.~Gillis and P.~Sharma}, {\em On computing the distance to stability for
  matrices using linear dissipative {H}amiltonian systems}, Automatica, 85
  (2017), pp.~113--121, \url{https://doi.org/10.1016/j.automatica.2017.07.047}.

\bibitem{Gosea2021}
{\sc I.~V. Gosea, S.~Gugercin, and B.~Unger}, {\em Parametric model reduction
  via rational interpolation along parameters}, in 2021 60th IEEE Conference on
  Decision and Control (CDC), 2021, pp.~6895--6900,
  \url{https://doi.org/10.1109/CDC45484.2021.9682841}.

\bibitem{GugercinA2008}
{\sc S.~Gugercin, A.~C. Antoulas, and C.~Beattie}, {\em $\mathcal{H}_2$ model
  reduction for large-scale linear dynamical systems}, SIAM J. Matrix Anal.
  Appl., 30 (2008), pp.~609--638, \url{https://doi.org/10.1137/060666123}.

\bibitem{Gugercin2012}
{\sc S.~Gugercin, R.~V. Polyuga, C.~Beattie, and A.~van~der Schaft}, {\em
  Structure-preserving tangential interpolation for model reduction of
  port-{H}amiltonian systems}, Automatica, 48 (2012), pp.~1963--1974,
  \url{https://doi.org/10.1016/j.automatica.2012.05.052}.

\bibitem{Guglielmi2013}
{\sc N.~Guglielmi, M.~G\"urb\"uzbalaban, and M.~L. Overton}, {\em Fast
  approximation of the ${H}_\infty$-norm via optimization over spectral value
  sets}, SIAM J. Matrix Anal. Appl., 34 (2013), pp.~709--737.

\bibitem{HauMM19}
{\sc S.-A. Hauschild, N.~Marheineke, and V.~Mehrmann}, {\em Model reduction
  techniques for linear constant coefficient port-{H}amiltonian
  differential-algebraic systems}, Control Cybernet., 48 (2019), pp.~125--152.

\bibitem{HesthavenRS2016}
{\sc J.~S. Hesthaven, G.~Rozza, B.~Stamm, et~al.}, {\em Certified reduced basis
  methods for parametrized partial differential equations}, vol.~590, Springer,
  2016.

\bibitem{HundMMS2021}
{\sc M.~Hund, T.~Mitchell, P.~Mlinarić, and J.~Saak}, {\em Optimization-based
  parametric model order reduction via $\mathcal{H}_2 \otimes \mathcal{L}_2$
  first-order necessary conditions}, SIAM Journal on Scientific Computing, 44
  (2022), pp.~A1554--A1578, \url{https://doi.org/10.1137/21M140290X}.

\bibitem{KleymanG2020}
{\sc V.~Kleyman, H.~Gernandt, K.~Worthmann, H.~S. Abbas, R.~Brinkmann, and
  M.~A. Müller}, {\em Modeling and parameter identification for real-time
  temperature controlled retinal laser therapies}, AT-Autom., 68 (2020),
  pp.~953--966, \url{https://doi.org/10.1515/auto-2020-0074}.

\bibitem{Lasance2001}
{\sc C.~Lasance}, {\em Two benchmarks to facilitate the study of compact
  thermal modeling phenomena}, IEEE Transactions on Components and Packaging
  Technologies, 24 (2001), pp.~559--565,
  \url{https://doi.org/10.1109/6144.974943}.

\bibitem{Lassila2012}
{\sc T.~Lassila, A.~Quarteroni, and G.~Rozza}, {\em A reduced basis model with
  parametric coupling for fluid-structure interaction problems}, SIAM Journal
  on Scientific Computing, 34 (2012), pp.~A1187--A1213,
  \url{https://doi.org/10.1137/110819950}.

\bibitem{LohmannE2007}
{\sc B.~Lohmann and R.~Eid}, {\em Efficient order reduction of parametric and
  nonlinear models by superposition of locally reduced models}, in Methoden und
  Anwendungen der Regelungstechnik. Erlangen-M{\"u}nchener Workshops, 2007,
  pp.~27--36.

\bibitem{MaininiW2015}
{\sc L.~Mainini and K.~Willcox}, {\em Surrogate modeling approach to support
  real-time structural assessment and decision making}, AIAA Journal, 53
  (2015), pp.~1612--1626, \url{https://doi.org/10.2514/1.J053464}.

\bibitem{MehU22}
{\sc V.~Mehrmann and B.~Unger}, {\em Control of port-{H}amiltonian
  differential-algebraic systems and applications}, 2022,
  \url{https://arxiv.org/abs/2201.06590}.

\bibitem{MitO15b}
{\sc T.~Mitchell and M.~L. Overton}, {\em Hybrid expansion-contraction: a
  robust scaleable method for approximating the {$H_\infty$} norm}, IMA J.
  Numer. Anal., 36 (2016), pp.~985--1014.

\bibitem{MogensenR2018}
{\sc P.~K. Mogensen and A.~N. Riseth}, {\em Optim: A mathematical optimization
  package for {Julia}}, J. Open Source Softw., 3 (2018), pp.~615--618,
  \url{https://doi.org/10.21105/joss.00615}.

\bibitem{Moore1981}
{\sc B.~Moore}, {\em Principal component analysis in linear systems:
  Controllability, observability, and model reduction}, IEEE Trans. Automat.
  Contr., 26 (1981), pp.~17--32,
  \url{https://doi.org/10.1109/TAC.1981.1102568}.

\bibitem{Moser2020}
{\sc T.~{Moser} and B.~{Lohmann}}, {\em A new {R}iemannian framework for
  efficient {$\mathcal{H}_{2}$}-optimal model reduction of port-{H}amiltonian
  systems}, in Proceedings of 59th IEEE Conference on Decisison and Control
  (CDC), Jeju Island, Republic of Korea, 2020, pp.~5043--5049.

\bibitem{MosSMV2022b}
{\sc T.~Moser, P.~Schwerdtner, V.~Mehrmann, and M.~Voigt}, {\em
  Structure-preserving model order reduction for index two port-{H}amiltonian
  descriptor systems}, 2022, \url{https://arxiv.org/abs/2206.03942}.

\bibitem{MullisR1976}
{\sc C.~Mullis and R.~Roberts}, {\em Synthesis of minimum roundoff noise fixed
  point digital filters}, IEEE Trans. Circuits Syst., 23 (1976), pp.~551--562,
  \url{https://doi.org/10.1109/TCS.1976.1084254}.

\bibitem{Niederreiter2002}
{\sc H.~Niederreiter}, {\em Coding theory and cryptology}, vol.~1, World
  Scientific, 2002.

\bibitem{PanzerML2010}
{\sc H.~Panzer, J.~Mohring, R.~Eid, and B.~Lohmann}, {\em Parametric model
  order reduction by matrix interpolation}, AT-Autom., 58 (2010), pp.~475--484,
  \url{https://doi.org/10.1524/auto.2010.0863}.

\bibitem{PolyugaS2010}
{\sc R.~V. Polyuga and A.~Van~der Schaft}, {\em Structure preserving model
  reduction of port-{H}amiltonian systems by moment matching at infinity},
  Automatica, 46 (2010), pp.~665--672,
  \url{https://doi.org/10.1016/j.automatica.2010.01.018}.

\bibitem{Prud2002}
{\sc C.~Prud’Homme, D.~V. Rovas, K.~Veroy, L.~Machiels, Y.~Maday, A.~T.
  Patera, and G.~Turinici}, {\em Reliable real-time solution of parametrized
  partial differential equations: Reduced-basis output bound methods}, J.
  Fluids Eng., 124 (2002), pp.~70--80, \url{https://doi.org/10.1115/1.1448332}.

\bibitem{Sato2018}
{\sc K.~Sato}, {\em {R}iemannian optimal model reduction of linear
  port-{H}amiltonian systems}, Automatica, 93 (2018), pp.~428--434.

\bibitem{SchallerWKMBMW2022}
{\sc M.~Schaller, M.~Wilson, V.~Kleyman, M.~Mordmüller, R.~Brinkmann, M.~A.
  Müller, and K.~Worthmann}, {\em Parameter estimation and model reduction for
  model predictive control in retinal laser treatment}, Control Engineering
  Practice, 128 (2022), p.~105320,
  \url{https://doi.org/j.conengprac.2022.105320}.

\bibitem{MosSMV22}
{\sc P.~Schwerdtner, T.~Moser, V.~Mehrmann, and M.~Voigt}, {\em
  Structure-preserving model order reduction for index one port-{H}amiltonian
  descriptor systems}, 2022, \url{https://arxiv.org/abs/2206.01608}.

\bibitem{SchwerdtnerV2020}
{\sc P.~Schwerdtner and M.~Voigt}, {\em {SOBMOR}: {S}tructured
  optimization-based model order reduction}, 2020,
  \url{https://arxiv.org/abs/2011.07567}.

\bibitem{Schwerdtner2021}
{\sc P.~Schwerdtner and M.~Voigt}, {\em Adaptive sampling for
  structure-preserving model order reduction of port-{H}amiltonian systems},
  IFAC-PapersOnline, 54 (2021), pp.~143--148,
  \url{https://doi.org/10.1016/j.ifacol.2021.11.069}.

\bibitem{Son2013}
{\sc N.~T. Son}, {\em A real time procedure for affinely dependent parametric
  model order reduction using interpolation on {G}rassmann manifolds}, Int. J.
  Numer. Methods. Eng., 93 (2013), pp.~818--833,
  \url{https://doi.org/10.1002/nme.4408}.

\bibitem{StahlSM2022}
{\sc N.~Stahl, B.~Liljegren-Sailer, and N.~Marheineke}, {\em Certified reduced
  basis method for the damped wave equations on networks}, 2022,
  \url{https://arxiv.org/abs/2204.05010}.

\bibitem{UngerG2019}
{\sc B.~Unger and S.~Gugercin}, {\em Kolmogorov n-widths for linear dynamical
  systems}, Advances in Computational Mathematics, 45 (2019), pp.~2273--2286,
  \url{https://doi.org/10.1007/s10444-019-09701-0}.

\bibitem{Weile1999}
{\sc D.~Weile, E.~Michielssen, E.~Grimme, and K.~Gallivan}, {\em A method for
  generating rational interpolant reduced order models of two-parameter linear
  systems}, Appl. Math. Lett., 12 (1999), pp.~93--102,
  \url{https://doi.org/10.1016/S0893-9659(99)00063-4}.

\bibitem{Wittmuess2016}
{\sc P.~Wittmuess, C.~Tarin, A.~Keck, E.~Arnold, and O.~Sawodny}, {\em
  Parametric model order reduction via balanced truncation with {T}aylor series
  representation}, IEEE Trans. Automat. Contr., 61 (2016), pp.~3438--3451,
  \url{https://doi.org/10.1109/TAC.2016.2521361}.

\end{thebibliography}

\appendix
\section{Gradient Computation}\label{sec:gradcomp}

First, we restate a preliminary result that will be frequently used in the proof of Theorem~\ref{thm::gradients}. 
\begin{lemma}[{\cite[Lemma 3.2]{SchwerdtnerV2020}}]\label{lem::trace}
  Let $A \in \C^{m \times n}$ and let $e^{(j)}_i \in \C^j$ denote the $i$-th standard basis vector of $\C^{j}$. Then
  \begin{align}
    \trace\left(A\vtf_m\big(e_i^{(nm)}\big)\right)&=\big(e_i^{(nm)}\big)^\T \ftv\big(A^\T\big)\label{eq::tracevtf}.
\intertext{Further, setting $m=n$ and defining $n_1 := \frac{n(n+1)}{2}$ and $n_2 := \frac{n(n-1)}{2}$, we have}
    \trace\left(A\vtu\big(e_i^{(n_1)}\big)\right) &= \big(e_i^{(n_1)}\big)^\T \utv\big(A^\T\big), \label{eq::tracevtu}\\
    \trace\left(A\vtsu\big(e_i^{(n_2)}\big)\right) &= \big(e_i^{(n_2)}\big)^\T \sutv\big(A^\T\big)\label{eq::tracevtsu}.
  \end{align}
\end{lemma}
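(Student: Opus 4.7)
The plan is to establish all three identities by the same direct mechanism: each reshaping operator applied to a standard basis vector produces a matrix with exactly one nonzero entry (equal to $1$), so multiplying by $A$ and taking the trace selects a single entry of $A$, and the right-hand side picks out that same entry via the transpose and vectorization conventions. No analytic machinery is needed—only careful bookkeeping of indices against the explicit formulas given in the definition of $\vtf_m$, $\ftv$, $\vtu$, $\utv$, $\vtsu$, $\sutv$.

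For \eqref{eq::tracevtf}, I would write the index as $i = (k-1)n + j$ with $j \in \{1,\ldots,n\}$ and $k \in \{1,\ldots,m\}$, this being the unique decomposition consistent with the column-major layout used by $\vtf_m$. Unwinding the definition, $\vtf_m(e_i^{(nm)}) = e_j^{(n)} (e_k^{(m)})^\top$, so
\[
  \trace\!\left(A \vtf_m(e_i^{(nm)})\right) = \trace\!\left(A e_j^{(n)} (e_k^{(m)})^\top\right) = (e_k^{(m)})^\top A e_j^{(n)} = A_{kj}.
\]
On the right, $\ftv(A^\top)$ stacks the columns of $A^\top$ in order, so its $i$-th entry is $(A^\top)_{j,k} = A_{kj}$, matching the trace.

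For \eqref{eq::tracevtu} and \eqref{eq::tracevtsu} the strategy is identical, with one layer of combinatorial bookkeeping added. Each $i \in \{1,\ldots,n_1\}$ (respectively $\{1,\ldots,n_2\}$) corresponds to a unique upper-triangular position $(a,b)$ with $a \le b$ (respectively $a < b$) under the ordering used by $\vtu$ and $\vtsu$. Then $\vtu(e_i^{(n_1)}) = e_a^{(n)} (e_b^{(n)})^\top$ and analogously for $\vtsu$, so the trace evaluates to $A_{b,a}$. Inspecting the explicit row-wise enumeration defining $\utv$ (respectively $\sutv$), the $i$-th entry of $\utv(A^\top)$ (respectively $\sutv(A^\top)$) is exactly $(A^\top)_{a,b} = A_{b,a}$, closing both identities.

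The only real obstacle is verifying that the index pairings used by the ``vector-to-matrix'' operator and its ``matrix-to-vector'' inverse (applied to $A^\top$) agree under the same linear ordering of $\{1,\ldots,nm\}$, $\{1,\ldots,n_1\}$, $\{1,\ldots,n_2\}$; this is a purely combinatorial check against the explicit formulas in the preceding definition, but it is the step where a transposition error would cause a false identity. Once this pairing is certified, each identity reduces to a one-line trace computation.
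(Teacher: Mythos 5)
Your proposal is correct. The paper does not actually prove this lemma---it restates it verbatim from \cite[Lemma~3.2]{SchwerdtnerV2020} and relies on the citation---and the argument you give (each reshaping of a standard basis vector is a rank-one matrix $e_a e_b^\T$ with a single unit entry, so the trace extracts one entry $A_{ba}$, which the row-wise or column-wise enumeration of $\ftv$, $\utv$, $\sutv$ applied to $A^\T$ reproduces at index $i$) is exactly the standard direct verification one would expect; your identification of the index pairing between each vector-to-matrix map and its matrix-to-vector counterpart as the only point where an error could creep in is the right place to focus the bookkeeping.
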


\begin{myproof}[Proof of Theorem~\ref{thm::gradients}]
	We start with the gradient $\gradB$ with respect to the input matrix parameterization. To this end, fix $i\in \{1,\,\dots,\,\kappa\cdot\dimr\cdot\dimu\}$ and let $e_i$  be the $i$-th standard basis vector of $\R^{\kappa\cdot\dimr\cdot\dimu}$. Consider the perturbation of the input matrix parameterization $\delta\theta_i(\epsilon) := \begin{bmatrix} \epsilon e_i^\T,\,0_{n_\theta - \kappa\cdot \dimr\cdot\dimu} \end{bmatrix}^\T~\in~\R^{n_\theta}$. Further set $\delta \pB(p^{(0)}) := \pB(p^{(0)},e_i)$ and compute
	\begin{align*}
	\ptf(s_0,p^{(0)};\theta_0+\delta\theta_i(\varepsilon)) = \ptf(s_0,&p^{(0)};\theta_0) \\&+ \varepsilon \pC(p^{(0)}, \theta_0){\left(sI-\pA(p^{(0)}, \theta_0)\right)}^{-1}\delta\pB(p^{(0)}),
	\end{align*}
	which clearly is differentiable in $\varepsilon$.
	Thus, by our nonzero and simplicity assumption on the $j$-th singular value of $\tf(s_0,p^{(0)})-\ptf(s_0,p^{(0)};\theta_0)$, we obtain the differentiability of $\varepsilon \mapsto \sigma_j(\tf(s_0,p^{(0)}) - \ptf(s_0,p^{(0)};\theta_0+\delta \theta_i(\varepsilon)))$.
	
	Abbreviating $\dyn= s_0I-\pA(p^{(0)},\theta_0)$, and using the invariance of the trace under cyclic permutations, we obtain
	\begin{align*}
	\tfrac{\text{d}}{\text{d}\varepsilon} \sigma_j(\tf(s_0,p^{(0)}) - \ptf(s_0,p^{(0)};\theta + \delta \theta_i(\varepsilon)) &=  -\Real(\uu^\H\pC(p^{(0)}, \theta_0)\dyn^{-1}\delta\pB(p^{(0)})\vv) \\
	&=-\Real(\trace(\vv\uu^\H\pC(p^{(0)}, \theta_0)\dyn^{-1}\delta\pB(p^{(0)}))).
	\end{align*}
	Further, $\delta \pB(p^{(0)}) := \pB(p^{(0)},e_i)= \sum_{i=1}^{\kappa}f_j^B(p^{(0)}) B_j(e_i)$ with $B_j(e_i) = \vtf_{\dimu}((e_i)_{B_j})$ if $i~\in~\{(j-1)\cdot \dimr \cdot \dimu,\ldots, j\cdot \dimr \cdot \dimu\}$ and zero otherwise, where $(e_i)_{B_i}$ denotes the part of $e_i\in \R^{\kappa\cdot\dimr\cdot\dimu}$ corresponding to the parameterization of $B_i$ according to the partitioning~\eqref{eq:partitioning}. Thus, for $j\in \N$ such that $i\in \{(j-1)\cdot \dimr \cdot \dimu,\ldots, j\cdot \dimr \cdot \dimu\}$ and invoking \cref{lem::trace} we have
	\begin{align*}
	\Real(\trace(\vv\uu^\H\pC(p^{(0)}\!, \!\theta_0)\dyn^{-1}\delta\pB(p^{(0)}))) &\!=\! \Real(\trace(\vv\uu^\H\pC(p^{(0)}\!,\! \theta_0)\dyn^{-1}f_j^B(p^{(0)})\vtf_{\dimu}((e_i)_{B_j}))\\
	&\!=\! \Real((e_i)^\top \ftv((\vv\uu^\H\pC(p^{(0)}, \theta_0)\dyn^{-1}f_j^B(p^{(0)}))^\top))
	\end{align*}
	which proves~\eqref{eq:gradB}. The formula considering the output matrix~\eqref{eq:gradC} and the feedthrough matrix~\eqref{eq:gradD} can be computed analogously. %Considering the gradient w.r.t.\ , we compute by similar arguments%we set the perturbation $\delta\theta_i(\varepsilon) = \begin{bmatrix} 0,\,\epsilon e_i^\T,\,0 \end{bmatrix}^\T~\in~\R^{n_\theta}$ with $e_i\in \R^{\kappa\cdot\dimy\cdot\dimu}$ and have that
	%        \begin{align*}
	%          \tfrac{\text{d}}{\text{d}\varepsilon} \sigma_j(\tf(s_0,p^{(0)}) - \ptf(s_0,p^{(0)};\theta + \delta \theta_i(\varepsilon)) &= -\Real(\uu^\H f_j(p^{(0)})\vtf((e_i)_{D_j})\vv) \\
	%                                                                                                                    &=-\Real(\trace(\vv\uu^\H f_j(p^{(0)})\vtf((e_i)_{D_j}))) \\&= -\Real((e_i)_{D_j}^\top \ftv((\vv\uu^\H f_j(p^{(0)}))^\top)).
	%        \end{align*}
	To prove \eqref{eq:gradR}, consider now a perturbation with respect to the parameterization $\delta\theta_i(\epsilon) := \begin{bmatrix} 0_{n_\theta - \kappa\cdot\dimr(\dimr+1)},\epsilon e_i^\T,0_{\kappa\cdot\dimr(\dimr+1)/2} \end{bmatrix}^\T~\in~\R^{n_\theta}$ dissipation matrix, where $e_i \in \R^{\kappa\cdot\dimr(\dimr+1)/2}$. Then we have
	\begin{align*}
	&\ptf(p^{(0)},\theta_0 + \delta\theta_i(\varepsilon)) \\&\!= \!\pC(p^{(0)}, \theta_0){\left(sI\!-\!(\pJ(p^{(0)},\theta_0) \!-\! \pR(p^{(0)},\theta_0\! +\! \delta\theta_i(\varepsilon)))\pQ(p^{(0)},\theta_0)\right)}^{-1}\pB(p^{(0)},\theta_0) \\&+ \pD(p^{(0)},\theta_0).
	\end{align*}
	Let $j \in \N$ such that $i\in \{n_\theta - (\kappa-j-1)\cdot\dimr(\dimr+1)/2,\ldots,n_\theta - (\kappa-j)\cdot\dimr(\dimr+1)/2\}$, i.e., the parameter perturbation via $\varepsilon e_i$ corresponds to the block of $\theta_{R_j}$ in view of the partitioning \eqref{eq:partitioning}. Then, 
	\begin{align*}
	\pR(p^{(0)},\theta_0 + \delta\theta_i(\varepsilon)) = \sum_{k=1,k\neq j}^{\kappa} f_k(p^{(0)})R_k(\theta_0) + f_j^R(p^{(0)})R_j(\theta_0 + \delta \theta_i(\varepsilon)),
	\end{align*}
	where \begin{align}
	R_j(\theta_0 + \delta \theta_i(\varepsilon)) &= \vtu((\theta_0 + \delta \theta_i(\varepsilon))_{R_j}) \vtu((\theta_0 + \delta \theta_i(\varepsilon))_{R_j})^\top \nonumber\\
	&= \vtu(\theta_0)\vtu(\theta_0)^\top  + \varepsilon(\vtu((e_i)_{R_j})\vtu(\theta_0)^\top +\vtu(\theta_0) \vtu((e_i)_{R_j})^\top)\label{eq:Rperturbation}
	\\&\qquad \qquad \qquad + \varepsilon^2(\vtu((e_i)_{R_j})\vtu((e_i)_{R_j})^\top).\nonumber
	\end{align}
	Thus, setting $\delta R := (\vtu((e_i)_{R_j})\vtu(\theta_0)^\top +\vtu(\theta_0) \vtu((e_i)_{R_j})^\top)$ and abbreviating $\dyn = sI - \pA(p^{(0)},\theta_0) = sI - (\pJ(p^{(0)},\theta_0) - \pR(p^{(0)},\theta_0))\pQ(p^{(0)},\theta_0)$, we get
	\begin{align*}
	(sI-(\pJ(&p^{(0)},\theta_0) - \pR(p^{(0)},\theta_0 + \delta\theta_i(\varepsilon)))\pQ(p^{(0)},\theta_0)\\
	&= \dyn + \varepsilon f_j^R(p^{(0)})\left(\delta R + \varepsilon(\vtu((e_i)_{R_j})\vtu((e_i)_{R_j})^\top\right)\pQ(p^{(0)},\theta_0) \\
	&= \dyn\left(I + \dyn^{-1}\varepsilon f_j^R(p^{(0)})\left(\delta R + \varepsilon(\vtu((e_i)_{R_j})\vtu((e_i)_{R_j})^\top\right)\pQ(p^{(0)},\theta_0)\right)
	\end{align*}
	and hence, choosing $\varepsilon>0$ small enough and applying a Neumann argument to the right-hand side,
	\begin{align*}
	\left(sI-\right.&\left.(\pJ(p^{(0)},\theta_0) - \pR(p^{(0)},\theta_0 + \delta\theta_i(\varepsilon)))\pQ(p^{(0)},\theta_0)\right)^{-1} 
	\\& =\left(I + \dyn^{-1}\varepsilon f_j^R(p^{(0)})\left(\delta R + \varepsilon(\vtu((e_i)_{R_j})\vtu((e_i)_{R_j})^\top\right)\pQ(p^{(0)},\theta_0)\right)^{-1}\dyn^{-1}
	\\&= \sum_{l=0}^{\infty}\left(-\dyn^{-1}\varepsilon f_j^R(p^{(0)})\left(\delta R + \varepsilon(\vtu((e_i)_{R_j})\vtu((e_i)_{R_j})^\top\right)\pQ(p^{(0)},\theta_0)\right)^l \dyn^{-1}.
	\end{align*}
	Thus, 
	\begin{align*}
	\ptf(p^{(0)},\theta_0 + \delta\theta_i(\varepsilon)) = &\ptf(p^{(0)},\theta_0) - \pC(p^{(0)},\theta_0)\!\sum_{l=1}^{\infty}\!\!\left(\dyn^{-1}\!\varepsilon f_j^R(p^{(0)})\left(\delta R \!\right. \right.\\&+\left.\left.\! \varepsilon(\vtu((e_i)_{R_j})\!\vtu((e_i)_{R_j})^\top\right)\!\pQ(p^{(0)},\theta_0)\right)^l\!\! \dyn^{-1}\pB(p^{(0)},\theta_0)
	\end{align*}
	which, again due to the Neumann series argument, is differentiable for small $\varepsilon>0$, implying the differentiability of the map $\varepsilon \mapsto \sigma_j(\tf(s_0,p^{(0)}) - \ptf(s_0,p^{(0)};\theta_0+\delta \theta_i(\varepsilon)))$. Moreover,
	\begin{align*}
&	\tfrac{\text{d}}{\text{d}\varepsilon} \sigma_j(\tf(s_0,p^{(0)}) - \ptf(s_0,p^{(0)};\theta + \delta \theta_i(\varepsilon)) \Big|_{\varepsilon = 0}\\
	&\!=\! \Real(\uu^\H\pC(p^{(0)},\theta_0)\dyn^{-1} f_j^R(p^{(0)})\delta R\pQ(p^{(0)},\theta_0) \dyn^{-1}\pB(p^{(0)},\theta_0)\vv)
	\\&\!=\! \Real(\trace(\pQ(p^{(0)},\theta_0) \dyn^{-1}\!\pB(p^{(0)},\theta_0)\vv\uu^\H\pC(p^{(0)},\theta_0)\dyn^{-1} f_j^R(p^{(0)})\vtu((e_i)_{R_j})\vtu(\theta_0)^\top)) \\&+\!\Real(\trace(\pQ(p^{(0)}\!,\theta_0) \dyn^{-1}\pB(p^{(0)},\theta_0)\vv\uu^\H\pC(p^{(0)},\theta_0)\dyn^{-1} f_j^R(p^{(0)})\vtu(\theta_0) \vtu((e_i)_{R_j})^\top))
	\end{align*}
	Thus, using \cref{lem::trace}, we get
	\begin{align*}
	&\Real(\trace(\pQ(p^{(0)},\theta_0) \dyn^{-1}\pB(p^{(0)},\theta_0)\vv\uu^\H\pC(p^{(0)},\theta_0)\dyn^{-1} f_j^R(p^{(0)})\vtu((e_i)_{R_j})\vtu(\theta_0)^\top))\\
	&=\!\Real(\trace(\vtu(\theta_0)^\top\pQ(p^{(0)},\theta_0) \dyn^{-1}\pB(p^{(0)},\theta_0)\vv\uu^\H\pC(p^{(0)},\theta_0)\dyn^{-1} f_j^R(p^{(0)})\vtu((e_i)_{R_j}))\\
	&=\! \Real((e_i)_{R_i}^\top \utv ((Y_1f_j^R)^\top\vtu(\theta_0))
	\end{align*}
	with $Y_1 =\pQ(p^{(0)},\theta_0)\dyn^{-1}  \pB(p^{(0)},\theta_0) \vv \uu^\H \pC(p^{(0)},\theta_0) \dyn^{-1}$ and using that the trace does not change under transposition we obtain
	\begin{align*}
	&\Real(\trace(\pQ(p^{(0)},\theta_0) \dyn^{-1}\pB(p^{(0)},\theta_0)\vv\uu^\H\pC(p^{(0)},\theta_0)\dyn^{-1} f_j^R(p^{(0)})\vtu(\theta_0) \vtu((e_i)_{R_j})^\top))\\
	&= \!\Real( \trace(\left(\!\pQ(p^{(0)}\!,\theta_0) \dyn^{-1}\!\pB(p^{(0)}\!,\theta_0)\vv\uu^\H\pC(p^{(0)}\!,\theta_0)\dyn^{-1}\! f_j(p^{(0)})\!\vtu(\theta_0))\right)^\top \!\!\vtu((e_i)_{R_j})))\\
	&= \!\Real((e_i)_{R_i}^\top \utv(Y_1f_j^R\vtu(\theta_0))),
	\end{align*}
	which shows \eqref{eq:gradR}.
	The result for the skew symmetric part follows similarly \eqref{eq:gradJ}. %. considering the parameter perturbation  $\delta\theta_i(\epsilon) := \begin{bmatrix} 0_{n_\theta - \kappa(\cdot\dimr (\dimr+1)+\dimr(\dimr-1)/2)},\,\epsilon e_i^\T,\,0_{\kappa\cdot\dimr (\dimr+1)}\end{bmatrix}^\T~\in~\R^{n_\theta}$ with $e_i~\in~\R^{\kappa\cdot\dimr(\dimr-1)/2}$, computing
	%        \begin{align*}
	%          J_j(\theta_0 + \delta \theta_i(\varepsilon)) = J_j(\theta_0) +  \varepsilon\left(\vtsu((e_i)_{J_i}) - \vtsu((e_i)_{J_i})^\top\right)
	%        \end{align*}
	%        and
	%        \begin{align*}
	%&\ptf(p^{(0)},\theta_0 + \delta\theta_i(\varepsilon)) = \ptf(p^{(0)},\theta_0) \\&  + \pC(p^{(0)},\theta_0)\sum_{l=1}^{\infty}\left(\dyn^{-1}\varepsilon f_j(p^{(0)})\left(\vtsu((e_i)_{J_i}) - \vtsu((e_i)_{J_i})^\top\right)\pQ(p^{(0)},\theta_0)\right)^l \dyn^{-1}\pB(,p^{(0)},\theta_0).
	%        \end{align*}
	%        Hence, \eqref{eq:gradJ} follows by
	%        \begin{align*}
	%&\tfrac{\text{d}}{\text{d}\varepsilon} \sigma_j(\tf(s_0,p^{(0)}) - \ptf(s_0,p^{(0)};\theta + \delta \theta_i(\varepsilon)) \Big|_{\varepsilon = 0}
	%\\&= -\Real(\trace(\pQ(p^{(0)},\theta_0) \dyn^{-1}\pB(p^{(0)},\theta_0)\vv\uu^\H\pC(p^{(0)},\theta_0)\dyn^{-1} f_j(p^{(0)})\left(\vtsu((e_i)_{J_i}) - \vtsu((e_i)_{J_i})^\top\right)))\\
	%  &= \Real(\trace(\pQ(p^{(0)},\theta_0) \dyn^{-1}\pB(p^{(0)},\theta_0)\vv\uu^\H\pC(p^{(0)},\theta_0)\dyn^{-1} f_j(p^{(0)})\vtsu((e_i)_{J_i})^\top)))\\
	%  &\quad -\Real(\trace(\pQ(p^{(0)},\theta_0) \dyn^{-1}\pB(p^{(0)},\theta_0)\vv\uu^\H\pC(p^{(0)},\theta_0)\dyn^{-1} f_j(p^{(0)})\vtsu((e_i)_{J_i})))  \\
	%  &= \Real((e_i)_{J_i}^\top \left(\sutv(Y_1f_j) -\sutv((Y_1f_j)^\top)\right)).
	%        \end{align*}
	Last, we consider the gradient w.r.t.\ the parameterization of the self-adjoint matrix function $\pQ(p^{(0)},\theta)$. To this end, set $\delta\theta_i(\epsilon) := \begin{bmatrix} 0_{n_\theta-\kappa\dimr(\dimr+1)/2},\, \epsilon e_i^\T \end{bmatrix}^\T~\in~\R^{n_\theta}$, $e_i \in \R^{\kappa\dimr(\dimr+1)/2}$ where, analogously to \eqref{eq:Rperturbation}, we have
	\begin{align*}
	Q_j(\theta_0 + \delta \theta_i(\varepsilon))
	&= \vtu(\theta_0)\vtu(\theta_0)^\top  + \varepsilon(\vtu((e_i)_{Q_j})\vtu(\theta_0)^\top +\vtu(\theta_0) \vtu((e_i)_{Q_j})^\top)
	\\&\qquad \qquad \qquad + \varepsilon^2(\vtu((e_i)_{Q_j})\vtu((e_i)_{Q_j})^\top).
	\end{align*}
	Thus, setting $\delta Q = \vtu((e_i)_{Q_j})\vtu(\theta_0)^\top +\vtu(\theta_0) \vtu((e_i)_{Q_j})^\top$, we obtain
	\begin{align*}
	&(sI- (\pJ(p^{(0)},\theta_0)- \pR(p^{(0)},\theta_0))\pQ(p^{(0)},\theta_0+\delta\theta_i(\varepsilon))\\
	&= \!\dyn - (\pJ(p^{(0)},\theta_0) - \pR(p^{(0)},\theta_0))(f_j^Q(p^{(0)}) \varepsilon(\delta Q + \varepsilon (\vtu((e_i)_{Q_j})\vtu((e_i)_{Q_j})^\top))\\
	&= \!\dyn\!\left(\!I \!-\! \dyn^{-1}\!(\pJ(p^{(0)}\!,\theta_0)\!-\!\pR(p^{(0)}\!,\theta_0))(f_j^Q(p^{(0)}) \varepsilon(\delta Q \!+ \!\varepsilon (\vtu((e_i)_{Q_j})\vtu((e_i)_{Q_j})^\top))\right)
	\end{align*}
	and again, following a Neumann series argument, choosing $\varepsilon>0$ small enough, we have
	\begin{align*}
	&\left((sI-(\pJ(p^{(0)},\theta_0) - \pR(p^{(0)},\theta_0))\pQ(p^{(0)},\theta_0+\delta\theta_i(\varepsilon))\right)^{-1} 
	\\&\!= \!\sum_{l=0}^{\infty}\!\left(\!\dyn^{-1}\!(\pJ(p^{(0)}\!,\theta_0) \!- \! \pR(p^{(0)}\!,\theta_0))\varepsilon f_j^Q(p^{(0)})\!\left(\delta Q\! +\! \varepsilon(\vtu((e_i)_{Q_j}\!)\vtu((e_i)_{Q_j}\!)^\top \!)\right)\right)^l \!\! \dyn^{-1}\!.
	\end{align*}
	Thus,
	\begin{align*}
	\tfrac{\text{d}}{\text{d}\varepsilon}& \sigma_j(\tf(s_0,p^{(0)}) - \ptf(s_0,p^{(0)};\theta + \delta \theta_i(\varepsilon)) \Big|_{\varepsilon = 0}\\
	&= -\Real(\uu^\H\pC(p^{(0)},\theta_0)\dyn^{-1} (\pJ(p^{(0)},\theta) - \pR(p^{(0)},\theta_0))f_j^Q(p^{(0)})\delta Q \dyn^{-1}\pB(p^{(0)},\theta_0)\vv)\\
	&= -\Real( \trace(\dyn^{-1}\pB(p^{(0)},\theta_0)\vv\uu^\H\pC(p^{(0)},\theta_0)\dyn^{-1} (\pJ(p^{(0)},\theta) - \pR(p^{(0)},\theta_0))f_j^Q(p^{(0)}) \\&\qquad\qquad\qquad\qquad\qquad \qquad \left(\vtu((e_i)_{Q_j})\vtu(\theta_0)^\top+\vtu(\theta_0) \vtu((e_i)_{Q_j})^\top\right)))\\
	%&= \Real(\trace(\vtu(\theta_0)^\top\dyn^{-1}\pB(p^{(0)},\theta_0)\vv\uu^\H\pC(p^{(0)},\theta_0)\dyn^{-1} (\pJ(p^{(0)},\theta) - \pR(p^{(0)},\theta_0))f_j(p^{(0)}) \vtu((e_i)_{Q_j})))\\
	%&+\Real(\trace(\left(\dyn^{-1}\pB(p^{(0)},\theta_0)\vv\uu^\H\pC(p^{(0)},\theta_0)\dyn^{-1} (\pJ(p^{(0)},\theta) - \pR(p^{(0)},\theta_0))f_j(p^{(0)})\vtu(\theta_0)^\top\right)^\top \vtu((e_i)_{Q_j})))\\
	&= -\Real( (e_i)_{Q_j}^\top(\utv( (Y_2f_j^Q)^\top \vtu(\theta_0)+ Y_2f_j^Q\vtu(\theta_0))))
	\end{align*}
	with $Y_2 = \dyn^{-1}\pB(p^{(0)},\theta_0)\vv\uu^\H\pC(p^{(0)},\theta_0)\dyn^{-1} (\pJ(p^{(0)},\theta) - \pR(p^{(0)},\theta_0))$, which proves the formula~\eqref{eq:gradQ}. \\ $\phantom{1}\hfill \square$
\end{myproof}

\end{document}